\documentclass{IEEEtran}
\usepackage{amsmath,amssymb,amsthm}
\usepackage{bm}
\usepackage{graphicx}
\usepackage[utf8]{inputenc}
\usepackage{color}
\usepackage{comment}
\usepackage{bbm}
\usepackage{tikz}
\usetikzlibrary{arrows.meta, positioning, fit}
\usepackage{caption}
\usepackage{algorithm}
\usepackage{algorithmic}
\usepackage{dsfont}
\usepackage{hyperref}
\usepackage{todonotes}
\graphicspath{{fig}}
\usepackage{subcaption} 
\usepackage{paralist} 
\usepackage{ifthen}

\newif\ifdraftmode
\draftmodetrue

\ifdraftmode

    \newcommand{\pwc}[1]{\todo[color=yellow,size=\small,inline]{PW: #1}}
    \newcommand{\amc}[1]{\todo[color=orange,size=\small,inline]{AM: #1}}
    \newcommand{\dgc}[1]{\todo[color=purple!40,size=\small,inline]{DG: #1}}
\else

    \newcommand{\pwc}[1]{}
    \newcommand{\amc}[1]{}
    \newcommand{\dgc}[1]{}
\fi

\newcommand{\idx}{\iota}

\newtheorem{theorem}{Theorem}
\newtheorem{lemma}{Lemma}
\newtheorem{definition}{Definition}

\DeclareMathOperator*{\maximize}{maximize}
\DeclareMathOperator*{\minimize}{minimize}
\DeclareMathOperator*{\subjectto}{subject~to}

\begin{document}
\title{Bitcoin Mempool Linearization}
\author{Arman Mollakhani$^{1}$, Pieter Wuille$^{2}$, and Dongning Guo$^{1}$\\
$^{1}$Northwestern University, Evanston, IL \\
\{arman.mollakhani, dguo\}@northwestern.edu\\
$^{2}$Chaincode Labs, New York City, NY\\
pieter@wuille.net}
\date{}
\maketitle
\sloppy
\thispagestyle{plain}

\begin{abstract}
In the Bitcoin system, transactions arrive continuously at miners’ mempools and await inclusion in future blocks. Every non-coinbase transaction must spend one or more unspent outputs created by previous transactions, inducing dependency constraints among transactions in the mempool. At the same time, miners are economically incentivized to prioritize transactions with higher fee rates, measured as transaction fee per unit size. This paper formulates the \emph{mempool linearization problem}: given a set of transactions with associated fees, sizes, and dependency relationships, compute a dependency-respecting transaction ordering that maximizes fee-rate efficiency while supporting efficient updates as the mempool evolves dynamically. The problem is characterized through a partition of transactions into disjoint dependency-respecting subsets ordered by decreasing aggregate fee rate, together with an equivalent linear programming formulation. Motivated by structural properties of basic feasible solutions in the simplex method, a new algorithm called \emph{spanning forest linearization} (SFL) is developed. Operating directly on the transaction dependency graph, SFL iteratively merges and splits chunks of transactions to refine a global ordering, and is guaranteed to terminate at an optimal solution. Evaluation on both synthetic and real-world Bitcoin mempool data shows that SFL consistently computes optimal linearizations with substantially lower runtime than competing approaches, including a method based on the parametric preflow algorithm of Gallo, Grigoriadis, and Tarjan. These results indicate that SFL provides a practical and scalable framework for transaction prioritization by decentralized miners in large and rapidly evolving mempools. SFL has also been incorporated into the Bitcoin Core codebase for transaction cluster linearization.
\end{abstract}

\section{Introduction}
\label{s:introduction}

Efficient management of unconfirmed transactions is a fundamental challenge in blockchain systems such as Bitcoin. Newly created transactions are propagated through the peer-to-peer network and temporarily reside in each node's mempool while awaiting inclusion into future blocks. Because block space is limited and transaction fees provide economic incentives for prioritization, nodes must continuously determine which transactions are most profitable to relay, retain, evict, or include in blocks. As block construction and mempool management directly affect mining profitability, network performance, and security, efficient transaction ordering has become a fundamental algorithmic primitive in Bitcoin and related blockchain systems.

This task is complicated by transaction dependencies: every non-coinbase transaction must spend one or more unspent outputs created by previous transactions, and therefore cannot be included in a blockchain before all of its ancestors. In this paper, we formally define the \emph{mempool linearization problem}. We say a transaction in a node's mempool 
is valid if all of its ancestors are either in the same mempool or are already in the node's blockchain or ledger. Each transaction has an associated size (called \emph{weight} in 
Bitcoin)
and a fee paid to the miner of its host block. For any subset of transactions, the ratio between total fee and total size is referred to as its \emph{fee rate}.
The mempool linearization problem consists of partitioning the transactions in a mempool dependency graph into an ordered sequence of disjoint dependency-respecting subsets, or \emph{chunks}, arranged in descending order of aggregate fee rate. Dependency-respecting means that every transaction’s ancestors within the mempool must appear in earlier chunks.

An optimal linearization is important not only for high-revenue block construction, but more broadly for maintaining the efficiency and decentralization of the network itself. Nodes must continuously reason about transaction profitability throughout a transaction's lifecycle, including fee estimation before the transaction is fully constructed, relay prioritization during propagation, eviction decisions under %
mempool congestion, block template construction by miners, and mempool reconciliation after blockchain reorganizations. These decisions affect whether the public peer-to-peer network can reliably propagate transactions that miners are economically motivated to include.

This is directly tied to Bitcoin's long-term censorship resistance. If the public relay network cannot efficiently communicate %
profitable transactions, economic incentives favor the creation of private relay systems and privileged transaction distribution channels, as has occurred in other blockchain ecosystems~\cite{pahari2025exclusive}. Over time, reliance on private relay infrastructure can increase centralization by making it more difficult for new or smaller miners to compete without access to proprietary transaction flows. By contrast, a decentralized mining ecosystem requires that miners be able to obtain sufficiently accurate market information directly from the public peer-to-peer network. The mempool serves precisely this role: it is the distributed marketplace through which transaction demand, fee pressure, and profitability information are publicly disseminated. %
Efficient and accurate transaction prioritization is not merely a performance optimization problem, but also a core mechanism for preserving %
open miner participation and decentralized block production.

The present work focuses on the mempool linearization problem itself rather than the computationally intractable task of selecting an optimal subset of transactions under the strict block-size knapsack constraint. This distinction is primarily motivated by practical considerations. In realistic mempools, optimal chunks are typically much smaller than the overall block capacity, making the determination of the full chunk \emph{ordering} more important than fine-grained optimization for specific block boundaries. Moreover, profitability reasoning frequently occurs before mining time, when the eventual block boundary is unknown because future transaction arrivals cannot be predicted.

The quest for optimal mempool linearization can draw upon established paradigms in 
computer science %
and operations research. One 
approach %
reduces linearization %
to a sequence of maximum-ratio closure problems solvable %
using %
\emph{parametric preflow} techniques, such as the Gallo-Grigoriadis-Tarjan (GGT) framework~\cite{gallo1989fast}.
Another formulation casts the problem %
as an {integer linear program}, for which we demonstrate that an exact binary solution can be recovered from an %
optimum of linear programming (LP) with relaxations.
While these approaches provide useful theoretical foundations, practical demands of real-time mempool management impose strict runtime requirements because the mempool evolves continuously under highly dynamic and potentially adversarial conditions, necessitating solutions that are exceptionally fast. 

To address these challenges, this paper develops a new %
algorithmic framework called \emph{spanning forest linearization (SFL)}. 
Inspired by the structural properties of basic feasible solutions in the simplex method, SFL operates directly on the transaction dependency graph and iteratively refines a global ordering through local merge and split operations on chunks of transactions. The algorithm is guaranteed to terminate at an optimal linearization. In addition,
SFL naturally functions as %
an ``anytime'' algorithm, %
producing valid %
progressively improving %
linearizations through %
execution. This property is particularly important in adversarial settings, where computational budgets may be limited and attackers may attempt to construct pathological dependency structures.
Because SFL distributes computational effort across %
the transaction graph and can incorporate randomized refinement strategies, it avoids concentrating optimization effort on only a small portion of the mempool, thereby improving robustness against adversarial manipulation.
Extensive numerical evaluations on both synthetic and real-world Bitcoin mempool data demonstrate that SFL consistently computes optimal linearizations with substantially lower runtime than competing methods, including approaches based on the GGT parametric preflow algorithm. SFL has also been incorporated into the Bitcoin Core codebase%
\footnote{See \url{https://github.com/bitcoin/bitcoin/pull/32545} for the patch introducing it.} for transaction cluster linearization, demonstrating its practical suitability for deployment in production blockchain systems.

The remainder of this paper is organized as follows. Sec.~\ref{s:related-work} reviews related work.
Sec.~\ref{s:problem-definition} formally defines the mempool linearization problem.
Secs.~\ref{s:approach-lp} and~\ref{s:approach-spl} describe the LP and SFL approaches in detail, respectively. Sec.~\ref{s:compare-approaches} presents numerical results.
Sec.~\ref{s:conclusion} summarizes our findings and discusses %
future research.

\section{Related Work}
\label{s:related-work}

The problem of optimally ordering a set of transactions to maximize fee efficiency is mathematically equivalent to the classical single-machine non-preemptive scheduling problem with precedence constraints, denoted $(1 \mid \mathrm{prec} \mid \sum_j w_j C_j)$. The foundational version of this scheduling problem was first studied by Smith~\cite{Smith1956VariousOF}. In this equivalent formulation, transaction sizes correspond to job processing times, and transaction fees correspond to job weights (or penalties for delay). Finding the exact optimal sequence for this problem is fundamentally NP-hard%
~\cite{lawler1978sequencing}. To tackle this intractability, Sidney~\cite{sidney1975decomposition} introduced a foundational decomposition algorithm that partitions the jobs into an ordered sequence of subsets, ensuring that jobs in earlier subsets always precede those in later ones in any optimal schedule. While finding the optimal sub-ordering within each subset remains NP-hard, Sidney's decomposition provides a rigorous framework for simplifying the broader sequencing task. Potts~\cite{potts2009algorithm} proposed a branch and bound algorithm for this scheduling problem, deriving new lower bounds from a zero-one programming relaxation. Chudak and Hochbaum~\cite{chudak1999half} presented a simplified LP relaxation for this scheduling problem that can be solved via a minimum cut computation.

The classical formulation of maximum flow in networks was first developed by Ford and Fulkerson~\cite{Ford1962FlowsIN}, who introduced augmenting path algorithms and established the max-flow min-cut theorem. An important shift in maximum flow algorithms came with the introduction of the \emph{preflow} concept by Goldberg and Tarjan~\cite{GT88Preflow}, which allows temporary violations of flow conservation and guides excess flow using valid distance labels. This method laid the foundation for the development of efficient preflow-push algorithms. Cheriyan and Maheshwari~\cite{cheriyan1988preflow} provided a detailed analysis of the preflow-push paradigm, introducing optimizations based on vertex selection strategies that improve practical performance. Picard and Queyranne~\cite{Picard01111982} explored applications of minimum cuts in combinatorial optimization, including closure and separation problems, and demonstrated their relevance in dynamic settings. These network flow and cut concepts are fundamental to one of the approaches we explore. 

Gallo, Grigoriadis, and Tarjan~\cite{gallo1989fast} introduced a fast algorithm for solving \emph{parametric maximum flow problems}. 
Their technique solves a sequence of related minimum cut instances by incrementally updating the flow and label states, leveraging the structural monotonicity across parameter values. This strategy is particularly effective for problems where a parameter (e.g., a fee rate) evolves incrementally, as in our setting. Notably, the breakpoints identified by the GGT algorithm in our problem setting correspond to the Sidney decomposition~\cite{sidney1975decomposition} when the problem is viewed as a single-machine scheduling instance, and thus to what we call chunks in the mempool linearization.
A recent line of work~\cite{beinesmincut2025} introduces a simpler algorithm for monotone parametric minimum cut that identifies breakpoints in order without relying on dynamic trees 
or advanced flow computations. This method provides a conceptually cleaner alternative to classical parametric flow approaches and has been shown to perform well in practice for sparse graphs.

The maximum-ratio closure problem, central to the GGT-based approach, is a
specific case of \emph{fractional combinatorial optimization}, a well-studied class of problems involving the maximization of ratio-type objectives. Foundational ideas in this space were introduced by Isbell and Marlow~\cite{isbell1956attrition}, who studied ratio-based objectives in discrete optimization. Building on this direction, Pardalos and Phillips~\cite{pardalos1991fractional} developed a globally convergent algorithm for fractional programming problems, establishing theoretical bounds and exploring both linear and quadratic cases. The reduction of the maximum-ratio closure problem to a sequence of min-cut computations, as employed in the GGT approach, is inspired by these prior works.

Another relevant line of work addresses the maximum weight closure problem, which forms the basis for solving the more general maximum-ratio closure problem. A key algorithm in this area was introduced by Hochbaum~\cite{hochbaum2001new}, inspired by the seminal Lerchs-Grossmann (LG) algorithm originally developed for the open-pit mining problem~\cite{lerchs1965optimum}. Hochbaum provides a detailed analysis and presents powerful new variants of the LG algorithm, which solves the maximum closure problem directly, without explicitly computing or maintaining a feasible flow. Instead, the algorithm operates on a ``normalized tree'' structure, iteratively merging and re-normalizing branches that represent sets of nodes.

Beyond specialized flow and fractional programming techniques, LP %
offers a general and powerful framework for solving optimization problems. The seminal simplex algorithm%
~\cite{nocedal2006numerical}, though potentially exponential in its worst-case~\cite{klee1972simplex}, is highly efficient for a wide range of practical problems. 
For many combinatorial optimization problems, LP relaxations can be exact, meaning their optimal solutions are integral and directly solve the original discrete problem. The LP approach leverages this by formulating the mempool linearization problem as an LP and proving that an optimal binary solution can be recovered, thus allowing the use of standard LP solvers.

These diverse optimization paradigms---parametric network flows, fractional programming, and %
LP---provide the theoretical underpinnings for the algorithms we develop and analyze for the mempool linearization problem. Each offers distinct advantages and insights into structuring and solving this complex task.

\section{Problem Formulation} %
\label{s:problem-definition}

\subsection{Mempool Linearization Problem}
\label{ss:mempool-linearization-problem}

If transaction \(c\) spends an output of transaction \(p\), then transaction \(c\) is said to depend on transaction \(p\), or, equivalently, transaction $p$ is a parent of transaction $c$. Since transactions with missing parents cannot be included in a block, we assume no transaction misses any parents without loss of generality. A transaction may have multiple children and/or multiple parents. Since transactions cryptographically commit to their parents, we can assume that valid transaction dependencies cannot form cycles.

Consider a node's mempool consisting of $n$ distinct transactions, indexed by $1,\dots,n$. We disregard dependencies on other transactions already included in the node's current longest chain of blocks, as those dependencies are already respected. We further assume that if multiple transactions spend the same output, at most one of them resides in the mempool. Consequently, when assembling blocks using transactions in the mempool, it is both sufficient and necessary to respect only the dependencies within the mempool.

Let $V=\{1,\dots,n\}$. These dependencies are represented by a DAG \(%
(V, E)\).  %
Throughout this paper, we use the terms \emph{vertex} and \emph{transaction} interchangeably. There are two equivalent but oppositely oriented ways to draw edges in this DAG:
\begin{itemize}
    \item {\em Parent-to-child representation:} An edge \((i, j) \in E\) means transaction \(i\) is a parent of transaction \(j\).
    
    \item {\em Child-to-parent representation:} An edge \((i, j) \in E\) means transaction \(j\) is a parent of transaction \(i\).
\end{itemize}
Both conventions appear in the literature for historical reasons. In the remainder of this paper, we explicitly state which orientation is used in each algorithmic context.

\begin{definition}[mempool]
    A mempool consists of a collection of $n$ transactions whose dependencies are represented by a DAG $(V,E)$, where each transaction $i\in V=\{1,\dots,n\}$ is associated with a fee $f_i \geq 0$ 
    and a size $s_i>0$. We denote the mempool as the tuple $(V,E,f,s)$ where $f = (f_1, \dots, f_n)$ and $s = (s_1, \dots, s_n)$.
\end{definition}

The total fee and size of a subset of transactions \( U \subseteq V \) are denoted as $f(U)=\sum_{i \in U} f_i$ and $s(U)=\sum_{i \in U} s_i$, respectively. Further, if $U$ is non-empty, its {\em fee rate} is %
\begin{align} \label{eq:freerate}
    r(U) = \frac{f(U)}{s(U)}.
\end{align}
As a convention, we set $r(\emptyset)=0$. For two %
sets of transactions $A$ and $B$, we write $A \succeq B$ if %
$r(A) \ge r(B)$, and $A \preceq B$ if %
$r(A) \le r(B)$. The strict versions $A \succ B$ and $A \prec B$ are defined analogously. The notion of fee rate extends to graphs: Given a graph $G'=(V',E')$, we let $r(G')=r(V')$.

\begin{definition}[closure]
\label{def:C}
    A non-empty subset $U \subseteq V$ is called a \emph{closure} in graph $G=(V,E)$ if, for every $i \in U$, all parents of $i$ are also contained in $U$. 
\end{definition}

The fundamental problem facing a node is to determine the order in which transactions should be included in future blocks. We formalize this as follows.

\begin{definition}[valid ordering]
    A \emph{valid ordering} of a set of transactions $U \subseteq V$ with respect to DAG $(V,E)$
    is a bijection $\sigma: U \to \{1,\dots,|U|\}$ such that for every edge $(i,j)\in E$ with $i,j \in U$ (in the parent-to-child representation), the parent appears before the child: $\sigma(i) < \sigma(j)$. A valid ordering of %
    mempool $(V,E,f,s)$ is a valid ordering of $V$ with respect to $(V,E)$.
\end{definition}

Given a valid ordering $\sigma$ of mempool $(V,E,f,s)$, we define the \emph{completion size} of transaction $i$ as
\begin{align}
\label{eq:completion-size}
    C_i(\sigma) = \sum_{\substack{j \in V:\, \sigma(j) \le \sigma(i)}} s_j,
\end{align}
which is the cumulative size of all transactions occupying positions $1$ through $\sigma(i)$ in the ordering.

Consider assembling a block according to the order $\sigma$, stopping when including the next transaction would exceed the block size limit $W > 0$.
Transaction $i$ is included in the block if and only if $C_i(\sigma) \le W$, so the total fee income is
\begin{align}
    \Pi(\sigma, W) = \sum_{i \in V} f_i \, \mathds{1}_{\{C_i(\sigma) \le W\}}
    \label{eq:income-indicator}
\end{align}
where %
$\mathds{1}_{\{C\}}$ denotes the indicator\footnote{If $C$ holds, $\mathds{1}_{\{C\}}=1$; if $C$ does not hold, $\mathds{1}_{\{C\}}=0$.} of condition $C$. To evaluate an ordering independently of the actual block size, suppose $W$ is modeled as a uniform random variable on the interval $[0,\, s(V)]$. Let us define the \emph{weighted completion size} %
as
\begin{align}
\label{eq:wsct}
    \Gamma(\sigma) = \sum_{i \in V} f_i \, C_i(\sigma) .
\end{align}
The expected fee income can be expressed as
\begin{align}
    \mathbb{E}[\Pi(\sigma, W)]
    &= \sum_{i \in V} f_i \, \Pr\!\big[C_i(\sigma) \le W\big] \\
    &= \sum_{i \in V} f_i \left( 1 - \frac{C_i(\sigma)}{s(V)} \right) 
    \label{eq:expected-income} \\
    &= f(V) - \frac{1}{s(V)} \Gamma(\sigma) .
    \label{eq:expected-income-expanded}
\end{align}

Minimizing $\Gamma(\sigma)$, which is equivalent to maximizing the expected fee income, is precisely the single-machine non-preemptive scheduling problem with precedence constraints, %
where transaction sizes correspond to job processing times and transaction fees correspond to job weights. For general precedence constraints, this problem is NP-hard~\cite{lawler1978sequencing}.
Sidney~\cite{sidney1975decomposition} established that in any instance of this problem, 
there exists an optimal valid ordering and a sequence of sets $\emptyset=B_0 \subset B_1 \subset \dots \subset B_m=V$ %
such that:
\begin{compactitem}
    \item $B_1,\dots,B_m$ are closures in $G$;
    \item defining $U_k = B_k \setminus B_{k-1}$, the sets are ordered by %
    fee rates: $U_1 \succeq U_2 \succeq \dots \succeq U_m$;
    and
    \item %
    in the valid ordering, all transactions in $U_k$ appear before all transactions in $U_{k+1}$, for every $k$.
\end{compactitem}
This decomposition reduces the global ordering problem into two subproblems: (i) identifying the sets and their ordering, and (ii) ordering transactions optimally within each set. The second subproblem remains NP-hard in general, since each set with its internal dependencies is itself an instance of a single-machine non-preemptive scheduling problem. However, in the Bitcoin mempool setting, where the transaction dependency graph typically consists of many small independent clusters, the inter-group ordering captures the dominant structure of the problem. We therefore focus on formalizing and solving this decomposition, which we refer to as the \emph{mempool linearization problem}.

\begin{definition}[partition]
\label{def:partition}
    A \emph{partition} of set $V$ is a sequence of disjoint, non-empty subsets $(U_1, \dots, U_m)$ of $V$ such that $U_1 \cup \dots \cup U_m = V$.
\end{definition}

\begin{definition}[mempool linearization and optimality]
\label{def:L}
    A linearization of mempool $(V,E,f,s)$ is a partition of $V$ into $(U_1, \dots, U_m)$, %
    such that:
    \begin{compactitem}
        \item $U_1 \succeq U_2 \succeq \dots \succeq U_m$,
        and
        \item for every $k \in \{1, \dots, m\}$, $U_k$ is a closure in the subgraph of $G$ induced by %
        $V \setminus B_{k-1}$, where $B_{k-1}=U_1\cup\dots\cup U_{k-1}$.
    \end{compactitem}
    Let $\mathcal{C}_k$ denote the entire set of closures in the subgraph of $(V,E)$ induced by $V\setminus B_{k-1}$. The linearization is {\em optimal} if, for every $k$, %
    $U_k$ achieves the maximum possible fee rate among all closures in $\mathcal{C}_k$.
    An optimal linearization is minimal if, for every $k$, $U_k$ is a closure of minimum size $s(U_k)$ among those achieving the maximum fee rate in $\mathcal{C}_k$. We extend the notion of linearization to allow graphs (and trees in particular): We say graphs $\big((U_1,E_1)$, \dots, $(U_m,E_m)\big)$ form a linearization of $(V,E,f,s)$ if $(U_1,\dots,U_m)$ is a linearization.
\end{definition}

An optimal linearization can be computed by iteratively selecting a maximum-ratio closure from the remaining subgraph until no transactions remain. Specifically, at step $k$, we solve
\begin{align}
\label{eq:max-feerate}
    \maximize_{U \in \mathcal{C}_k %
    } \; r(U) .
\end{align}
The ordered sequence of closures produced by this greedy procedure coincides with the Sidney decomposition~\cite{sidney1975decomposition} of the corresponding single-machine scheduling instance.

A minimal optimal linearization is obtained by selecting in~\eqref{eq:max-feerate}, at each step $k$, a closure of smallest size $s(U)$ among those attaining the maximum fee rate.

\begin{definition}[valid ordering induced by a linearization]
\label{def:induced-ordering}
    Let $(U_1, \dots, U_m)$ be a linearization of mempool $(V,E,f,s)$, and let $\sigma_k$ be a valid ordering of transactions in $U_k$ with respect to the subgraph induced by $U_k$ for each $k \in \{1,\dots,m\}$. The \emph{induced valid ordering} %
    is defined by
    \begin{align}
        \sigma(i) = \sigma_k(i) + \sum_{j=1}^{k-1} |U_j|, \quad \forall %
        i \in U_k.
    \end{align}
    This extends to graph linearizations via Definition~\ref{def:L}.
\end{definition}

A valid ordering $\sigma$ can also provide a trivial linearization consisting of singletons: $(\{\sigma(1)\},\dots,\{\sigma(n)\})$.

Throughout this paper, we use the shorthand $U^m = (U_1, \dots, U_m)$ for a sequence of sets. Accordingly, $f(U^k) = \sum_{i=1}^k f(U_i)$ and $s(U^k) = \sum_{i=1}^k s(U_i)$ for $k=1, \dots, m$, with the convention that $f(U^0) = s(U^0) = 0$.

To provide a quantitative metric for comparing linearizations, we introduce the geometric representation of the partitioned transaction sequence.

\begin{definition}[cumulative fee-size diagram]
    Given a partition of transactions (which may or may not be a linearization), denoted as $U^m = (U_1, \dots, U_m)$, let $C_{U^m}$ be the continuous piecewise-linear function obtained by linearly connecting the sequence of points 
    $(s(U^k), f(U^k) )$, $k=0,1,\dots,m$. The \emph{cumulative fee-size diagram} of $U^m$ is the diagram formed by the graph of function $C_{U^m}$ over its domain $[0, s(U^m) %
    ]$.
\end{definition}

\begin{definition}[area under the curve (AUC)]
    Given a partition of transactions, denoted as $U^m = (U_1, \dots, U_m)$, the area under the curve is defined as the definite integral of its cumulative fee-size diagram. It can be computed as 
    \begin{align}
        A(U^m) 
        &= \int_0^{s(U^m)} C_{U^m}(x) \, dx \\
        &= \
        \sum_{k=1}^m s(U_k) \left( f(U^{k-1}) + \frac{1}{2} f(U_k) \right).      \label{eq:AUmsum}
    \end{align}
\end{definition}

To establish the relationship between the AUC and optimal linearizations, we first establish several properties.

\begin{lemma} %
\label{lm:closures_properties}
    The non-empty intersection and union of two closures %
    are also closures. %
\end{lemma}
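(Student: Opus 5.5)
The argument is a direct element-chasing check against Definition~\ref{def:C}. Fix two closures $A, B \subseteq V$ in $G=(V,E)$. Recall that a closure is a non-empty set that contains all parents of each of its members. The non-emptiness requirement is already built into the statement for the intersection. It is automatic for the union, since $A \cup B \supseteq A \neq \emptyset$. So in each case it remains only to verify the parent-containment property.

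\emph{Union.} Take any $i \in A \cup B$ and any parent $p$ of $i$. Suppose first that $i \in A$. Because $A$ is a closure, $p \in A \subseteq A \cup B$. If instead $i \in B$, the same reasoning gives $p \in B \subseteq A \cup B$. Hence every parent of every element of $A \cup B$ lies in $A \cup B$, and so $A \cup B$ is a closure.

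\emph{Intersection.} Assume $A \cap B \neq \emptyset$. Take any $i \in A \cap B$ and any parent $p$ of $i$. Since $i \in A$ and $A$ is a closure, we have $p \in A$. Since $i \in B$ and $B$ is a closure, we also have $p \in B$. Therefore $p \in A \cap B$, and $A \cap B$ is a closure.

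No step here is a real obstacle. The only care needed is the non-emptiness clause in Definition~\ref{def:C}, which is why the lemma restricts to non-empty intersections. It is also worth noting that the argument never uses acyclicity or the edge orientation beyond the notion of ``parent.'' Consequently it applies verbatim to closures in any induced subgraph, such as the subgraph induced by $V \setminus B_{k-1}$ in Definition~\ref{def:L}, which is presumably how the lemma will be used later.
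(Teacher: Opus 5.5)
Your proof is correct and follows the paper's own argument: for both the intersection and the union, you take a single edge out of an element and use the closure property of each of the two sets. If anything, your write-up is the cleaner one. The paper phrases the key step in terms of a ``child'' $j$ of $i$, whereas you follow Definition~\ref{def:C} and use parents, and you also state explicitly that the union is non-empty, which the paper leaves implicit.
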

\begin{proof}
    Let \( U_1 \) and \( U_2 \) be two closures. %

    To prove that %
    $U_1 \cap U_2 \neq \emptyset$ is a closure, consider an arbitrary node %
    \( i \) in %
    both \( U_1 \) and \( U_2 \) and a child of $i$, denoted as $j$. Since \( U_1 \) is a closure, \( j \in U_1 \). Since \( U_2 \) is a closure, \( j \in U_2 \). Thus \( j \in U_1\cap U_2 %
    \).

    To prove that %
    $U_1 \cup U_2$ is a closure, consider an arbitrary node $i$ and its child $j$. %
    If \( i \in U_1 \), then \( j \in U_1 \). If \( i \in U_2 \), then \( j \in U_2 \). In either case, \( j \in U_1 \cup U_2%
    \).
\end{proof}

\begin{lemma}[prefix closure property]
\label{lem:prefix-closure}
    Let $%
    (U_1, \dots, U_m)$ be a linearization of $(V,E,f,s)$. For every $k \in \{1, \dots, m\}$, the set $U_1\cup\dots\cup U_k$ is a closure in $(V,E)$.
\end{lemma}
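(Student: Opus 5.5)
We prove the statement by induction on $k$, writing $B_k = U_1\cup\dots\cup U_k$ with $B_0=\emptyset$. Throughout we use Definition~\ref{def:C} directly: a set is a closure if every parent of each of its members also belongs to it.

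\emph{Base case.} By Definition~\ref{def:L}, $U_1$ is a closure in the subgraph induced by $V\setminus B_0 = V$. So $B_1=U_1$ is a closure in $(V,E)$.

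\emph{Inductive step.} Assume $B_{k-1}$ is a closure in $(V,E)$, and let $i\in B_k$ with a parent $p$. We must show $p\in B_k$, and we split into two cases according to where $i$ lies.
\begin{itemize}
\item If $i\in B_{k-1}$, then $p\in B_{k-1}\subseteq B_k$ by the induction hypothesis.
\item Otherwise $i\in U_k$. If $p\in B_{k-1}$ we are done. If not, then $p\in V\setminus B_{k-1}$, so the edge $(p,i)$ survives in the subgraph induced by $V\setminus B_{k-1}$. Since $U_k$ is a closure in that induced subgraph, this gives $p\in U_k\subseteq B_k$.
\end{itemize}
$B_k$ is non-empty because the parts of a partition are non-empty, so $B_k$ is a closure and the induction closes.

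The only point needing care is the last subcase. Closure of $U_k$ in the induced subgraph only constrains parents that lie outside $B_{k-1}$. The case split on whether $p\in B_{k-1}$ handles exactly this, so the argument is routine.

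Note that the proof of Lemma~\ref{lm:closures_properties} phrases the closure property via children. We follow the parent-based Definition~\ref{def:C} instead; the argument is symmetric under the orientation convention. We do not need Lemma~\ref{lm:closures_properties} itself, since $U_k$ is not a closure of the whole graph.
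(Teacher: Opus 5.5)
Your proof is correct and uses the same key observation as the paper: closure of $U_j$ in the subgraph induced by $V\setminus B_{j-1}$ forces every parent of a vertex in $U_j$ to lie in $U_j\cup B_{j-1}=B_j$. The paper applies this directly to the index $j\le k$ with $v\in U_j$ and gets $B_j\subseteq B_k$, so your induction on $k$ repackages the same one-step argument and is not strictly needed.
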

\begin{proof}
    Let $B_k=U_1\cup\dots\cup U_k$ for every $k$. Let $v$ be an arbitrary vertex in $B_k$. Evidently, $v\in U_j$ for some $j\in\{1,\dots,k\}$. Since
    $U_j$ is a closure in the subgraph induced by $V \setminus B_{j-1}$, %
    every parent of $v$ must either be in $U_j$ or in $B_{j-1}$. %
    Hence all $v$'s parents are in $U_j\cup B_{j-1}=B_j\subseteq B_k$, which implies that $B_k$ is a closure in $(V,E)$.
\end{proof}

\begin{lemma}[closure difference property]
\label{lem:closure-difference}
    Let $U$ and $W$ be two closures in $%
    (V,E)$ such that $W \setminus U \neq \emptyset$. Then $W \setminus U$ is a closure in the subgraph induced by $V \setminus U$.
\end{lemma}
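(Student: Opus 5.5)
The plan is to verify Definition~\ref{def:C} directly in the subgraph induced by $V\setminus U$. Set $D = W\setminus U$. By hypothesis $D$ is non-empty, and clearly $D\subseteq V\setminus U$, so $D$ is a candidate closure in that induced subgraph. It remains to show that for every $i\in D$, every parent of $i$ that lies in $V\setminus U$ also lies in $D$. In the induced subgraph, the parents of $i$ are exactly its parents in $(V,E)$ that are not in $U$, since edges to vertices of $U$ are deleted.

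So fix $i\in D$ and let $p$ be a parent of $i$ with $p\in V\setminus U$. Since $i\in W$ and $W$ is a closure in $(V,E)$, we get $p\in W$. Together with $p\notin U$ this gives $p\in W\setminus U = D$. Hence $D$ is closed under taking parents within the induced subgraph, and is a closure there.

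There is no real obstacle. The only point needing care is to state the induced-subgraph convention explicitly, so that parents lying in $U$ are correctly excluded. Note also that the hypothesis that $U$ is a closure is not actually needed; only the closure property of $W$ is used. Finally, I would keep the parent-based phrasing of Definition~\ref{def:C}, which involves no edge orientation, rather than the child-based phrasing used in the proof of Lemma~\ref{lm:closures_properties}.
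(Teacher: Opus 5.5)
Your proof is correct and is essentially the paper's own argument: fix a vertex of $W\setminus U$ and a parent of it outside $U$, use the closure property of $W$ to put that parent in $W$, and conclude it lies in $W\setminus U$. Your remark that the closure property of $U$ is never used is also accurate; the paper's proof does not use it either.
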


\begin{proof}
    Let $v$ be an arbitrary vertex in $W \setminus U$ and $u$ be an arbitrary parent of $v$ in $V \setminus U$. Since $W$ is a closure, %
    $u$ must be in $W$. Therefore, $u \in W \cap (V \setminus U) = W \setminus U$.
\end{proof}

\begin{theorem}[AUC as an optimality metric]
\label{thm:auc-optimality}
    Let $L^* = (U^*_1, \dots, U^*_p)$ be an optimal linearization of %
    mempool $(V,E,f,s)$.
    Let $L = (U_1, \dots, U_q)$ be any other valid linearization of the same mempool. %
    Then the cumulative fee-size diagram of $L^*$ dominates that of $L$ everywhere; that is, $C_{L^*}(x) \ge C_L(x)$ for all $x \in [0, s(V)]$, which implies the AUC dominance:
    $A(L^*) \ge A(L)$. Consequently, $L^*$ maximizes the AUC among all linearizations of the mempool.
\end{theorem}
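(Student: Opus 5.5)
The plan is to show that $C_{L^*}$ is the least concave majorant, over the relevant points, of every closure's (size, fee) pair, and that every point on $C_L$ is a convex combination of such pairs. It suffices to prove $C_{L^*}(x)\ge C_L(x)$ only at the breakpoints $x=s(U^k)$ of $L$, $k=0,\dots,q$. The reason is that $C_L$ is linear between its breakpoints, while $C_{L^*}$ is concave: its slopes $r(U^*_1)\ge\dots\ge r(U^*_p)$ are nonincreasing because $L^*$ is a linearization. A concave function lying above a piecewise-linear function at that function's breakpoints lies above it everywhere in between. Both functions equal $0$ at $x=0$ and equal $f(V)$ at $x=s(V)$. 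AUC dominance then follows by integrating, and the final claim is immediate.

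By Lemma~\ref{lem:prefix-closure}, each prefix $B_k=U_1\cup\dots\cup U_k$ of $L$ is a closure in $(V,E)$. So the key step is the following claim.

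\emph{Claim.} For every closure $W$ in $(V,E)$, $f(W)\le C_{L^*}(s(W))$.

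I would prove the claim by decomposing $W$ along the optimal chunks. Set $B^*_j=U^*_1\cup\dots\cup U^*_j$ and $W_j=W\cap U^*_j$. Let $W'_j=W\cap(V\setminus B^*_{j-1})$; this is $W\setminus B^*_{j-1}$, which by Lemma~\ref{lem:closure-difference} (using Lemma~\ref{lem:prefix-closure} for $B^*_{j-1}$) is a closure in the graph induced by $V\setminus B^*_{j-1}$, whenever it is nonempty. Optimality of $L^*$ then gives $r(W'_j)\le r(U^*_j)$.

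The clean route is an exchange or induction argument on $j$. I would show by induction on $p$ that $f(W)\le C_{L^*}(s(W))$ for every closure $W$ of any mempool with optimal linearization $L^*$.

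\begin{itemize}
\item If $s(W)\le s(U^*_1)$, use $r(W)\le r(U^*_1)$, which holds because $W\in\mathcal C_1$. This gives $f(W)\le r(U^*_1)s(W)=C_{L^*}(s(W))$.
\item Otherwise, consider $W\cup U^*_1$, which is a closure by Lemma~\ref{lm:closures_properties}, and $W\cap U^*_1$. We need $f(W\cap U^*_1)\le r(U^*_1)\,s(W\cap U^*_1)$. This holds because $U^*_1\setminus (W\cap U^*_1)$ is a closure in the remaining graph of $U^*_1$'s complement ordering, and its rate is $\ge r(U^*_1)$ by a standard argument. More simply, writing $W\cup U^*_1=U^*_1\sqcup (W\setminus U^*_1)$ gives $f(W)=f(W\cap U^*_1)+f(W\setminus U^*_1)$.
\item Bound $f(W\setminus U^*_1)$ by the induction hypothesis applied to the mempool $V\setminus U^*_1$, whose optimal linearization is $(U^*_2,\dots,U^*_p)$. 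Its diagram is $C_{L^*}$ shifted by $(s(U^*_1),f(U^*_1))$.
\item Bound $f(W\cap U^*_1)\le f(U^*_1)-(\text{slope } r(U^*_1))\cdot s(U^*_1\setminus W)$. This is the step I expect to be the main obstacle, since it requires $r(U^*_1\setminus W)\ge r(U^*_1)$, i.e.\ that removing a closure-intersection from a max-rate closure does not raise the rate of the remainder. I would try to derive it from the maximality of $r(U^*_1)$ applied to $W\cap U^*_1$, which is a closure if nonempty: since $r(W\cap U^*_1)\le r(U^*_1)$, the complement within $U^*_1$ has rate $\ge r(U^*_1)$ by the mediant inequality.
\end{itemize}

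Combining these, with $a=s(U^*_1\setminus W)$ and $b=s(W\setminus U^*_1)$, we get
\[
f(W)\le f(U^*_1)-r(U^*_1)\,a+\bigl(C_{L^*}(s(U^*_1)+b)-f(U^*_1)\bigr).
\]
Concavity of $C_{L^*}$, whose slope is at most $r(U^*_1)$ beyond $s(U^*_1)$, bounds this by $C_{L^*}(s(U^*_1)+b-a)=C_{L^*}(s(W))$. This proves the claim. Applying it to $W=B_k$ yields the dominance at all breakpoints of $L$, which completes the proof.
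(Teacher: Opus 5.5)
Your proposal is correct and is essentially the paper's proof: both reduce to the breakpoints of $C_L$ via concavity of $C_{L^*}$ and Lemma~\ref{lem:prefix-closure}, and both prove $f(W)\le C_{L^*}(s(W))$ for every closure $W$ from the per-chunk bound $r(W\cap U^*_j)\le r(U^*_j)$ (Lemmas~\ref{lm:closures_properties} and~\ref{lem:closure-difference} plus optimality). The only difference is that you peel off $U^*_1$ and induct on $p$, so the paper's fractional-knapsack step becomes a slope bound applied at each level of the induction.

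Two small cleanups. First, the step you flag as the main obstacle is immediate, because $W\cap U^*_1$ is a closure (as your last bullet says); drop the garbled remark about $U^*_1\setminus(W\cap U^*_1)$ being a closure, and drop the unused $W\cup U^*_1$. Second, in the final concavity step, use that the slope of $C_{L^*}$ is at most $r(U^*_1)$ on all of $[0,s(V)]$, not just beyond $s(U^*_1)$, since the interval $[s(U^*_1)+b-a,\,s(U^*_1)+b]$ can start before $s(U^*_1)$ when $a>b$.
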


Theorem~\ref{thm:auc-optimality} is proved in Appendix~\ref{a:auc-optimality}.

\subsection{Desirable Algorithmic Properties}
\label{ss:desirable-properties}

An effective algorithm for mempool linearization should exhibit several key properties to %
perform well in a real-world, dynamic, and potentially adversarial environment.

First, it should function as an \emph{anytime algorithm}, %
producing a valid and progressively improving %
linearization throughout %
its execution. %
In practice, the algorithm is typically subject to %
strict time limits and may be %
terminated before reaching %
the global optimum. In such settings, the rate of quality improvement per unit of time is more important than the total time required to find the absolute optimum.

Second, the algorithm should exhibit %
granularity. If multiple closures achieve the same maximum fee rate, a \emph{smallest} closure is more desirable.
Smaller closures reduce the approximation %
gap between simple greedy %
ordering %
and the exact %
NP-hard knapsack-like block-packing problem, %
enabling more precise utilization of each block's capacity.
Hence we shall seek a \emph{minimal} optimal linearization.

Third, the solution must demonstrate robustness against mempool manipulations. Specifically, %
an attacker could %
construct complex transaction dependency graphs designed to exploit specific algorithmic weaknesses. 
For example, by attaching %
adversarial transactions with specific fees, sizes, and dependencies to a cluster of honest transactions,
the algorithm may be tricked into spending %
all available time attempting to improve the ordering of some selected %
transactions, leaving simple improvements to many other honest transactions %
unperformed when the time budget expires. To mitigate this, the algorithm should incorporate randomization and distribute computational effort fairly across all parts of the transaction graph, so that when execution is terminated early, all regions of the graph have received some optimization attention.

\subsection{Overview of %
Solutions}
\label{s:solutions-overview}

We investigate three distinct algorithmic paradigms for solving the mempool linearization problem, each offering a unique perspective on balancing optimality, performance, and practical utility.

We first describe an equivalent LP formulation for the maximum-ratio closure problem~\eqref{eq:max-feerate},
so that the full suite of LP solvers, including the simplex method and interior-point methods, can be applied to solve the mempool linearization problem exactly.

Our primary algorithmic innovation in this paper is the development of SFL, which is inspired by the structural properties of basic feasible solutions in the simplex method. SFL operates directly on the transaction graph, iteratively merging and splitting groups of transactions to refine a global linearization. SFL is explicitly designed to satisfy all the desirable algorithmic properties outlined in Sec.~\ref{ss:desirable-properties}.
Our empirical results show that SFL is exceptionally fast in practice, making it a strong candidate for real-time deployment.

To provide a theoretical foundation and benchmark for comparison, we also consider solving the maximum-ratio closure problem using the GGT parametric maximum flow framework. Since GGT's parametric breakpoint algorithm can be straightforwardly applied to our problem setting, we describe it in Appendix~\ref{s:approach-ggt} and focus on benchmarking its performance against SFL.

These approaches are compared empirically in Sec.~\ref{s:compare-approaches}, where we benchmark their performance on both synthetic and real-world mempool data.

\section{Linearization via LP} %
\label{s:approach-lp}

In this section, we present a solution to mempool linearization based on LP. We first formulate the search for a maximum-ratio closure~\ref{eq:max-feerate} in each step as a binary program and then show that it can be reduced to an equivalent linear program. Throughout this section, we consider the graph $G=(V,E)$ with the parent-to-child representation.

\subsection{Reduction to LP}
\label{ss:reduction-to-lp}

Without loss of generality, we consider the search for the first closure \( U_1 \subseteq V \) that maximizes the fee rate. Let $z_i\in\{0,1\}$ indicate whether transaction $i$ is included in $U_1$. Problem~\ref{eq:max-feerate} can be reformulated as a binary program (BP):
\begin{subequations}\label{eq:bp}
\begin{align}
r_{\text{BP}}^{*} = \maximize_{z_1, \dots, z_n} \quad & \frac{\sum_{i \in V} f_i z_i}{\sum_{i \in V} s_i z_i} \label{eq:bp:a} \\
\subjectto \quad & z_i \in \{0, 1\}, \quad \forall i \in V \label{eq:bp:b} \\
& \sum_{i \in V} z_i > 0 \label{eq:bp:c} \\
& z_i \ge z_j, \quad \forall (i,j)\in E . \label{eq:bp:d}
\end{align}
\end{subequations}
We require $z_i\ge z_j$ for every parent-child pair $(i,j)$, ensuring that no transaction is included without its parents.

Consider the following fractional program (FP):
\begin{subequations}\label{eq:fp}
\begin{align}
r_{\text{FP}}^{*} = \maximize_{y_1, \dots, y_n} \quad & \frac{\sum_{i \in V} f_i y_i}{\sum_{i \in V} s_i y_i} \label{eq:fp:a} \\
\text{subject to} \quad & y_i \ge 0, \quad \forall i \in V \label{eq:fp:b} \\
& \sum_{i \in V} y_i > 0 \label{eq:fp:c} \\
& y_i \ge y_j, \quad \forall (i,j)\in E . \label{eq:fp:d}
\end{align}
\end{subequations}
This FP can be viewed as the BP in~\eqref{eq:bp} with the binary constraints relaxed. In Appendix~\ref{a:lp=bp}, we show that the relaxation yields no increase in the maximum fee rate. Furthermore, we transform the FP to an equivalent linear program and establish the following result:

\begin{theorem} \label{thm:lp=bp}
    Given any strictly positive real numbers 
    $s_{1},...,s_{n}$
    and nonnegative numbers
    $f_{1},...,f_{n}$, %
    if $(x_{1},...,x_{n})$ solves the following LP,
\begin{subequations}\label{eq:lp}
\begin{align}
r^*_{\text{LP}} = \maximize_{x_{1},...,x_{n}} \quad & \sum_{i \in V} f_i x_i \label{eq:lp:a} \\
\subjectto \quad & \sum_{i \in V} s_i x_i = 1 \label{eq:lp:b} \\
& x_i \ge 0, \quad \forall i \in V \label{eq:lp:c} \\
& x_i \ge x_j, \quad \forall (i,j)\in E , \label{eq:lp:d}
\end{align}
\end{subequations}
then $(z_1, \dots, z_n)$ defined by %
\begin{align}
\label{eq:zi-def}
    z_i = \mathds{1}_{ \left\{ x_i\ge x_j, \forall j \in V \right\}} %
\end{align}
solves BP~\eqref{eq:bp}.
\end{theorem}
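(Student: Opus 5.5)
The argument has three steps: compare the optimal values of BP~\eqref{eq:bp}, FP~\eqref{eq:fp} and LP~\eqref{eq:lp}; decompose an LP solution into level sets; and read off that the top level set is an optimal closure, which is exactly $z$ in~\eqref{eq:zi-def}.

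\textbf{Step 1: comparing optimal values.} Any feasible $z$ of BP is feasible for FP, so $r^*_{\text{BP}}\le r^*_{\text{FP}}$. Any feasible $y$ of FP can be rescaled to $x=y/\sum_i s_i y_i$. Since every $s_i>0$, $y\ge 0$ and $y\neq 0$, the denominator is positive, and $x$ is feasible for LP with objective equal to the FP ratio. Conversely, any LP-feasible $x$ is FP-feasible: $\sum_i s_ix_i=1$ forces $x\ne0$. Hence $r^*_{\text{FP}}=r^*_{\text{LP}}$. It remains to show $r^*_{\text{LP}}\le r^*_{\text{BP}}$ and, more precisely, that the specific $z$ attains it.

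\textbf{Step 2: level-set decomposition.} Take an optimal LP solution $x$ and let $t_1>t_2>\dots>t_\ell>0$ be its distinct positive values, with $t_{\ell+1}=0$. Set $S_k=\{i: x_i\ge t_k\}$. The constraint $x_i\ge x_j$ for every edge $(i,j)$ means each $S_k$ is a closure: if $j\in S_k$ and $i$ is a parent of $j$, then $x_i\ge x_j\ge t_k$. Writing $x=\sum_{k=1}^{\ell}(t_k-t_{k+1})\mathbf 1_{S_k}$ gives
\begin{align*}
\sum_i f_ix_i=\sum_k (t_k-t_{k+1}) f(S_k), \qquad 1=\sum_k (t_k-t_{k+1}) s(S_k).
\end{align*}
The LP objective is therefore a convex combination, with weights $\lambda_k=(t_k-t_{k+1})s(S_k)>0$, of the fee rates $r(S_k)$. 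Each $S_k$ is BP-feasible, so each $r(S_k)\le r^*_{\text{BP}}$. This yields $r^*_{\text{LP}}\le r^*_{\text{BP}}$, and with Step 1 all three optimal values coincide; call the common value $r^*$.

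\textbf{Step 3: the top level set is optimal.} Since the convex combination equals $r^*$ while every term is at most $r^*$ and every weight is strictly positive, every $r(S_k)$ must equal $r^*$. In particular $S_1=\{i: x_i=\max_j x_j\}$ satisfies $r(S_1)=r^*$. This $S_1$ is exactly the set with $z_i=1$ in~\eqref{eq:zi-def}. It is nonempty, because some $x_i>0$ given $\sum_i s_ix_i=1$. It is a closure by Step 2, so $z$ satisfies~\eqref{eq:bp:b}--\eqref{eq:bp:d} and attains $r^*_{\text{BP}}$.

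\textbf{Main obstacle.} The step that needs the most care is the level-set decomposition. One must check that the weights $\lambda_k$ are strictly positive, which uses $s_i>0$ and the strict ordering of the $t_k$. One must also check that indices with $x_i=0$ contribute nothing to either sum; they do not, because the decomposition runs only over positive levels. Everything else is routine bookkeeping.
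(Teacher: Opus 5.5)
Your proof is correct, and it takes a somewhat different route from the paper's.

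\textbf{The paper's route.} The paper peels off only the top layer. With $u=\max_j x_j$, $B=\{i: x_i=u\}$ and $w$ the second-largest value, it writes $x$ as a positive combination of $\mathds{1}_B$ and the capped vector $\min(x,w)$, normalized to $x''$. It then argues by contradiction: if $r(B)$ and the rate of $x''$ differed, $r^*_{\text{LP}}$ would lie strictly between them, contradicting LP optimality. This requires a case split ($A=B$ versus $A\neq B$) and an auxiliary vector. The inequality $r^*_{\text{BP}}\ge r^*_{\text{LP}}$ appears only at the end, from the BP-feasibility of $z$ via the Charnes--Cooper sandwich.

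\textbf{Your route.} You use the full layer-cake decomposition $x=\sum_k (t_k-t_{k+1})\mathds{1}_{S_k}$. Since every threshold set is a closure, this gives $r^*_{\text{LP}}\le r^*_{\text{BP}}$ directly. You need neither a contradiction nor a case split.

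\textbf{What each buys.} Your equality case yields a stronger conclusion: \emph{every} superlevel set $\{i: x_i\ge t\}$ with $t>0$ of an optimal LP solution is an optimal closure, not just the top one. This matches the paper's example in Fig.~\ref{fig:minimal-closure-example}, where both level sets $\{1,2\}$ and $\{1,2,3,4\}$ of the midpoint solution attain the optimal rate. The paper's one-layer peeling is the two-term coarsening of your decomposition. It uses LP optimality directly and only needs feasibility of one auxiliary vector, but it certifies optimality of the top set alone.

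\textbf{Minor simplification.} The detour through FP in your Step~1 is not needed. A closure $Z$ maps directly to the LP-feasible point $\mathds{1}_Z/s(Z)$ with objective $r(Z)$, which gives $r^*_{\text{BP}}\le r^*_{\text{LP}}$ in one line.
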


Theorem~\ref{thm:lp=bp} is proved in Appendix~\ref{a:lp=bp}.

In~\ref{eq:lp},
each transaction \( i \in V \) is associated with a non-negative real-valued variable \( x_i \ge 0 \) (called transaction variable), where 
\( x_i = \max_{j\in V}(x_j) \)
indicates that transaction \( i \) is included in the solution set, and 
excluded otherwise.

\begin{figure}
    \centering
    \begin{tikzpicture}[
        node distance=1.2cm,
        tx/.style={circle, draw, thick, minimum size=1.1cm, inner sep=2pt, align=center, font=\small},
        edge/.style={->, >=Stealth, thick},
        group1/.style={draw, dotted, thick, rounded corners=0.4cm, inner sep=0.2cm},
        group2/.style={draw, dotted, thick, rounded corners=0.5cm, inner sep=0.4cm}
    ]
        \node[tx] (1) {tx 1 \\ $0/1$};
        \node[tx, right=of 1] (2) {tx 2 \\ $20/1$};
        \node[tx, right=of 2] (3) {tx 3 \\ $0/1$};
        \node[tx, right=of 3] (4) {tx 4 \\ $20/1$};

        \node[group1, fit=(1) (2)] {};
        \node[group2, fit=(1) (2) (3) (4)] {};

        \draw[edge] (1) -- (2);
        \draw[edge] (2) -- (3);
        \draw[edge] (3) -- (4);
    \end{tikzpicture}
    \caption{An example of a transaction dependency graph. Each transaction $i$ is shown as a circle annotated with its fee and size: ``$f_i/s_i$''. The dotted boundaries %
    enclose the minimal optimal closure %
    and a larger optimal closure; %
    both have the same fee rate.}
    \label{fig:minimal-closure-example}
\end{figure}

The LP formulation does not inherently guarantee a minimal optimal closure. To see this, consider an example transaction graph, illustrated in Fig.~\ref{fig:minimal-closure-example}, consisting of a set of four transactions, $V = \{1, 2, 3, 4\}$, with parent-to-child dependencies $E=\{(1,2),(2,3),(3,4)\}$. Let all sizes be $s_i = 1$, and the fees be $f_1 = 0$, $f_2 = 20$, $f_3 = 0$, and $f_4 = 20$. 

The maximum optimal fee rate is $r^* = 10$. This is achieved by two valid closures: the minimal closure $\{1, 2\}$, and the larger closure $\{1, 2, 3, 4\}$. 
The optimal value is achieved along a flat face of the feasible polytope. The vertices defining this optimal face are:
\begin{compactitem}
    \item vertex 1: $x_1 = %
    x_2 = 1/2$, $x_3 = %
    x_4 = 0$, and
    \item vertex 2: $x_1 = %
    x_2 = %
    x_3 = %
    x_4 = 1/4$.
\end{compactitem}

Any convex combination of these vertices, such as the edge midpoint ($x_1 = %
x_2 = 3/8$, $x_3 = %
x_4 = 1/8$), is also an optimal solution. The specific $x_i$ values returned depend entirely on the solver mechanism. For instance, an interior-point method might terminate on the midpoint; applying the exact extraction rule~\eqref{eq:zi-def} successfully isolates the minimal closure $\{1, 2\}$. 
However, the Simplex method exclusively explores the vertices of the polytope and will not return an intermediate edge value. It %
has no inherent preference to
terminate on either vertex 1 or vertex 2. If it terminates on vertex 1, the extraction rule isolates $\{1, 2\}$ (minimal). If it terminates on vertex 2, the maximum value is $1/4$ for all variables, meaning the extraction rule isolates the entire set $\{1, 2, 3, 4\}$ (not minimal). 

One way to isolate a minimal optimal closure is via a two-stage approach. Once the maximum optimal fee rate \( r^*_{\text{LP}} \) is found from the primary LP, we can formulate a secondary BP to explicitly minimize the total size of the closure, subject to the dependency constraints and the requirement that it achieves the optimal rate:

\begin{subequations}
\begin{align}
    \minimize_{z_1, ..., z_n} \quad & \sum_{i \in V} s_i z_i \\
    \subjectto \quad & \sum_{i \in V} f_i z_i \ge r^*_{\text{LP}} \sum_{i \in V} s_i z_i \\
    & z_i \ge z_j, \quad \forall (i,j)\in E \\
    & \sum_{i \in V} z_i \ge 1 \label{eq:non_empty} \\
    & z_i \in \{0, 1\}, \quad \forall i \in V .
\end{align}
\end{subequations}
The constraint~\eqref{eq:non_empty} ensures the solution is non-empty. This formulation strictly returns the indicator variables \( z_i \) for a smallest valid closure that matches the maximum fee rate.

In %
Sec.~\ref{ss:sfl-granularity}, we %
introduce a technique involving perturbation to guarantee the retrieval of a minimal optimal closure.

\subsection{Simplex Algorithm %
for Mempool Linearization}
\label{ss:simplex-alg}

The simplex algorithm for solving LPs %
operates on the vertices of the feasible polytope~\cite{nocedal2006numerical}.
Despite its exponential worst-case time complexity, the simplex algorithm is remarkably efficient in practice, often exhibiting polynomial-time behavior on real-world problems~\cite{spielman2004smoothed}.
The deterministic simplex algorithm's exponential complexity arises from pathological inputs%
~\cite{klee1972simplex}, where a fixed pivot rule is forced to traverse a long path of vertices. Randomized variants of the simplex algorithm circumvent this issue by modifying the pivot rule to incorporate randomness, for example, by choosing a variable uniformly at random from all eligible candidates that improve the objective function~\cite{kalai1992subexponential,matouvsek1992subexponential}.

To apply the simplex method, we first convert the inequality constraints~\eqref{eq:lp:d} into the standard form. %
Specifically, for each %
dependency \( (p%
, c%
) \in E \), a slack variable $d_{p,c}
$ is introduced to replace the constraint as
\begin{align}
    d_{p,c} &= x_p - x_c \label{eq:d=x-x} \\ %
    d_{p,c} &\ge 0.
\end{align}
Let $m=|E|$ denote the number of dependencies.
Along with the normalization constraint~\eqref{eq:lp:b},
we obtain a system with \( n + m 
\) variables and \( m + 
1
\) equations.
At each step, the simplex algorithm maintains a \textit{basic feasible solution} (BFS), corresponding to a vertex of the feasible polytope, by partitioning the variables into two disjoint sets:
\begin{itemize}
    \item \emph{$n-1$ non-basic (free) variables}: These are %
    not currently in the basis and %
    set to zero. %
    They are considered ``free'' because they are eligible to enter the basis during pivot steps. %
    \item \emph{$m+
    1
    $ basic variables}: %
    These are part of the current basis (solution), %
    which are uniquely determined by %
    the system of \( m + 
    1\) equations %
    with the free variables set to zero. %
\end{itemize}

At each iteration, the algorithm performs a \textit{pivot operation} by selecting one non-basic variable to \textit{enter the basis} (become basic and potentially positive) and one basic variable to \textit{leave the basis} (become non-basic and set to zero). This transition moves from one vertex of the feasible polytope to an adjacent vertex, ideally improving the objective function value.

\begin{lemma}[uniformity] %
\label{lem:bfs-uniformity}
    At any BFS of LP~\eqref{eq:lp}, all strictly positive 
    variables take the exact same value.
\end{lemma}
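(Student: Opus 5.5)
We prove the statement by exploiting the vertex characterization of a BFS. At a vertex of the feasible polytope of LP~\eqref{eq:lp} (in standard form with slacks $d_{p,c}$), the $n-1$ non-basic variables are zero. Each zero variable is either a transaction variable $x_i=0$ or a slack $d_{p,c}=0$, which by \eqref{eq:d=x-x} means $x_p=x_c$. So a BFS is equivalently a feasible $x$ at which at least $n-1$ linearly independent constraints of the form $x_i=0$ or $x_p=x_c$ are tight. These, together with the normalization \eqref{eq:lp:b}, determine $x$ uniquely. The plan is to show that such a system forces all nonzero $x_i$ to equal one common value.

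The key construction is an auxiliary graph $H$ on the vertex set $V\cup\{0\}$. Here $0$ is a ghost node representing the value zero. For each non-basic slack $d_{p,c}$ we add the edge $\{p,c\}$, and for each non-basic $x_i$ we add the edge $\{i,0\}$. With the convention $x_0=0$, each non-basic variable becomes the equation $x_a=x_b$ for one edge $\{a,b\}$ of $H$. Every edge of $H$ therefore forces its endpoints to have equal values, so $x$ is constant on each connected component of $H$, and the component containing $0$ is identically zero.

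Uniqueness of the BFS then limits the number of components. Suppose $H$ had two or more components not containing $0$, say $K_1$ and $K_2$ with common values $a_1$ and $a_2$. We could perturb these values along the direction $(+s(K_2),-s(K_1))$ scaled by a small $\epsilon$. This leaves \eqref{eq:lp:b} unchanged and keeps every edge equation of $H$ satisfied. So the $n-1$ non-basic equations plus normalization would not pin down $x$, contradicting the fact that a basis determines the solution uniquely. Hence $H$ has at most one component other than the one containing $0$.

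It follows that all positive $x_i$ lie in that single non-zero component and share the value $1/s(K)$, which proves the lemma. The main obstacle is making the uniqueness argument rigorous in standard form. Rather than counting components directly, I would argue that the basic columns are linearly independent. If a second nonzero component existed, the perturbation above would give a nonzero kernel direction that is supported only on basic variables: the $x_i$ in $K_1\cup K_2$, together with the slacks $d_{p,c}$ whose values change because exactly one endpoint lies in $K_1\cup K_2$. That would contradict linear independence of the basis matrix. Degenerate bases, where some basic variables are zero, need no special treatment, since the kernel argument never requires basic variables to be positive.
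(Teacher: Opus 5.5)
Your proof is correct, but it takes a different route from the paper's.

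\textbf{The paper's route.} The paper counts. It asserts that exactly $n-1$ non-negativity constraints are active and splits them into the zero-transaction set $Z$ and the tight-slack edge set $\mathcal{D}$, with $|Z|+|\mathcal{D}|=n-1$. Linear independence then gives two facts: $\mathcal{D}$ is acyclic, so $(V,\mathcal{D})$ has $|Z|+1$ trees, and no two constraints $x_i=0$ fall in the same tree. So exactly one tree escapes being forced to zero.

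\textbf{Your route.} You never count. You encode both kinds of tight constraints as edges of a single graph $H$ with a ghost node. You rule out a second nonzero component by exhibiting the kernel direction $(+s(K_2),-s(K_1))$ of the basis matrix.

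\textbf{Comparison.} Your argument uses only nonsingularity of the basis, so it needs no cardinality bookkeeping and is indifferent to degeneracy. The paper's argument yields more structure: the tight slacks form a spanning forest with one tree per chunk, and exactly one tree lacks a zero constraint. That structure is what motivates SFL's active forest. Your ghost node actually recovers it in one line. Independence of the tight equations is exactly the statement that $H$ is a forest, where a cycle through $0$ is the paper's ``two zero constraints in one tree.'' A forest with $n-1$ edges on $n+1$ nodes has exactly two components.

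\textbf{Two small corrections.}

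First, you describe the slacks that change under your perturbation as those with exactly one endpoint in $K_1\cup K_2$. In fact they are the slacks whose endpoints lie in different components of $H$ with at least one endpoint in $K_1\cup K_2$. This set includes edges between $K_1$ and $K_2$, whose slack changes by $\pm\epsilon\,(s(K_1)+s(K_2))$. What you actually need still holds: every such slack is basic, because non-basic slacks join vertices of the same component.

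Second, the lemma as the paper states and uses it also covers the slack variables. Later the paper reads $d_{p,c}>0$ as ``parent included, child excluded.'' You should add the following line. Every $x_i\in\{0,v\}$ with $v=1/s(K)$, so each $d_{p,c}=x_p-x_c\in\{-v,0,v\}$. Feasibility $d_{p,c}\ge 0$ leaves $\{0,v\}$, so positive slacks also equal $v$.
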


\begin{proof}
At any BFS, exactly $n+m$ linearly independent constraints must be active in the LP's standard form with $n$ transaction variables and $m$ slack variables. Since the $m+1$ equality constraints are always active, exactly $(n+m) - (m+1) = n-1$ of the non-negativity constraints (on both transaction and slack variables) must be active~\cite{bertsimas1997introduction}.

Let $Z \subseteq V$ be the index set of active transaction constraints ($x_i = 0$), and let $\mathcal{D} \subseteq E$ be the set of edges which correspond to active slack constraints ($d_{p,c} = 0$), such that $|Z| + |\mathcal{D}| = n - 1$. For every $(p,c)\in\mathcal{D}$, substituting $d_{p,c} = 0$ into \eqref{eq:d=x-x} yields an equation $x_p - x_c = 0$.
For the basis of the BFS to be linearly independent, the edge set $\mathcal{D}$ cannot contain cycles.
If a cycle existed linking vertices $v_1, \dots, v_k$, the sum of their corresponding active slack equations $(x_{v_1} - x_{v_2}) + \dots + (x_{v_k} - x_{v_1}) = 0$ would demonstrate linear dependence. Thus, $\mathcal{D}$ forms a forest on the vertices of $V$. The
forest with $n$ vertices and $|\mathcal{D}|$ edges contains exactly $n - |\mathcal{D}|$ trees~\cite{diestel2017graph}. Substituting $|\mathcal{D}| = n - 1 - |Z|$ yields exactly $|Z| + 1$ trees. Within each tree, the active constraints $x_p - x_c = 0$ force all transaction variables $x_i$ to a single uniform value.

To maintain linear independence of the basis, no two of the $|Z|$ active constraints $x_i = 0$ can fall in the same tree. Thus, exactly $|Z|$ distinct trees are forced to have $x_i = 0$, leaving exactly one tree, denoted $C^*$, to have a strictly positive uniform transaction value $v > 0$.
Moreover, %
for any dependency $(p,c) \in E$, the slack variable $d_{p,c} = x_p - x_c$ is strictly positive if and only if $p \in C^*$ and $c \notin C^*$, which yields $d_{p,c} = v - 0 = v$. Therefore, every strictly positive variable in the BFS evaluates to the exact same value $v$.
\end{proof}

Given a BFS, we say two transactions $p$ and $c$ are in the same ``chunk'' if $d_{p,c}$ is a free slack variable, i.e., $d_{p,c}=0$ is an active dependency constraint. By extension, a chunk includes a set of transactions that are connected through free slack variables. 
Due to additional structural constraints imposed by the simplex algorithm, several properties follow.

By Lemma~\ref{lem:bfs-uniformity}, a BFS determines the roles of each variable. A transaction variable \(x_i > 0\) is included in the current best closure, while \(x_i = 0\) is excluded from it. 
A slack variable with \( d_{p,c} = 0 \) enforces that the corresponding parent and child transactions \( (p, c) \) are either both included in or both excluded from a chunk.
A slack variable with \( d_{p,c} > 0 \) indicates that the parent is included in the current best closure, but the child is not.
Additionally, all chunks, except for one, %
contain a free transaction variable, and are thus %
excluded from the current best closure.

Overall, this means that each BFS of the LP corresponds to a unique partitioning of the %
DAG into chunks. Each chunk internally forms a spanning tree.
Among these chunks, exactly one (the one with no free transaction variable) 
attains the highest fee rate possible.

Each pivot operation in the simplex algorithm corresponds to a structural modification of the chunks:

\vspace{\baselineskip}
\noindent\emph{1. A free variable is selected to enter the basis:}
This choice is guided by seeking to improve the objective function (overall fee rate).
\begin{itemize}
    \item \emph{If a transaction variable $x_i$ is chosen to enter:}
    An entire chunk that was previously excluded (because $x_i$ was free, setting all its chunk transaction values to $0$) is now targeted. This previously-excluded chunk has a fee rate higher than that of the current `included' chunk (the current best solution candidate). It may become included, or end up being merged with another (included or excluded) chunk.

    \item \emph{If a slack variable $d_j$ is chosen to enter:}
    A $d_j$ variable that was free (which enforced $x_{p_j}=x_{c_j}$ for its associated parent $p_j$ and child $c_j$, keeping them in the same chunk) becomes basic. This effectively splits the original chunk into two separate sub-chunks: one containing \( p_j \), and the other \( c_j \). Such a pivot may be selected if the sub-chunk containing \( p_j \) has a higher fee rate than the currently included chunk, or if the one containing \( c_j \) has a lower fee rate than the currently included chunk.
\end{itemize}

\vspace{\baselineskip}
\noindent\emph{2. A basic variable is selected to leave the basis (becoming free):}
This step maintains the correct number of free and basic variables and adjusts the solution structure.
\begin{itemize}
    \item \emph{If a transaction variable $x_i$ is chosen to leave:}
    The $x_i$ variable associated with the previously 
    `included' chunk becomes free. This forces its value to $0$, thereby excluding that entire chunk from the solution. This occurs if the chunk that was ``made non-excluded'', or split off as described above, has no (basic) dependencies on another chunk.

    \item \emph{If a dependency slack variable $d_j$ is chosen to leave:}
    This enforces $x_{p_j}=x_{c_j}$ for the transactions $p_j$ and $c_j$ linked by this $d_j$. The effect is to merge the ``non-excluded'' or newly split-off chunk with an adjacent chunk it depends on.
\end{itemize}

Through this process, the simplex algorithm systematically explores the space of valid chunk decompositions, moving from one vertex of the feasible polytope to another, until it identifies the dependency-respecting subset of transactions with the highest fee rate.  

Based on the understanding established in this section, the next section introduces an alternative approach to solving the mempool linearization problem.

\section{%
Spanning Forest Linearization}
\label{s:approach-spl}

This section introduces the SFL algorithm, which computes a linearization by iteratively refining a partition of the transaction graph \( G = (V, E) \) into disjoint \emph{chunks}. Inspired by the chunking behavior of BFS in the simplex method, SFL begins with all transactions as singletons and proceeds through a sequence of merge and split operations guided by local fee rate comparisons.
The algorithm first produces an optimal linearization and subsequently performs a minimization phase to obtain
a minimal optimal linearization. The parent-to-child representation is used throughout this section.

\subsection{Basic SFL Algorithm}
\label{ss:spl-algorithm}

The basic algorithm maintains a subset of ``active'' dependencies in $E$. Each active dependency
corresponds to a free slack variable from the %
simplex formulation.
A crucial invariant of the algorithm is that the set of active dependencies must remain acyclic when their directions are ignored.
\begin{definition}[active forest and tree]
\label{def:active-subgraph}
    Let $%
    (V, E)$ denote a transaction dependency graph. %
    Let $A \subseteq E$ denote a set of active dependencies such that $(V,A)$ contains no undirected cycles (i.e., its underlying undirected graph is a forest). Let $\{T_1,\dots,T_m\}$ denote the set of weakly connected components of $(V,A)$; each $T_i$ is a polytree, i.e., its underlying undirected graph is a tree. We refer to $T_i$ as an \emph{active tree} (or, with slight abuse of terminology, simply as a tree). We call  $%
    (V, A)$ the \emph{active forest} and write $T\in(V,A)$ to denote that $T$ is one of the active trees of $(V,A)$, and write $v\in T$ to denote that transaction $v$ belongs to active tree $T$.
\end{definition}

\begin{definition}[parent and child trees]
\label{def:parent-child-trees}
    Let $%
    (V, A)$ be the active forest of a mempool $(V, E, f, s)$. Let $A'=A\setminus\{(p,c)\}$, so $(p,c)$ is inactive in $(V,A')$. We use $H_p$ to denote the active tree in $(V,A')$ that includes $p$, called the \emph{parent tree}, and use $H_c$ to denote the active tree in $(V,A')$ that includes $c$, called the \emph{child tree}.
\end{definition}

\begin{algorithm}[t]
\caption{Basic SFL} %
\begin{algorithmic}[1]
\STATE \textbf{Input:} A mempool $(V,E,f,s)$
\STATE \textbf{Initialize:} 
Set of active dependencies $A \gets \emptyset$
\REPEAT
    \STATE $\mathcal{E}_{\text{split}} \gets \{ (p, c) \in A \mid {}$
    $H_p
    \succ H_c
    \}$ \label{alg-line:sfl-deactivate-dep}
    \STATE $\mathcal{E}_{\text{merge}} \gets \{ (p, c) \in E \setminus A \mid 
    H_c \succeq H_p \}$ \label{alg-line:sfl-merge-condition}
    \STATE Pick arbitrary $e \in \mathcal{E}_{\text{merge}} \cup \mathcal{E}_{\text{split}}$
    \STATE %
    Add $e$ to $A$ if $e \in \mathcal{E}_{\text{merge}}$; %
    remove $e$ from $A$ if $e \in \mathcal{E}_{\text{split}}$
\UNTIL{$\mathcal{E}_{\text{merge}} \cup \mathcal{E}_{\text{split}} = \emptyset$}
\STATE $(T_1, \dots, T_m) \gets \text{
active trees of } (V, A)$,
sorted by descending fee rate, %
with ties broken arbitrarily
\STATE $\sigma%
\gets$ a valid ordering induced by $(T_1,\dots,T_m)$
\STATE \textbf{Return:} \( %
(T_1,\dots,T_m), \sigma%
\)
\end{algorithmic}
\label{alg:spl-basic}
\end{algorithm}

The basic SFL outlined in Algorithm~\ref{alg:spl-basic}
proceeds by repeatedly applying %
two %
operations: %
\begin{itemize}
    \item %
    Merge: For any inactive dependency $(p, c)$, %
    if %
    $H_c \succeq H_p$,
    then activate the dependency to merge %
    the parent and child trees %
    into one.

    \item Split: %
    For any active dependency $(p,c)$, %
    if $H_p \succ H_c$, then deactivate the dependency to split the tree
    into two trees.
\end{itemize}
This process continues until no further merges or splits are possible. At that point, the set of chunks corresponding to the final forest of trees is sorted in descending order of fee rate to produce the linearization. 
If two chunks have equal fee rates,
they can be ordered arbitrarily.
\begin{lemma}[boundary partition]
\label{lem:boundary-partition}
    In an active tree $(H, E_H)$, let $I$ denote a closure that is a non-empty proper subset of \( H \), and let $J = H \setminus I$. Let $B = \{(p,c) \in E_H \mid p \in I,\, c \in J\}$.
    Then the sets $\{H_c\}_{(p,c) \in B}$ are pairwise disjoint and
    \begin{align}
    \label{eq:boundary-part}
        J = \bigcup_{(p,c) \in B} H_c .
    \end{align}
\end{lemma}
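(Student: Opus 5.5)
The plan is to use the tree structure of $(H,E_H)$: between any two vertices of $H$ there is exactly one path in the underlying undirected tree, and deleting one edge $(p,c)\in E_H$ leaves exactly two components, $H_p$ and $H_c$. The only property of $I$ we will use is that it is a closure: every parent of a vertex of $I$ lies in $I$. Hence no edge of $E_H$ goes from $J$ into $I$, and every tree edge joining $I$ and $J$ belongs to $B$. We prove three things: $J$ is covered by the sets $H_c$, each $H_c$ lies inside $J$, and the sets $H_c$ are pairwise disjoint.

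\emph{Covering.} Fix $v\in J$. Since $I\neq\emptyset$, take the tree path from $v$ to some vertex of $I$, and stop at the first vertex $i\in I$ on it. The edge entering $i$ joins $i$ to some $j\in J$. Because $I$ is a closure, this edge cannot be $(j,i)$, so it is $(i,j)\in B$. The portion of the path from $v$ to $j$ avoids $i$, so it stays in the component of $j$ after deleting $(i,j)$. Therefore $v\in H_j$, which gives $J\subseteq\bigcup_{(p,c)\in B}H_c$.

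\emph{Containment and disjointness.} Fix $(p,c)\in B$ and suppose $H_c$ contains a vertex of $I$. On the path inside $H_c$ from $c$ to the first such vertex, the edge crossing from $J$ into $I$ is again a $B$-edge $(p',c')$ lying inside $H_c$, by the same closure argument. Disjointness reduces to the same situation. If $H_c\cap H_{c'}\neq\emptyset$ for distinct $(p,c),(p',c')\in B$, the two components overlap in the tree. Then one of the two $B$-edges must lie inside the other's child component.

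So everything reduces to one claim: no $B$-edge other than $(p,c)$ lies inside $H_c$. This is the step I expect to be the main obstacle. It does not follow from the closure property alone. It needs $I$ to lie entirely on the parent side $H_p$ of every boundary edge, which holds exactly when $I$ induces a connected subtree of $H$. The reason is that a second $B$-edge inside $H_c$ means $H_c$ contains a further piece of $I$ that is not connected to $p$ within $I$.

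My approach at this step is to take connectedness of $I$ within the tree as the situation in which the lemma is applied. In that case the tree path between any two vertices of $I$ stays inside $I$. It therefore cannot use the edge $(p,c)$, since $c\notin I$. Consequently all of $I$ lies in $H_p$, so $H_c\subseteq J$, and the two overlapping-component cases above become impossible. Combined with the covering step, this yields \eqref{eq:boundary-part} together with pairwise disjointness.

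Connectedness is genuinely needed, not just convenient. In the tree with edges $(a,c)$ and $(b,c)$, take $I=\{a,b\}$, which is a closure since $a$ and $b$ have no parents. Then $J=\{c\}$, but deleting $(a,c)$ leaves $H_c=\{c,b\}$, which meets $I$. So when the lemma is invoked, one should check that the $I$ being split off is connected in the active tree.
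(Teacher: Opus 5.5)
Your counterexample is valid, and it shows the lemma is false as stated. With $E_H=\{(a,c),(b,c)\}$ and the closure $I=\{a,b\}$, the two child trees $\{b,c\}$ and $\{a,c\}$ intersect, and their union is $H$ rather than $J=\{c\}$. The paper's proof has exactly the gap you isolate. It deletes the $|B|$ boundary edges and correctly gets $|B|+1$ components. It then asserts that ``$I$ remains exactly one of these components,'' which silently assumes $I$ is connected in the active tree; in your example $I$ becomes two of the three components. Under that added hypothesis your argument is correct, and it is essentially the paper's component count made explicit. The covering step, as you note, needs no connectedness. Connectedness then gives $I\subseteq H_p$ for every $(p,c)\in B$, hence $H_c\subseteq J$ and pairwise disjointness.

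The one thing to correct is your assumption that connectedness is ``the situation in which the lemma is applied.'' The paper uses this lemma only through Lemma~\ref{lem:q-boundary-sum} in the proof of Theorem~\ref{th:sfl-correctness}, with $I=U\cap U_k$ for an arbitrary closure $U$ of $(V,A)$. Such an intersection need not be connected in the active tree, so a connectedness check at the point of use would simply fail.

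The downstream identity is nonetheless true for every closure $I$. Each tree-component $I_j$ of $I$ is itself a closure, since a parent of a vertex of $I_j$ lies in $I$ and is adjacent to it by a tree edge. No tree edge joins two different components, so $B$ splits disjointly into the boundary sets $B_j$ of the $I_j$. Applying your connected version to each $I_j$ gives $\sum_{(p,c)\in B_j}q(H_p,H_c)=q(H,H\setminus I_j)=q(I_j,H)$, and summing over $j$ yields $q(I,H)=q(I,J)$.

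The same identity also follows directly from Lemma~\ref{lem:flow-conservation} with $S=I$, because $\delta_{in}^{T}(I)=\emptyset$ and $\delta_{out}^{T}(I)=B$. Alternatively, the correctness proof can replace $I$ by a tree-component $I^*$ with $r(I^*)\ge r(I)>r(U_k)$. This $I^*$ is a connected closure, so your corrected lemma applies to it directly.
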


\begin{proof}
Since $I$ is a closure in $(H, E_H)$, no edge of $E_H$ enters $I$ from $J$; hence every edge with one endpoint in $I$ and one in $J$ lies in $B$, and
removing the $|B|$ edges in $B$ 
severs all paths between $I$ and $J$, partitioning $H$ into exactly $|B| + 1$ disjoint connected components. Since $I$ remains exactly one of these components, the remaining $|B|$ components are precisely the child trees $H_c$ for each $(p,c) \in B$, establishing the disjoint union~\eqref{eq:boundary-part}.
\end{proof}

\begin{definition}[$q$-function]
\label{def:q-function}
For any two 
sets of transactions, $A$ and $B$, we define the $q$-function as:
\begin{align}
    q(A, B) \triangleq f(A)s(B) - f(B)s(A).
\end{align}
\end{definition}

It is easy to see that the $q$-function is antisymmetric, i.e.,
$q(A, B) = -q(B, A)$ for all $A$ and $B$, which implies that $q(A,A) = 0$. Moreover,
for non-empty $A$ and $B$, %
$q(A, B) > 0$ if and only if $r(A) > r(B)$.

\begin{lemma}[bilinearity of $q$-function]
\label{lem:q-bilinearity}
The $q$-function is bilinear over disjoint unions, i.e.,
\begin{align}
    q(A \cup B, H) &= q(A, H) + q(B, H) \label{eq:qabh} \\
    q(H, A \cup B) &= q(H, A) + q(H, B) \label{eq:qhab}
\end{align}
for all $A$, $B$, and $H$ satisfying $A\cap B=\emptyset$. In particular,
\begin{align}
    q(A\cup B, B) = q(A,B). \label{eq:qabb}
\end{align}
\end{lemma}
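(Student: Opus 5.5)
The plan is to expand the definition of $q$ and use additivity of $f$ and $s$ over disjoint unions. Since $A\cap B=\emptyset$, we have
\[
f(A\cup B)=f(A)+f(B),\qquad s(A\cup B)=s(A)+s(B).
\]
Both identities follow directly from the definitions $f(U)=\sum_{i\in U}f_i$ and $s(U)=\sum_{i\in U}s_i$, because each element of $A\cup B$ is counted exactly once. Nothing here requires the sets to be non-empty; the case $A=\emptyset$ or $B=\emptyset$ is covered automatically because empty sums vanish.

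For \eqref{eq:qabh}, I would substitute these identities into $q(A\cup B,H)=f(A\cup B)s(H)-f(H)s(A\cup B)$. This gives
\[
f(A)s(H)+f(B)s(H)-f(H)s(A)-f(H)s(B).
\]
Regrouping the four terms yields $\big(f(A)s(H)-f(H)s(A)\big)+\big(f(B)s(H)-f(H)s(B)\big)=q(A,H)+q(B,H)$.

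Identity \eqref{eq:qhab} then follows from \eqref{eq:qabh} together with the antisymmetry $q(X,Y)=-q(Y,X)$ noted just before the lemma. Explicitly, $q(H,A\cup B)=-q(A\cup B,H)=-q(A,H)-q(B,H)=q(H,A)+q(H,B)$. Alternatively, one can expand directly exactly as in the first case.

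For \eqref{eq:qabb}, I would apply \eqref{eq:qabh} with $H=B$. This gives $q(A\cup B,B)=q(A,B)+q(B,B)$, and $q(B,B)=0$ by antisymmetry, so $q(A\cup B,B)=q(A,B)$.

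There is no real obstacle in this argument. The only point needing care is that disjointness is exactly what makes $f$ and $s$ additive on $A\cup B$. Without it, the overlap $A\cap B$ would be double counted, and the identities would fail in general.
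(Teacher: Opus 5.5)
Your proof is correct and follows essentially the same route as the paper: additivity of $f$ and $s$ over disjoint unions gives the first identity, antisymmetry of $q$ gives the second, and setting $H=B$ with $q(B,B)=0$ gives the third.
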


\begin{proof}
Because $A$ and $B$ are disjoint, we have $f(A \cup B) = f(A) + f(B)$ and $s(A \cup B) = s(A) + s(B)$. This results in:
\begin{align}
    q(A \cup B, H)
    &= f(A \cup B)s(H) - f(H) s(A \cup B) \\
    &= \bigl(f(A) + f(B)\bigr) s(H) - f(H) \bigl(s(A) + s(B)\bigr) \\
    &= q(A,H) + q(B,H).
\end{align}
Hence~\eqref{eq:qabh} is established, and~\eqref{eq:qhab} is then easily proved using the antisymmetric property of $q$. Applying~\eqref{eq:qabh} with $H=B$ and using the fact that $q(B,B)=0$ lead to~\eqref{eq:qabb}.
\end{proof}

\begin{lemma}[AUC in terms of $q$-values]
\label{lem:auc-q-form}
    For any partition $U^m = (U_1, \dots, U_m)$ of $V$,
    \begin{align}
        A(U^m) = \frac{1}{2} f(V)\, s(V)
            + \frac{1}{2} \sum_{1 \le l < k \le m} q(U_l, U_k). \label{eq:auc-q-statement}
    \end{align}
\end{lemma}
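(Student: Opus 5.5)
We start from the closed form~\eqref{eq:AUmsum} and rewrite every term in the symmetric shape of the right-hand side of~\eqref{eq:auc-q-statement}. Write $f_k = f(U_k)$ and $s_k = s(U_k)$. Then $f(U^{k-1}) = \sum_{l<k} f_l$, so~\eqref{eq:AUmsum} becomes
\begin{align}
    A(U^m) = \sum_{1\le l<k\le m} f_l s_k + \frac{1}{2}\sum_{k=1}^m f_k s_k .
\end{align}

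Next we expand $\frac{1}{2} f(V)s(V)$. Since the $U_k$ form a partition, $f(V)=\sum_k f_k$ and $s(V)=\sum_k s_k$. Hence
\begin{align}
    \frac{1}{2} f(V)s(V) = \frac{1}{2}\sum_{k} f_k s_k + \frac{1}{2}\sum_{l<k} \bigl(f_l s_k + f_k s_l\bigr).
\end{align}
By Definition~\ref{def:q-function}, $q(U_l,U_k) = f_l s_k - f_k s_l$. Adding $\frac{1}{2}\sum_{l<k} q(U_l,U_k)$ to the previous display therefore cancels the $f_k s_l$ cross terms. What remains is $\frac{1}{2}\sum_k f_k s_k + \sum_{l<k} f_l s_k$, which matches the expression for $A(U^m)$ above and proves the identity.

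Bilinearity (Lemma~\ref{lem:q-bilinearity}) is not needed, because we work directly with the per-chunk totals $f_k$ and $s_k$. Every step is an explicit algebraic identity, so the only real risk is index bookkeeping. The points to watch are that the diagonal terms $l=k$ carry the factor $\frac{1}{2}$ and that the off-diagonal pairs split symmetrically; I would sanity-check both with $m=1$ and $m=2$.
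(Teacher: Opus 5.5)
Your proof is correct and follows essentially the same route as the paper. Both rewrite~\eqref{eq:AUmsum} as $\sum_{l<k} f(U_l)s(U_k) + \frac12\sum_k f(U_k)s(U_k)$ and then close the argument with $f(U_l)s(U_k) = \frac12\bigl(f(U_l)s(U_k)+f(U_k)s(U_l)\bigr) + \frac12 q(U_l,U_k)$ and $f(V)s(V)=\sum_{l,k} f(U_l)s(U_k)$; the only difference is that you expand both sides and match terms, while the paper transforms the left side directly into the right.
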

\begin{proof}
    Throughout this proof, $\sum_{l<k}$ abbreviates $\sum_{1 \le l < k \le m}$ and $\sum_{l \neq k}$ abbreviates $\sum_{1 \le l,k \le m,\, l \neq k}$. Expanding the definition of the AUC~\eqref{eq:AUmsum}, we have
    \begin{align}
        A(U^m)
        &= \sum_{l < k} f(U_l) s(U_k)
            + \frac{1}{2} \sum_{k=1}^m f(U_k) s(U_k). \label{eq:auc-q-expand} \\
        &= \frac12 \sum_{l<k} \big(f(U_l)s(U_k)+f(U_k)s(U_l) \notag \\
          &  \qquad\qquad +q(U_l,U_k) \big)
          + \frac12 \sum_{k=1}^m f(U_k) s(U_k) \label{eq:AUmffqf} \\
        &= \frac12 \sum^m_{l=1}\sum^m_{k=1} f(U_l)s(U_k)
        + \frac12 \sum_{l<k} q(U_l,U_k) \label{eq:AUmfq}
    \end{align}
    where~\eqref{eq:AUmffqf} is due to Definition~\ref{def:q-function}, and rearranging the terms yields~\eqref{eq:AUmfq}, which becomes~\eqref{eq:auc-q-statement}.
\end{proof}

Lemma~\ref{lem:auc-q-form} sheds further light on the relationship between AUC and the valid ordering objective in~\eqref{eq:expected-income-expanded}.
By the bilinearity of $q$ (Lemma~\ref{lem:q-bilinearity}),
the sum in~\eqref{eq:AUmfq} can be expressed as
\begin{align}
    \sum_{l < k} q(U_l, U_k) = \sum_{l < k} \;\sum_{i \in U_l,\, j \in U_k} q(\{i\}, \{j\})
    ,
\end{align}
which shows that the AUC depends only on pairs of transactions belonging to different chunks.
In contrast, for a complete ordering $\sigma$, Appendix~\ref{a:gammaq} shows that the expected fee income~\eqref{eq:expected-income-expanded}, multiplied by the constant $s(V)$, can be written as
\begin{align}
\begin{split}
    & f(V) s(V) - \Gamma(\sigma) \\
    &= \frac12 f(V)s(V) -  \frac12 \sum_{i\in V} f_i s_i
    + \frac12 \sum_{\substack{i,j\in V\\ \sigma(j)>\sigma(i)}} q(\{i\},\{j\})  ,
    \label{eq:gammaq}
\end{split}
\end{align}
which includes a summation over all (ordered) pairs of transactions, including pairs that belong to the same chunk.
The chunk decomposition thus retains exactly the cross-chunk contributions while discarding the intra-chunk ordering terms. These discarded terms constitute the NP-hard residue identified by the Sidney decomposition.

\begin{lemma}[boundary $q$-sum]
\label{lem:q-boundary-sum}
    Let $B$, $H$, $I$, and $J$ be defined as in Lemma~\ref{lem:boundary-partition}. Then
    \begin{align}
    \label{eq:boundary-q-sum-main}
    \sum_{(p,c) \in B} q(H_p, H_c) = q(I, J).
    \end{align}
\end{lemma}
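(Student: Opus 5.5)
For each boundary edge $(p,c)\in B$, the parent and child trees are taken with respect to the active tree $(H,E_H)$: removing $(p,c)$ from $E_H$ splits $H$ into exactly two components, $H_p\ni p$ and $H_c\ni c$, with $H_p\cup H_c=H$ and $H_p\cap H_c=\emptyset$. Hence $H_p=H\setminus H_c$. The plan is to rewrite each summand in terms of $H_c$ alone, sum over $B$ using bilinearity, and then apply the disjoint decomposition of $J$ from Lemma~\ref{lem:boundary-partition}.

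\emph{Step 1 (rewrite each term).} Since $H_p$ and $H_c$ are disjoint with union $H$, the first identity of Lemma~\ref{lem:q-bilinearity} gives
\begin{align}
q(H,H_c)=q(H_p\cup H_c,H_c)=q(H_p,H_c)+q(H_c,H_c)=q(H_p,H_c),
\end{align}
which is exactly~\eqref{eq:qabb}. So $q(H_p,H_c)=q(H,H_c)$ for every $(p,c)\in B$.

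\emph{Step 2 (sum over $B$).} By Lemma~\ref{lem:boundary-partition}, the sets $\{H_c\}_{(p,c)\in B}$ are pairwise disjoint and their union is $J$. Applying~\eqref{eq:qhab} repeatedly, by induction on $|B|$, gives
\begin{align}
\sum_{(p,c)\in B} q(H_p,H_c)=\sum_{(p,c)\in B} q(H,H_c)=q(H,J).
\end{align}

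\emph{Step 3 (replace $H$ by $I$).} Since $H=I\cup J$ with $I\cap J=\emptyset$, applying~\eqref{eq:qabb} again gives $q(H,J)=q(I\cup J,J)=q(I,J)$, which establishes~\eqref{eq:boundary-q-sum-main}.

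The only point needing care is Step 1, namely that $H_p$ is the complement of $H_c$ within $H$. This holds because $(H,E_H)$ is a tree, so deleting one edge leaves exactly two components. It also relies on reading $H_p$ and $H_c$ relative to the active tree $H$, as Lemma~\ref{lem:boundary-partition} already does. If $B$ is empty the statement is vacuous, but this cannot occur: $I$ is a nonempty proper subset of the connected tree $H$, so some edge crosses between $I$ and $J$, and since $I$ is a closure that edge must go from $I$ to $J$.
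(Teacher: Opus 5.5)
Your proof is correct and follows essentially the same chain as the paper. You rewrite each $q(H_p,H_c)$ as $q(H,H_c)$ via~\eqref{eq:qabb}, sum using the pairwise-disjoint decomposition of $J$ from Lemma~\ref{lem:boundary-partition}, and conclude with $q(I\cup J,J)=q(I,J)$; your side remark that $B\neq\emptyset$ is a harmless extra.
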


\begin{proof}
For every $(p,c) \in B$, the sets $H_p$ and $H_c$ disjointly partition $H$. By Lemma~\ref{lem:boundary-partition}, the child trees $\{H_c\}$ disjointly partition $J$. %
We repeatedly apply %
Lemma~\ref{lem:q-bilinearity} %
to write %
\begin{align}
    \sum_{(p,c) \in B} q(H_p, H_c) 
    &= \sum_{(p,c) \in B} q(H_p \cup H_c, H_c) \label{eq:bqs-step1} \\
    &= \sum_{(p,c) \in B} q(H, H_c) \\
    &= q\left(H, \bigcup_{(p,c) \in B} H_c\right) \label{eq:bqs-step3} \\
    &= q(H, J) \\
    &= q(I \cup J, J) \\
    &= q(I, J)
\end{align}
where~\eqref{eq:bqs-step1} follows from~\eqref{eq:qabb} of Lemma~\ref{lem:q-bilinearity}, and~\eqref{eq:bqs-step3} uses the fact that the child trees $\{H_c\}_{(p,c) \in B}$ are pairwise disjoint by Lemma~\ref{lem:boundary-partition}.

\end{proof}

\begin{theorem}[SFL correctness]
\label{th:sfl-correctness}
If Algorithm~\ref{alg:spl-basic} terminates, it outputs an %
optimal linearization.
\end{theorem}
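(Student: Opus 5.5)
At termination both $\mathcal{E}_{\text{merge}}$ and $\mathcal{E}_{\text{split}}$ are empty. We read off two local conditions and turn them into global optimality through Lemmas~\ref{lem:boundary-partition} and~\ref{lem:q-boundary-sum}.

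\emph{Active trees are closures of maximal internal rate.} Emptiness of $\mathcal{E}_{\text{split}}$ means $q(H_p,H_c)\le 0$ for every active $(p,c)$. Fix an active tree $H$ and a closure $I\subsetneq H$ of $(H,E_H)$, and set $J=H\setminus I$. Lemma~\ref{lem:q-boundary-sum} gives $q(I,J)=\sum_{(p,c)\in B}q(H_p,H_c)\le 0$. Since $H=I\cup J$ is disjoint, bilinearity gives $q(I,H)=q(I,J)\le 0$, so $r(I)\le r(H)$.

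We need this for an arbitrary subset $S\subseteq H$ closed under the parent relation of $E$ restricted to $H$, not only under $E_H$. Every edge of $E_H$ lies in $E$, so such an $S$ is also a closure in $(H,E_H)$. Hence every parent-closed subset of $H$ has fee rate at most $r(H)$.

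\emph{Order between trees.} Emptiness of $\mathcal{E}_{\text{merge}}$ means that every inactive edge $(p,c)\in E\setminus A$ joining different trees satisfies $r(T_p)>r(T_c)$. An inactive edge inside a single tree gives $H_c=H_p$, so $H_c\succeq H_p$ holds trivially and the edge would be in $\mathcal{E}_{\text{merge}}$; therefore no such edge exists at termination. Every tree is an active tree, so all edges of $E$ inside a tree lie in $E_H$, and the previous paragraph applies directly.

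Sort the trees by descending fee rate as $T_1,\dots,T_m$. Every edge of $E$ from $T_a$ to $T_b$ with $a\neq b$ has $r(T_a)>r(T_b)$, so $a<b$ regardless of how ties are broken. Thus each $T_k$ is a closure in the subgraph induced by $V\setminus B_{k-1}$, and the output is a linearization.

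\emph{Optimality.} Fix $k$ and a closure $U\in\mathcal{C}_k$, so $U\subseteq V\setminus B_{k-1}$. Split $U$ as the disjoint union $\bigcup_{j\ge k}(U\cap T_j)$. Each nonempty piece $U\cap T_j$ is parent-closed within $T_j$, because $U$ is closed in $V\setminus B_{k-1}$ and $T_j\subseteq V\setminus B_{k-1}$. By the first step, $r(U\cap T_j)\le r(T_j)\le r(T_k)$. A weighted average of these rates is then at most $r(T_k)$, that is, $q(U,T_k)=\sum_j q(U\cap T_j,T_k)\le 0$. Hence $T_k$ attains the maximum fee rate in $\mathcal{C}_k$.

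The main obstacle is the first step: turning the local split condition into a global statement about every sub-closure of a tree. Lemma~\ref{lem:q-boundary-sum} handles exactly this, and the only extra work is checking that $E_H$-closures include all the relevant $E$-closures inside $H$.
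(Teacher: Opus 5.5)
Your proof is correct and follows the paper's argument. Validity comes from the empty merge set forcing a strict fee-rate order across inactive inter-tree edges, and optimality comes from splitting a competing closure across the chunks and using Lemma~\ref{lem:q-boundary-sum} to show that no sub-closure of a chunk beats the chunk's rate; the paper phrases this as a contradiction for $U_1$ and then applies it ``recursively'', while you state it once per tree and handle every $k$ directly. Your aside that no inactive edge can lie inside a tree uses a literal reading of $\mathcal{E}_{\text{merge}}$ that conflicts with the forest invariant, since merges are meant to join distinct trees. Nothing depends on it, because your first step already passes from $E$-closures to $E_H$-closures via $E_H\subseteq E$.
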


\begin{proof}
The proof proceeds in two steps. We first show the algorithm produces a valid linearization per Definition~\ref{def:L} and then prove this linearization is optimal.

Let $U_1,\dots,U_m$ denote the corresponding sets of transactions in the trees $T_1,\dots,T_m$, sorted by fee rate. Suppose, for contradiction, $(U_1, \dots, U_m)$ is not a linearization. Since the chunks form a partition of $V$ and satisfy $r(U_1) \ge \dots \ge r(U_m)$, there must exist a chunk $U_k$ that is \emph{not} a closure in the subgraph induced by $V\setminus B_{k-1}$ where $B_{k-1}=U_1\cup\dots\cup U_{k-1}$.
This implies there must exist %
a transaction $c \in U_k$ and its %
parent transaction $p \in U_j$ %
for some $j > k$. Since $p$ and $c$ are in different chunks, the dependency $(p, c)$ must be \emph{inactive}. Moreover, since the merge condition (line~\ref{alg-line:sfl-merge-condition}) $H_c\succeq H_p$ is satisfied, $(p,c)$ needs to be activated according to Algorithm~\ref{alg:spl-basic}, which contradicts the assumption of termination. Hence the output must be a valid linearization.

We next show that $U_1$ has the highest fee rate among all closures in $B_1 = V$. Since every closure in $(V,E)$ is a closure in $(V,A)$, it suffices to show that $U_1$ is a highest-fee-rate closure in $(V,A)$.

Suppose, for contradiction, that some closure $U \subseteq V$ in $(V,A)$ %
satisfies $r(U) > r(U_1)$. Since $r(U)$ is the weighted average of $\{r(U \cap U_i)\}_{i=1}^m$ and $r(U_1) \ge r(U_i)$ for all $i$, some $k$ satisfies
\begin{align}
\label{eq:intersection-feerate}
r(U \cap U_k) > r(U_1) \ge r(U_k).
\end{align}
Set $I = U \cap U_k$ and $J = U_k \setminus I$. By Lemma~\ref{lm:closures_properties}, $I$ is a closure in $(V,A)$, %
and~\eqref{eq:intersection-feerate} forces $I$ to be a non-empty proper subset of $U_k$. Let $B = \{(p,c) \in E_A \mid p \in I,\ c \in J\}$. From~\eqref{eq:intersection-feerate}, $r(I) > r(U_k) > r(J)$, so by Definition~\ref{def:q-function}, $q(I,J) > 0$. Lemma~\ref{lem:q-boundary-sum} then yields
\begin{align}
\sum_{(p,c) \in B} q(H_p, H_c) \;=\; q(I, J) \;>\; 0,
\end{align}
so some $(p,c) \in B$ has $H_p \succ H_c$, satisfying the split condition (line~\ref{alg-line:sfl-deactivate-dep}) and contradicting termination. Thus $U_1$ is a highest-fee-rate closure in $(V,A)$, hence also in $(V,E)$. %

The same argument applied recursively shows $U_k$ is a highest-fee-rate closure in the subgraph induced by $V\setminus B_{k-1}$ for every $k$, so $(U_1, \dots, U_m)$ is an optimal linearization. 
\end{proof}

In contrast to the simplex method of Sec.~\ref{ss:simplex-alg}, SFL does not track the identity of the free transaction variable within each excluded chunk, as it has no bearing on the chunk decomposition. Nor does SFL have a counterpart to the `included' chunk, since it seeks all successive optimal chunks rather than a single highest-fee-rate closure. Merge and split decisions therefore rely on local fee rate comparisons: the two chunks involved (or, for a split, the two would-be chunks) are compared directly with each other rather than against an included chunk. In a way, SFL adopts the operational logic of the simplex method while discarding details irrelevant to the chunk structure.

The basic SFL algorithm leaves several choices unspecified. In the next subsection, we identify these ambiguities and propose refinements to address them.

\subsection{Initialization}
\label{ss:spl-refinements}

\begin{algorithm}
\caption{Initial Linearization}
\begin{algorithmic}[1]

\STATE \textbf{Procedure:} \textsc{MergeStep}(\( H \), \( d \), \(A\)):
\IF{\( d = \text{up} \)}
    \STATE $\mathcal{C} \gets \{\, (p,c) \in E \setminus A \mid H_c = H,\; H_p \preceq H \,\}$ \label{alg-line:mergestep-C-up}
\ELSE
    \STATE $\mathcal{C} \gets \{\, (p,c) \in E \setminus A \mid H_p = H,\; H_c \succeq H \,\}$ \label{alg-line:mergestep-C-down}
\ENDIF
\IF{\( \mathcal{C} = \emptyset \)}
    \RETURN \( \bot \)
\ENDIF
\IF{\( d = \text{up} \)}
    \STATE $\mathcal{E}_{\text{merge}} \gets \{\, (p,c) \in \mathcal{C} \mid r(H_p) = \min_{(p',c') \in \mathcal{C}} r(H_{p'}) \,\}$
    \STATE \((p,c) \xleftarrow{R} \mathcal{E}_{\text{merge}}\);\quad $H' \gets H_p$
\ELSE
    \STATE $\mathcal{E}_{\text{merge}} \gets \{\, (p,c) \in \mathcal{C} \mid r(H_c) = \max_{(p',c') \in \mathcal{C}} r(H_{c'}) \,\}$
    \STATE \((p,c) \xleftarrow{R} \mathcal{E}_{\text{merge}}\);\quad $H' \gets H_c$
\ENDIF
\STATE Add \((p,c)\) to \(A\)
\RETURN $H \cup H', A$

\vspace{1em}

\STATE \textbf{Procedure:} \textsc{Linearize}(mempool $(V, E, f, s)$, optional valid ordering \( \sigma \))
\STATE $A \gets \emptyset$

\IF{valid ordering \( \sigma = (v_1, \dots, v_n)\) is provided} \label{alg-line:mode1-start}
    \FOR{$i = 1, \dots, n$} \label{alg-line:mode1-for}
        \STATE $H \gets \{T \in 
        (V,A)
        \mid v_i \in 
        T
        \}$
        \STATE $(H, A) \gets \textsc{MergeStep}(H, \text{up}, A)$
        \WHILE{$H \neq \bot$} \label{alg-line:mode1-while-start}
            \STATE $(H, A) \gets \textsc{MergeStep}(H, \text{up}, A)$
        \ENDWHILE \label{alg-line:mode1-while-end}
    \ENDFOR \label{alg-line:mode1-end}
\ELSE \label{alg-line:mode2-start}
    \STATE $Q \gets$ random permutation of 
    active trees of $(V,A)$ \label{alg-line:q-init}
    \REPEAT
        \STATE $H \gets Q.\text{pop}()$ \label{alg-line:q-pop}
        \IF{$H \in 
        (V,A)
        $}
            \STATE $(d_1, d_2) \gets$ random permutation of $\{\text{up}, \text{down}\}$
            \FOR{$i = 1,2$} \label{alg-line:mode2-for-start}
                \STATE $(H, A) \gets \textsc{MergeStep}(H, d_i, A)$
                \IF{$H \neq \bot$}
                    \STATE Push $H$ onto $Q$ \label{alg-line:q-push}
                    \STATE \textbf{break}
                \ENDIF
            \ENDFOR \label{alg-line:mode2-for-end}
        \ENDIF
    \UNTIL{$Q = \emptyset$} \label{alg-line:mode2-end}
\ENDIF

\STATE $(T_1, \dots, T_m) \gets$ sort active trees of
$(V,A)$
by descending fee rate, breaking ties uniformly at random 
\label{alg-line:sort}
\STATE $\sigma_{\text{out}} \gets$ valid ordering induced by $(T_1, \dots, T_m)$
\STATE \textbf{Return:} $(T_1, \dots, T_m),\, \sigma_{\text{out}}, A$
\end{algorithmic}
\label{alg:initial-linearize}
\end{algorithm}

We next describe a procedure for constructing a valid initial state from which the algorithm can begin optimization. The procedure \textsc{Linearize} (Algorithm~\ref{alg:initial-linearize}) starts with all dependencies inactive, so that each transaction forms its own singleton chunk, and then merges chunks until the resulting decomposition forms a valid linearization.

A single merge step (\textsc{MergeStep}) attempts to merge a given chunk $H$ in a specified direction by finding a neighboring chunk connected through an inactive dependency: when merging upward, it selects the lowest-fee-rate chunk that $H$ depends on among those with fee rates lower than or equal to that of $H$; when merging downward, it selects the highest-fee-rate chunk that depends on $H$ among those with fee rates higher than or equal to that of $H$. If multiple candidates have equal fee rates, one is chosen uniformly at random. The merge is performed by activating a uniformly random inactive dependency between the two chunks.

When an existing valid ordering %
is provided, the procedure processes each transaction from first to last in %
that order. For each transaction, it repeatedly applies upward merge steps on its chunk until no further merge is possible. This guarantees that the resulting linearization is at least as good as the input ordering, %
since each merge maintains or improves the AUC.

When no existing valid ordering is available, the procedure processes a randomly-shuffled queue of chunks. Each chunk that is popped from the queue attempts a merge step in each of two randomly ordered directions, stopping at the first successful merge. When a merge succeeds, the resulting chunk is re-enqueued for further processing. If neither direction yields a merge, the chunk is simply discarded from the queue. The procedure terminates when the queue is empty, at which point sorting the chunks by decreasing fee rate yields a valid linearization.

\begin{lemma}[validity of initial linearization]
\label{lem:initlinearize-valid}
    Upon termination, Algorithm~\ref{alg:initial-linearize} outputs a valid linearization.
\end{lemma}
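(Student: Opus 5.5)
Sorting the active trees at line~\ref{alg-line:sort} by descending fee rate already gives the first condition of Definition~\ref{def:L}. So the whole task is the closure condition: each $U_k$ must be a closure in the subgraph induced by $V\setminus B_{k-1}$. I will follow the pattern of the first half of the proof of Theorem~\ref{th:sfl-correctness}. Suppose the closure condition fails. Then there is an inactive dependency $(p,c)$ with $p\in U_j$, $c\in U_k$ and $j>k$. Because the sort is by fee rate, this gives $r(H_p)\le r(H_c)$ (ties broken at random still satisfy $\le$). It therefore suffices to prove the following invariant: \emph{at termination, every inactive dependency $(p,c)$ with $H_p\ne H_c$ satisfies $H_p\succ H_c$}, i.e.\ no inter-chunk dependency has a parent chunk of lower or equal fee rate. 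Such a $(p,c)$ cannot lie inside a single tree, since two vertices in one tree go into the same chunk.

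For the mode with no input ordering (lines~\ref{alg-line:mode2-start}--\ref{alg-line:mode2-end}), I will maintain the following queue invariant. For every inactive inter-tree dependency $(p,c)$ with $H_c\succeq H_p$, at least one of $H_p,H_c$ is currently an active tree that is also in $Q$. This holds initially because every tree is in $Q$ (line~\ref{alg-line:q-init}). Now consider a popped tree $H$ on which both merge steps fail. By lines~\ref{alg-line:mergestep-C-up}--\ref{alg-line:mergestep-C-down}, no violating edge has $H$ as its child tree (the up direction) or as its parent tree (the down direction), so removing $H$ from $Q$ preserves the invariant. If a merge succeeds, the only trees that change are $H$ and $H'$. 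They are replaced by $H\cup H'$, which is pushed at line~\ref{alg-line:q-push}. Any violating edge that touches the new tree therefore has an endpoint tree in $Q$. Edges not touching it are unaffected, except in one case: an edge whose $Q$-witness was $H'$, now stale. Such an edge touches $H'\subseteq H\cup H'$, so it touches the new tree and is covered. Stale entries are exactly the ones skipped by the check ``$H\in(V,A)$''. When $Q=\emptyset$ the invariant says no violating edge exists. Termination follows because each successful merge reduces the number of trees by one, so there are at most $n-1$ pushes beyond the initial $n$.

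For the mode with an input ordering $\sigma$ (lines~\ref{alg-line:mode1-start}--\ref{alg-line:mode1-end}), I will prove by induction on $i$ that after iteration $i$ the following holds. Let $P_i=\{v_1,\dots,v_i\}$. Then $P_i$ is a union of active trees, and no dependency $(p,c)$ with $c\in P_i$ and $H_p\ne H_c$ has $H_p\preceq H_c$. Each tree in $P_i$ can only have parents inside $P_i$, because $\sigma$ is valid and trees are only formed by merging upward. Within iteration $i$, the tree containing $v_i$ repeatedly absorbs parent trees. Every parent tree lies in $P_{i-1}$, since any tree containing a parent of a vertex in $P_i$ is such a tree. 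The loop stops only when no parent tree of the current $H$ has fee rate $\preceq H$. The main obstacle is that merging changes the fee rate of $H$, and this could create new violations for edges from other trees $K\subseteq P_{i-1}$ into $H$'s old parts; I must show these are covered. My plan is to use the fact that $H$ only absorbs a parent tree $H'$ with $H'\preceq H$. This gives $r(H\cup H')\le r(H)$, and also $r(H\cup H')\ge r(H')$. So for an edge from $K$ to the former $H'$: previously $K\succ H'$ by the induction hypothesis, but now the child tree is $H\cup H'$, whose rate may exceed $r(K)$. Such an edge has the new $H$ as its child tree, so the up-merge loop handles it before stopping. Edges out of the new tree go to children, which are not in $P_i$ or are in $H$ itself. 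Hence at loop exit all edges into $P_i$ satisfy the invariant, and at $i=n$ the claim follows. Termination holds because the number of trees strictly decreases.
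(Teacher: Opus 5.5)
Your reduction to the strict condition (every inactive inter-tree dependency has $H_p\succ H_c$) is the right target; it is what the random tie-breaking in the final sort actually requires. Your queue invariant for the no-ordering mode is a correct alternative to the paper's ``last-formed tree'' argument. The gap is in the input-ordering mode, at the sentence ``Edges out of the new tree go to children, which are not in $P_i$ or are in $H$ itself.'' That holds only for out-edges of $v_i$. An absorbed tree $H'\subseteq P_{i-1}$ can have children in another tree $K\subseteq P_{i-1}$ that is never absorbed. Each further absorption \emph{lowers} $r(H)$, so an edge from $H'$ to $K$ that satisfied $H'\succ K$ can end with $H\preceq K$. This is exactly what the paper's Mode~1 proof is devoted to: showing that the parent-side tree's rate never drops below $r(H_p^{(k^*)})$.

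Your argument never uses the fact that the upward \textsc{MergeStep} picks a \emph{minimum}-rate parent tree, and without that rule the lemma is false. Take unit sizes, fees $f_a=8$, $f_b=1$, $f_k=7$, $f_v=10$, dependencies $(a,k),(a,v),(b,v)$, and ordering $(a,b,k,v)$. At $i=4$ the minimum rule absorbs $\{b\}$ first. The rate becomes $11/2<8$, so $\{a\}$ is no longer a candidate, and the output $\{a\},\{k\},\{b,v\}$ is valid. Absorbing $\{a\}$ first would instead give rate $9$, then absorb $\{b\}$, yielding $\{a,b,v\}$ with rate $19/3<7=r(\{k\})$. The sort then places $k$ before its parent $a$. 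Your reasoning does not distinguish these two runs, so it cannot be complete.

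To close the gap, let $X_1,X_2,\dots$ be the trees absorbed during iteration $i$, and let $H_0=\{v_i\},H_1,\dots$ be the successive merged trees.
\begin{itemize}
\item Each $X_j$ is unchanged since the end of iteration $i-1$. By your induction hypothesis, every parent tree of $X_j$ therefore has rate $>r(X_j)$.
\item The remaining parent trees of $H_{j-1}$ have rate $\ge r(X_j)$. For the other candidates this is the minimum rule; non-candidates have rate $>r(H_{j-1})\ge r(X_j)$.
\item Hence every parent tree of $H_j$ has rate $\ge r(X_j)$, which gives $r(X_{j+1})\ge r(X_j)$.
\item Each merge gives $r(X_j)\le r(H_j)\le r(H_{j-1})$, so the final tree $H$ satisfies $r(H)\ge r(X_j)$ for every $j$.
\end{itemize}
An out-edge from $X_j$ to an untouched tree $K$ then satisfies $r(H)\ge r(X_j)>r(K)$. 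Together with your correct treatment of in-edges via the loop's stopping condition, this completes the inductive step.
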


\begin{proof}
Since the algorithm only adds active edges to $A$, it is guaranteed to terminate.
By line~\ref{alg-line:sort}, it suffices to show that at termination,
\begin{align}
    H_p \succeq H_c, \quad \forall (p,c) \in E \setminus A. \label{eq:goal}
\end{align}
Suppose, for contradiction, $\exists (p,c) \in E \setminus A$ with $H_c \succ H_p$ at termination.

\textbf{Mode~1} (lines~\ref{alg-line:mode1-start}--\ref{alg-line:mode1-end}):
Suppose the algorithm calls \textsc{MergeStep} $N$ times before terminating. Index the successive \textsc{MergeStep} calls executed during Mode~1 by $k = 1, \dots, N$, and let $H_p^{(k)}, H_c^{(k)}$ denote parent and child trees of $(p,c)$ \emph{after} the $k$-th call; set $H_p^{(0)}, H_c^{(0)}$ to the singletons $\{p\}, \{c\}$. Thus $H_p^{(N)} = H_p$ and $H_c^{(N)} = H_c$.

Let $k^* \in \{0, 1, \dots, N\}$ be the largest index at which the $k^*$-th call modifies the tree containing $c$ (or $k^* = 0$ if no such call exists). By maximality of $k^*$, $H_c^{(k^*)} = H_c$. We claim
\begin{align}
    H_p^{(k^*)} \succ H_c. \label{eq:m1-base}
\end{align}
Indeed, immediately after call $k^*$ the inner while loop (lines~\ref{alg-line:mode1-while-start}-\ref{alg-line:mode1-while-end}) reattempts $\textsc{MergeStep}(H_c, \mathrm{up}, A)$, which by maximality of $k^*$ leaves $H_c$ unchanged, hence returns $\bot$ with $\mathcal{C} = \emptyset$. Since $(p,c) \in E \setminus A$ at termination (and hence throughout, as $A$ only grows), line~\ref{alg-line:mergestep-C-up} forces $H_p^{(k^*)} \succ H_c$.

We now show $H_p^{(k)} \succeq H_p^{(k^*)}$ for all $k \geq k^*$ by induction on $k$. The base $k = k^*$ is trivial. For the inductive step $k \to k+1$, if call $k+1$ does not modify the tree containing $p$, then $H_p^{(k+1)} = H_p^{(k)}$ and we are done. Otherwise $H_p^{(k+1)} = H_p^{(k)} \cup \bar H$, where $\bar H$ is the other tree involved in the $(k+1)$-th \textsc{MergeStep} call. We split into two cases:
\begin{itemize}
    \item $H_p^{(k)}$ is the selected parent in the call: line~\ref{alg-line:mergestep-C-up} gives $\bar H \succeq H_p^{(k)} \succeq H_p^{(k^*)}$.
    \item $H_p^{(k)}$ is the initiator: line~\ref{alg-line:mergestep-C-up} gives $\bar H \preceq H_p^{(k)}$, where $\bar H$ is an inactive parent of $H_p^{(k)}$. We claim $\bar H \succeq H_p^{(k^*)}$. Otherwise, let $k_0>k^*$ be the first call after $k^*$ that modifies the tree containing $p$ by an upward merge with an inactive parent $\bar H_0$ satisfying $\bar H_0 \prec H_p^{(k^*)}$. By minimality of $k_0$ and the induction hypothesis, $H_p^{(k_0-1)} \succeq H_p^{(k^*)}$. Hence $\bar H_0 \prec H_p^{(k_0-1)}$, so $\bar H_0$ belongs to the upward candidate set $\mathcal{C}$ of $H_p^{(k_0-1)}$. But when $H_p^{(k_0-1)}$ was formed, Mode~1 exhausted upward merges from it, so no such inactive parent could remain. This contradiction proves $\bar H \succeq H_p^{(k^*)}$.

\end{itemize}
In both cases $\bar H \succeq H_p^{(k^*)}$, and combined with $H_p^{(k)} \succeq H_p^{(k^*)}$, we get $H_p^{(k+1)} \succeq H_p^{(k^*)}$.
Therefore $H_p = H_p^{(N)} \succeq H_p^{(k^*)} \succ H_c$ by~\eqref{eq:m1-base}, contradicting $H_c \succ H_p$.

\textbf{Mode~2} (lines~\ref{alg-line:mode2-start}--\ref{alg-line:mode2-end}):
Let $H_\ell \in \{H_p, H_c\}$ be whichever is formed last (line~\ref{alg-line:q-init} or~\ref{alg-line:q-push}). $H_\ell$ is enqueued at formation and never subsequently modified, so it is eventually popped (line~\ref{alg-line:q-pop}) with $H_\ell \in 
(V,A)$ and both directions are attempted (lines~\ref{alg-line:mode2-for-start}--\ref{alg-line:mode2-for-end}). 
The dependency $(p,c) \in E \setminus A$ satisfies:
\begin{itemize}
    \item direction up on  $H_c: (p,c) \in \mathcal{C} \text{ since } H_p \prec H_c$,
    \item direction down on  $H_p: (p,c) \in \mathcal{C} \text{ since } H_c \succ H_p$.
\end{itemize}

So whichever direction matches $H_\ell$, $\mathcal{C} \neq \emptyset$ and \textsc{MergeStep} succeeds, modifying $H_\ell$, a contradiction.
\end{proof}

\subsection{Refining the Basic SFL Algorithm}

Until now, we have left unspecified the strategy for selecting which dependencies to activate or deactivate.
The building blocks for the refined SFL algorithm, including the \textsc{MergeUpwards}, \textsc{MergeDownwards}, and \textsc{Improve} procedures, are presented in Algorithm~\ref{alg:sfl-refined}.

Although it appears that making arbitrary valid choices can still yield an optimal result, such random selection may lead to revisiting previous states and performing redundant work. To avoid this inefficiency, we will introduce some heuristics and strategies to guide the algorithm and improve its convergence.

One simple yet effective refinement is to prioritize merging over splitting. Conceptually, merging serves to enforce 
dependency constraints
by joining chunks, whereas splitting serves to improve the overall fee rate by refining chunk structure. By ensuring that, after each improving split, all possible merges are performed immediately, the algorithm quickly restores a 
valid linearization
state. This prioritization has a practical benefit: if the algorithm is interrupted early (e.g., due to a time limit), the current state remains a valid linearization, even if it is not optimal.

\begin{algorithm}
\caption{Refined SFL Building Blocks}
\begin{algorithmic}[1]

\STATE \textbf{Procedure:} \textsc{MergeUpwards}($t$, $A$):
\STATE $H \gets$ active tree in $(V,A)$ containing $t$
\REPEAT
    \STATE $(H, A) \gets \textsc{MergeStep}(H, \text{up}, A)$
\UNTIL{$H = \bot$}
\RETURN $A$

\vspace{0.5em}

\STATE \textbf{Procedure:} \textsc{MergeDownwards}($t$, $A$):
\STATE $H \gets$ active tree in $(V,A)$ containing $t$
\REPEAT
    \STATE $(H, A) \gets \textsc{MergeStep}(H, \text{down}, A)$
\UNTIL{$H = \bot$}
\RETURN $A$

\vspace{0.5em}

\STATE \textbf{Procedure:} \textsc{Improve}($(p,c)$, $A$): \label{alg-line:improve1}
\IF{$H_c \succ H_p$}
    \RETURN $A$
\ENDIF
\STATE $\mathcal{E}_{\text{remerge}} \gets \{(u,v) \in E \setminus A \mid u \in H_c,\; v \in H_p\}$ \label{alg-line:improve3}
\STATE $A \gets A \setminus \{(p,c)\}$ \label{alg-line:improve2}
\IF{$\mathcal{E}_{\text{remerge}} \neq \emptyset$}
    \STATE Pick arbitrary $e \in \mathcal{E}_{\text{remerge}}$
    \STATE $A \gets A \cup \{e\}$ \label{alg-line:improve4}
\ELSE \label{alg-line:improve5}
    \STATE $A \gets \textsc{MergeUpwards}(p, A)$ \label{alg-line:improve6}
    \STATE $A \gets \textsc{MergeDownwards}(c, A)$ \label{alg-line:improve7}
\ENDIF \label{alg-line:improve8}
\RETURN $A$
\end{algorithmic}
\label{alg:sfl-refined}
\end{algorithm}

Let us define an \emph{improvement step} as one application of the splitting rule (selected by any criterion), followed by all possible merges.
Let $(U_1, \dots, U_m)$ be a valid linearization, and let $(p,c)$ be an active dependency within some chunk $U_k$ whose deactivation yields a parent-side chunk $H_p$ and a child-side chunk $H_c$ with $r(H_p) > r(H_c)$.

\begin{definition}[improvement step]
\label{def:improvement-step}
    An \emph{improvement step} on $(p,c)$ follows the logic of the \textsc{Improve} procedure (Algorithm~\ref{alg:sfl-refined}, lines~\ref{alg-line:improve1}--\ref{alg-line:improve8}):
    \begin{enumerate}
        \item \textbf{Split:} 
        Deactivate the dependency $(p, c)$, splitting $U_k$ into $H_p$ and $H_c$ (line~\ref{alg-line:improve2}).
        
        \item \textbf{Resolution:} Proceed with one of the following operations based on dependency constraints:
        \begin{enumerate}
            \item \textbf{Immediate Re-merge:} If an inactive dependency exists such that the parent tree $H_p$ depends on the child tree $H_c$, activate it (lines~\ref{alg-line:improve3}--\ref{alg-line:improve4}). This results in a \emph{self-merge} where the linearization remains unchanged. 
            \item \textbf{Restoration:} Otherwise, apply \textsc{MergeUpwards} to $H_p$ and \textsc{MergeDownwards} to $H_c$ until no further merges are possible (lines~\ref{alg-line:improve6}--\ref{alg-line:improve7}). Sorting the new set of chunks results in a new valid linearization. 
        \end{enumerate}
    \end{enumerate}
\end{definition}

When a split occurs in the improvement step, two possibilities arise: 

1) If the higher-fee-rate chunk (the parent side), through another inactive dependency, depends on the lower-fee-rate chunk (the child side), the 
rule causes them to re-merge. The chunk %
remains the same as before the split, albeit with a different internal active tree. Whether this behavior can lead to infinite loops, or how many times it can repeat, is a concern that is addressed in Secs.~\ref{ss:sfl-termination-bland} and~\ref{ss:randomized-sfl}, where we introduce specific rules for selecting which dependency to split so that the algorithm is guaranteed to terminate.

2) If no such self-merge happens (because the split-off parent chunk does not depend on the split-off child), the parent and child chunks initially occupy adjacent positions in the ordered list, which may no longer be properly sorted. With a higher fee rate, the parent can only depend on chunks that preceded it, but it may have a higher fee rate than them. The 
rule now reduces to the chunk ``bubbling up'' until it either reaches a position where its fee rate is no longer higher than the preceding chunk or it encounters a lower-fee-rate dependency, merges with it, and continues bubbling upward. A reversed process occurs for the split-off child. Its fee rate is now lower than the original chunk’s, so it may be surpassed by successors. The child chunk ``bubbles down'' until it reaches a position where its fee rate exceeds that of the next chunk, or it encounters a higher-fee-rate dependent, merges with it, and possibly continues downward.

\begin{lemma}[area gain from split or merge]
\label{lem:structural-gain}
    Let $S = (U_1, \dots, U_m)$ be a 
    partition of $V$.
    Let 
    \begin{align}
        S' = (U_1, \dots, U_{k-1},\, U_k',\, U_k'',\, U_{k+1}, \dots, U_m)
    \end{align}
    where $U_k' \cup U_k''=U_k$ is a partition of $U_k$ into two disjoint, non-empty subsets. Then:
    \begin{equation}
        A(S') - A(S) = \frac{1}{2}\, q(U_k', U_k'').
    \end{equation}
    Consequently, splitting $U_k$ into $(U_k', U_k'')$ where 
    $U_k' \succ U_k''$
    yields a strict increase in AUC of $\frac{1}{2}\, q(U_k', U_k'') > 0$. Conversely, given a sequence containing the adjacent pair $(U_k', U_k'')$ where 
    $U_k'' \succ U_k'$, 
    merging them into a single set $U_k = U_k' \cup U_k''$ yields a strict increase in AUC of $\frac{1}{2}\, q(U_k'', U_k') > 0$. If $r(U_k'') = r(U_k')$, merging them yields no change in the AUC.
\end{lemma}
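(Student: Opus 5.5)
The plan is to derive the identity from Lemma~\ref{lem:auc-q-form}, which expresses the AUC of any partition as $\frac12 f(V)s(V)$ plus half the sum of $q(U_l,U_k)$ over ordered pairs $l<k$. Since $S$ and $S'$ are both partitions of the same set $V$, the constant term $\frac12 f(V)s(V)$ is identical for both and cancels in $A(S')-A(S)$. So the whole argument reduces to comparing the two pairwise $q$-sums.

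The pairs that involve neither $U_k$ nor its pieces appear unchanged in both sums and cancel. For a fixed $l<k$, the sum for $S$ contains the term $q(U_l,U_k)$. The sum for $S'$ contains $q(U_l,U_k')+q(U_l,U_k'')$ instead, and by the bilinearity~\eqref{eq:qhab} of Lemma~\ref{lem:q-bilinearity} this equals $q(U_l,U_k)$. In the same way, for a fixed $l>k$, the term $q(U_k,U_l)$ in $S$ matches $q(U_k',U_l)+q(U_k'',U_l)$ in $S'$ by~\eqref{eq:qabh}. The only term with no counterpart in $S$ is the new pair $(U_k',U_k'')$, with $U_k'$ placed first. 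Hence the difference is exactly $\frac12 q(U_k',U_k'')$.

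The consequences follow from the sign properties of $q$. Because both pieces are non-empty, $q(U_k',U_k'')>0$ exactly when $U_k'\succ U_k''$, which gives the strict gain for a split. For a merge, I will apply the identity with the merged sequence playing the role of $S$. The merge gain is then
\begin{align}
-\tfrac12 q(U_k',U_k'') = \tfrac12 q(U_k'',U_k'),
\end{align}
using antisymmetry, and this is positive when $U_k''\succ U_k'$. When $r(U_k')=r(U_k'')$ we have $q=0$, so the merge leaves the AUC unchanged.

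There is no real obstacle here. The only care needed is bookkeeping: the relative order of $U_k'$ and $U_k''$ fixes which sign of $q$ appears, and bilinearity requires disjointness, which the partition assumption provides.
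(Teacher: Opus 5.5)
Your argument is correct, but it takes a different route from the paper. The paper works directly from the defining sum~\eqref{eq:AUmsum}. It notes that every term other than those for $U_k$, $U_k'$, $U_k''$ is unchanged, since the prefix fee of any later set is the same because $f(U_k')+f(U_k'')=f(U_k)$. It then writes out the difference of the remaining three terms, substitutes $s(U_k)=s(U_k')+s(U_k'')$ and $f(U_k)=f(U_k')+f(U_k'')$, and simplifies to $\frac12 q(U_k',U_k'')$.

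You instead pass through the pairwise representation of Lemma~\ref{lem:auc-q-form}. Bilinearity matches each cross term $q(U_l,U_k)$ or $q(U_k,U_l)$ with its two counterparts in $S'$, so the single new ordered pair $(U_k',U_k'')$ is the whole difference.

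The paper's computation is self-contained and needs only the definition of the AUC. Yours depends on Lemma~\ref{lem:auc-q-form}, whose proof is the same kind of algebra, but it makes clear why the answer is a single $q$-term: the AUC only sees ordered pairs of distinct parts. It also yields the swap identity of Lemma~\ref{lem:swap-gain} in one line, since swapping adjacent parts just replaces $\frac12 q(U_k,U_{k+1})$ by $\frac12 q(U_{k+1},U_k)$. Your handling of the consequences (the sign of $q$ for non-empty sets, and antisymmetry for the merge direction) matches the paper's.
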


\begin{proof}
    Since all sets except for $U_k$, $U_k'$, and $U_k''$
    are identical in $S$ and $S'$, and share the same cumulative fees from their predecessors, we have
    \begin{align} \label{eq:AS'-AS}
    \begin{split}
        A(S') -& A(S)
        = s(U_k') \!\left( f(U^{k-1}) + \tfrac{1}{2} f(U_k') \right) \\ 
        &+ s(U_k'') \!\left( f(U^{k-1}) + f(U_k') + \tfrac{1}{2} f(U_k'') \right) \\
        &\quad - s(U_k) \!\left( f(U^{k-1}) + \tfrac{1}{2} f(U_k) \right). 
    \end{split}
    \end{align}
    Substituting $s(U_k) = s(U_k') + s(U_k'')$ and $f(U_k) = f(U_k') + f(U_k'')$ into~\eqref{eq:AS'-AS} yields
    \begin{align}
        &A(S') - A(S) \notag \\
        &= s(U_k'') f(U_k') - \tfrac{1}{2} s(U_k') f(U_k'') - \tfrac{1}{2} s(U_k'') f(U_k') \\
        &= \frac{1}{2} \bigl( f(U_k') s(U_k'') - f(U_k'') s(U_k') \bigr) \\
        &= \frac{1}{2}\, q(U_k', U_k'').
    \end{align}
    For a split where 
    $U_k' \succ U_k''$, 
    according to Definition~\ref{def:q-function}, it holds that $q(U_k', U_k'') > 0$, hence the AUC increases by $\frac{1}{2}\,q(U_k', U_k'') > 0$. For a merge where 
    $U_k'' \succ U_k'$, 
    we transition from the sequence $S'$ to $S$, so the gain is $A(S) - A(S') = -\frac{1}{2}\,q(U_k', U_k'') = \frac{1}{2}\,q(U_k'', U_k') > 0$. If $r(U_k'') = r(U_k')$, merging them results in no change in the AUC since  $\frac{1}{2}\,q(U_k'', U_k') = 0$.
\end{proof}

\begin{lemma}[area gain from swapping]
\label{lem:swap-gain}
    Let $S = (U_1, \dots, U_m)$ be a partition of $V$. Suppose $U_{k+1} \succ U_k$ for some $k \in \{1, \dots, m-1\}$. Let $S'$ be the sequence obtained by swapping $U_k$ and $U_{k+1}$:
    \begin{align}
        S' = (U_1, \dots, U_{k-1},\, U_{k+1},\, U_k,\, U_{k+2}, \dots, U_m).
    \end{align}
    Then, the AUC strictly increases by:
    \begin{equation}
        A(S') - A(S) = q(U_{k+1}, U_k) > 0.
    \end{equation}
\end{lemma}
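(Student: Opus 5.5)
The plan is to exploit the $q$-form of the AUC in Lemma~\ref{lem:auc-q-form}. Both $S$ and $S'$ are partitions of the same set $V$, so the constant term $\frac12 f(V)s(V)$ is identical in $A(S)$ and $A(S')$. The difference therefore comes only from the pairwise sum $\frac12\sum_{l<k'} q(\cdot,\cdot)$ taken over ordered pairs of chunks.

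The key step is to compare the two sums pair by pair.

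\begin{itemize}
\item Any pair of chunks neither of which is $U_k$ or $U_{k+1}$ keeps its relative order, so it contributes the same term to both sums.
\item A pair consisting of $U_k$ (or $U_{k+1}$) and some $U_j$ with $j\notin\{k,k+1\}$ also keeps its relative order. If $j<k$, then $U_j$ precedes both in $S$ and in $S'$; if $j>k+1$, it follows both. These terms therefore cancel as well.
\item The only pair whose order changes is $(U_k,U_{k+1})$. It contributes $q(U_k,U_{k+1})$ in $S$ and $q(U_{k+1},U_k)$ in $S'$.
\end{itemize}

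Using antisymmetry of $q$, this gives
\[
A(S')-A(S)=\tfrac12\bigl(q(U_{k+1},U_k)-q(U_k,U_{k+1})\bigr)=q(U_{k+1},U_k).
\]
Strict positivity follows from the remark after Definition~\ref{def:q-function}. Both sets are non-empty, since a partition consists of non-empty sets, and $U_{k+1}\succ U_k$ means $r(U_{k+1})>r(U_k)$, so $q(U_{k+1},U_k)>0$.

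There is no real obstacle here. The only care needed is the bookkeeping showing that every cross pair with an outside chunk keeps its orientation, which holds because the swap is between adjacent positions. As a sanity check, or as an alternative route, one can compute directly from~\eqref{eq:AUmsum}, in the style of the proof of Lemma~\ref{lem:structural-gain}. Only the terms for positions $k$ and $k+1$ change, and they differ by $s(U_k)f(U_{k+1})-s(U_{k+1})f(U_k)=q(U_{k+1},U_k)$.
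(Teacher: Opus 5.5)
Your proof is correct, but it takes a different route from the paper's. The paper does not use Lemma~\ref{lem:auc-q-form} here. It introduces the merged partition $R=(U_1,\dots,U_{k-1},U_k\cup U_{k+1},U_{k+2},\dots,U_m)$ and applies Lemma~\ref{lem:structural-gain} twice. Splitting $U_k\cup U_{k+1}$ as $(U_k,U_{k+1})$ gives $S$, and splitting it as $(U_{k+1},U_k)$ gives $S'$. Hence $A(S')-A(S)=\bigl(A(S')-A(R)\bigr)-\bigl(A(S)-A(R)\bigr)=\tfrac12 q(U_{k+1},U_k)-\tfrac12 q(U_k,U_{k+1})$. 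You reach the same expression by reading off which pairwise terms of the $q$-sum change.

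The paper's version needs no bookkeeping about the remaining chunks, because Lemma~\ref{lem:structural-gain} has already absorbed it. It also phrases a swap as a merge followed by a split, the same vocabulary used in the restoration phase of Theorem~\ref{thm:strict-improvement}. Your version exposes the underlying reason: the AUC depends only on the relative order of each pair of chunks. It therefore generalizes at no cost. For any reordering of the chunks, $A(S')-A(S)$ equals the sum of $q(Y,X)$ over all pairs whose order is inverted, where $Y$ is the chunk that now comes first.

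By bilinearity (Lemma~\ref{lem:q-bilinearity}), the same viewpoint recovers Lemma~\ref{lem:structural-gain} itself. Splitting a chunk leaves every cross term with the other chunks unchanged and adds only $\tfrac12 q(U_k',U_k'')$. Your fallback computation from \eqref{eq:AUmsum} is also correct, because $f(U^{j-1})$ is unchanged for every $j>k+1$.
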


\begin{proof}
    We prove this result by using Lemma~\ref{lem:structural-gain} twice. Specifically, $S'$ is obtained from $S$ by first merging $U_k$ and $U_{k+1}$ to obtain $R$ and then splitting their union in $R$ to $U_{k+1}$ followed by $U_k$. We have
    \begin{align}
        A(S') - A(S)
        &= (A(S')-A(R)) - (A(S)-A(R)) \\
        &= \frac12 q(U_{k+1},U_k) - \frac12 q(U_k,U_{k+1}) \\
        &= q(U_{k+1},U_k)
    \end{align}
    where we have also used the antisymmetry of $q(\cdot,\cdot)$ by Definition~\ref{def:q-function}.
\end{proof}

\begin{theorem}[strict improvement]
\label{thm:strict-improvement}
    Let $L$ be a valid linearization. Let $L'$ be the linearization resulting from an improvement step on a chunk $U \in L$. If a self-merge does not occur (i.e., $L' \neq L$), then the AUC strictly increases:
    \begin{equation}
    \label{eq:theorem-strict}
        A(L') > A(L).
    \end{equation}
\end{theorem}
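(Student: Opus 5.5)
We track the AUC through the improvement step, decomposing it into elementary operations on an ordered sequence of chunks and applying Lemmas~\ref{lem:structural-gain} and~\ref{lem:swap-gain} to each. Start from $L=(U_1,\dots,U_m)$ with the improvement step acting on $U=U_k$. Since no self-merge occurs, the step proceeds to Restoration.

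The first elementary operation is the split. Replace $U_k$ by the adjacent pair $(H_p,H_c)$ in that order, giving the partition $S_0$. Because the step is applied only when $H_p \succ H_c$, Lemma~\ref{lem:structural-gain} gives $A(S_0)-A(L)=\tfrac12 q(H_p,H_c)>0$. Note that $S_0$ need not be a linearization, since $H_p$ may be out of sorted order, but the AUC formula~\eqref{eq:AUmsum} applies to any partition.

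Next we show that every subsequent operation leading from $S_0$ to $L'$ weakly increases the AUC. We model \textsc{MergeUpwards} on $H_p$ as the ``bubbling up'' process described before Lemma~\ref{lem:structural-gain}. While the current parent-side chunk $X$ sits immediately after a chunk $Y$ with $X\succ Y$, there are two cases.

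\begin{itemize}
\item If $X$ has no dependency on $Y$, swap them; Lemma~\ref{lem:swap-gain} gives a gain of $q(X,Y)>0$.
\item If $X$ depends on $Y$ through an inactive edge, merge them; Lemma~\ref{lem:structural-gain} gives a gain of $\tfrac12 q(X,Y)>0$.
\end{itemize}

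In both cases the gain is nonnegative. Equal-fee-rate merges selected by \textsc{MergeStep} (with $H_p\preceq H$ allowing equality) contribute zero by the last clause of Lemma~\ref{lem:structural-gain}. When such an equal-rate parent is not adjacent, we first swap intervening chunks. Any intervening chunk $Z$ strictly between the merge partner and $X$ cannot be an ancestor of $X$ with lower rate (else it would be a better merge candidate) and can be moved past $X$, or past the partner, using swaps whose gain is nonnegative. The child side is handled symmetrically by \textsc{MergeDownwards}, with $H_c$ bubbling down.

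Finally, $L'$ is obtained by sorting the resulting chunks by fee rate. Sorting uses only adjacent swaps of pairs in strict inversion, and each such swap has positive gain by Lemma~\ref{lem:swap-gain}. Chaining all the steps gives
\[A(L')\ge A(S_0)>A(L).\]

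The main obstacle is justifying that the actual merges performed by \textsc{MergeStep}, which pick the lowest-rate eligible parent chunk rather than an adjacent one, can always be realized as a sequence of nonnegative-gain adjacent swaps followed by one adjacent nonnegative-gain merge. The difficulty is that a swap must never place a chunk before one of its own ancestors.

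An alternative that sidesteps this is to compare the diagrams directly. $L'$ is a linearization whose chunk set is obtained from a refinement of $L$'s chunks by merges of non-decreasing rate. I would first try the swap-chaining argument, using the invariant that every chunk strictly between $X$ and its chosen merge partner has rate at least $r(X)$ or is not an ancestor of $X$. If that invariant fails, I would instead express $A$ through Lemma~\ref{lem:auc-q-form} and compare the pairwise $q$ sums directly.
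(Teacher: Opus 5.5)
Your route is the paper's. The split gains $\tfrac12 q(H_p,H_c)>0$ by Lemma~\ref{lem:structural-gain}, and restoration is read as adjacent swaps (gain $q>0$, Lemma~\ref{lem:swap-gain}) and adjacent merges (gain $\tfrac12 q\ge 0$), followed by sorting. The paper's proof simply identifies \textsc{MergeUpwards}/\textsc{MergeDownwards} with such an adjacent bubble process and never discusses the point you flag, so your obstacle is real. However, you leave it open, and the invariant you propose does fail.

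Take $U_{k-2}=P$ and $U_{k-1}=W$ with $r(P)=r(W)=\rho<r(H_p)$, both parent chunks of $X=H_p$. \textsc{MergeStep} breaks this tie at random and may choose $P$. Then $W$ sits strictly between $P$ and $X$, is a parent of $X$, and has rate below $r(X)$, which violates your invariant. Suppose you move $X$ past $W$ (gain $q(X,W)>0$) and merge with $P$. The next call must absorb $W$, which now lies below $X\cup P$, and that merge changes the AUC by $\tfrac12 q(W,X\cup P)<0$; moving $W$ up first costs $q(W,X\cup P)<0$ as well. So along the algorithm's own merge order, a chain of nonnegative steps need not exist.

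Your constraint that no chunk may be placed before one of its ancestors is nevertheless stronger than needed. Lemmas~\ref{lem:structural-gain} and~\ref{lem:swap-gain} hold for arbitrary ordered partitions; all that must be preserved is that every future merge partner lies above the growing chunk.

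To close the gap, keep the following invariant: $X$ is immediately preceded by a subsequence $S$ of $(U_1,\dots,U_{k-1})$, in its original fee-rate-sorted order, and $S$ contains every untouched parent chunk of $X$.
\begin{enumerate}
\item Suppose \textsc{MergeStep} picks $P\in S$ of rate $\rho\le r(X)$. Every parent chunk of $X$ has rate at least $\rho$: eligible ones by minimality, ineligible ones because their rate exceeds $r(X)$. The chunks of $S$ between $P$ and $X$ have rate at most $\rho$.
\item Move $X$ up past those intervening chunks of rate below $\rho$. They cannot be parents of $X$, and each swap gains $q(X,Z)>0$.
\item Move $P$ down past those of rate exactly $\rho$. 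Each swap changes the AUC by $q(Z,P)=0$, even when $Z$ is a child of $P$.
\item Merge the now adjacent pair, with gain $\tfrac12 q(X,P)\ge 0$.
\end{enumerate}
The parents of $P$ precede it in $L$ and stay in $S$. Partner rates are nondecreasing, so the chunks sent below $X$ (rate $<\rho$) never become parents of the growing chunk. Hence the invariant persists, and \textsc{MergeDownwards} is handled symmetrically.

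Alternatively, you can show that the set produced by \textsc{MergeUpwards} does not depend on tie-breaking, since all parents of the current minimal rate are absorbed before any parent of higher rate. You may then analyze closest-first tie-breaking, under which your adjacent chain stays dependency-valid throughout.
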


\begin{proof}
   The improvement step consists of a split phase followed by a resolution phase.
    
    First, in the split phase, the chunk $U$ is replaced by the subsequence $(H_p, H_c)$ with 
    $H_p \succ H_c$. 
    By Lemma~\ref{lem:structural-gain}, this operation strictly increases the AUC by $\frac{1}{2}q(H_p, H_c) > 0$.
    
    Second, in the resolution phase (specifically the restoration case, as no self-merge occurred), the algorithm iteratively restores the sorted order of closures. This process involves examining adjacent chunks $(A, B)$ that violate the sorting condition (i.e., 
    $B \succ A$
    ) and applying one of two operations:
    \begin{itemize}
        \item \textbf{Swapping:} If $A$ and $B$ are independent (neither depends on the other), they are swapped to $(B, A)$. By Lemma~\ref{lem:swap-gain}, this strictly increases the AUC by $q(B, A) > 0$.
        \item \textbf{Merging:} If they cannot be swapped because $A$ is a parent of $B$, they are merged into $M = A \cup B$. By Lemma~\ref{lem:structural-gain}, this 
        increases the AUC by $\frac{1}{2}q(B, A) \ge 0$ (strictly positive if $r(B) > r(A)$, or zero if $r(B) = r(A)$).
    \end{itemize}
    
    Since the initial split yields a strictly positive gain, and every operation in the resolution phase yields a 
    non-negative gain, the total change in AUC is strictly positive.
\end{proof}

Whenever a self-merge does not occur, the fee rate diagram strictly improves: a high-fee-rate chunk moves up, and a low-fee-rate one moves down. Thus, the 
barrier to guaranteed termination is the possibility of indefinite self-merging following a split.

When multiple split options are available during an improvement step, a natural question is how to choose among them. 
Several strategies are possible. One deterministic heuristic, the \emph{maximum-$q$} rule, selects the split maximizing $q$, which corresponds to the largest derivative of the LP objective with respect to the corresponding slack variable. While this heuristic tends to reduce the number of iterations in typical cases, it does not prevent cycling and can worsen adversarial-case performance; we discuss it further in Appendix~\ref{s:sfl-max-q}. In Secs.~\ref{ss:sfl-termination-bland} and~\ref{ss:randomized-sfl}, we introduce alternative split selection rules that provide formal termination guarantees.

\subsection{SFL Algorithm with Guaranteed Termination}
\label{ss:sfl-termination-bland}

\begin{algorithm}
\caption{SFL with Guaranteed Termination}
\label{alg:sfl-deterministic}
\begin{algorithmic}[1]
\STATE \textbf{Procedure:} \textsc{Improve}($(p,c)$, $A$, $\idx$):
\STATE $\mathcal{E}_{\text{remerge}} \gets \{(u,v) \in E \setminus A \mid u \in H_c,\; v \in H_p\}$ \label{alg-line:det-remerge-set}
\STATE $A \gets A \setminus \{(p,c)\}$ \label{alg-line:det-deactivate}
\IF{$\mathcal{E}_{\text{remerge}} \neq \emptyset$}
    \STATE $A \gets A \cup \{\arg\min_{e' \in \mathcal{E}_{\text{remerge}}} \idx(e')\}$ \label{alg-line:merge-rule3}
\ELSE
    \STATE $A \gets \textsc{MergeUpwards}(p, A)$ \label{alg-line:merge-rule1}
    \STATE $A \gets \textsc{MergeDownwards}(c, A)$ \label{alg-line:merge-rule2}
\ENDIF
\RETURN $A$

\vspace{0.5em}
\STATE \textbf{Procedure:} \textsc{DeterministicSFL}(Mempool $(V,E,f,s)$, optional valid ordering $\sigma$):
\STATE Assign unique indices $\iota(e) \in \{1,\dots,|E|\}$ for $\forall e \in E$ \label{alg-line:det-indices}
\STATE $(\cdot,\, \cdot,\, A) \gets \textsc{Linearize}((V,E,f,s),\, \sigma)$ \label{alg-line:det-linearize}
\REPEAT
    \STATE $\mathcal{E}_{\text{split}} \gets \{(p,c) \in A \mid H_p \succ H_c\}$ \label{alg-line:det-split-set}
    \IF{$\mathcal{E}_{\text{split}} \neq \emptyset$}
        \STATE $A \gets \textsc{Improve}(\arg\min_{e \in \mathcal{E}_{\text{split}}} \idx(e),\, A, \idx)$ \label{alg-line:split-rule}
    \ENDIF
\UNTIL{$\mathcal{E}_{\text{split}} = \emptyset$} \label{alg-line:det-until}
\STATE $(T_1, \dots, T_m) \gets$ active trees of $(V,A)$, sorted by descending fee rate, ties broken arbitrarily \label{alg-line:det-sort}
\STATE $\sigma_{\text{out}} \gets$ a valid ordering induced by $(T_1, \dots, T_m)$
\STATE \textbf{Return:} $(T_1, \dots, T_m),\, \sigma_{\text{out}}, A$
\end{algorithmic}
\end{algorithm}

An algorithm %
constructed from the operations in Algorithm~\ref{alg:sfl-refined} may pivot indefinitely among %
equivalent suboptimal forests without meaningful %
improvement in the AUC. To prevent such cycling, we adopt a tie-breaking strategy inspired by Bland's rule for the simplex algorithm~\cite{bland1977new}. The rule enforces a strict lexicographic ordering: whenever multiple valid candidates exist for a self-merge or split operation, the algorithm selects the edge with the \emph{lowest index}. Specifically, we assign a unique, fixed index $\idx (e)$ to every dependency edge $e\in E$ to enable lexicographical tie-breaking in edge comparisons. We present the complete SFL procedure with this indexing convention in Algorithm~\ref{alg:sfl-deterministic}. In the remainder of this subsection, we prove Algorithm~\ref{alg:sfl-deterministic} is guaranteed to terminate at the optimal linearization. 

For a mempool $(V, E, f, s)$, let $U \subseteq V$, and let $F = \{(u,v) \in E \mid u \in U, v \in U\}$ denote the set of dependencies internal to $U$. Assume $(U, F)$ is weakly connected. An active tree that spans all vertices of $U$ can be represented by 
$T=(U,E_T)$, where $E_T \subseteq F$; we call $T$ an \emph{active tree of $(U, F)$}. %

\begin{definition}[tree flow]
    Let $T=(U,E_T)$ be 
    an active tree of a connected graph $(U,F)$ and 
    $e = (p, c) \in E_T$.
    Let $P_e^T%
    $ %
    denote the parent tree of $(p, c)$ in $T$.
    The flow on $e$ with respect to $T$ is denoted as    
    \begin{align}
        x_e^T = 
        q(P_e^T,U)
    \end{align}
\end{definition}

By Definition~\ref{def:q-function}, $q(P_e^T, U) > 0$ if and only if $P_e^T \succ U$, which is equivalent to $P_e^T \succ U \setminus P_e^T$. Thus, $q(P_e^T, U)$ gauges the ``split potential'' across edge $e$: the tree flow of an edge indicates the result of fee rate comparisons for potentially splitting across it.

Given $S \subseteq U$, %
the boundary sets of $S$ in the active tree $T =(U, E_T)$ 
of $(U,F)$ are denoted as:
    \begin{align}
        \delta_{out}^{T}(S) &= \{ (u,v) \in E_T 
        \mid u \in S, v \notin S \}, \\
        \delta_{in}^{T}(S) &= \{ (u,v) \in E_T
        \mid u \notin S, v \in S \}.
    \end{align}

\begin{lemma}[conservation of potential]
    \label{lem:flow-conservation}
    For any $S \subseteq U$,
    \begin{align}
        q(S, U) = \sum_{e \in \delta_{out}^{T}(S)} x_e^T - \sum_{e \in \delta_{in}^{T}(S)} x_e^T.
    \end{align}
\end{lemma}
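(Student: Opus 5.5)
We prove the identity by induction on $|S|$ via vertex decomposition, using the bilinearity of $q$ (Lemma~\ref{lem:q-bilinearity}). Both sides are additive over disjoint unions of $S$. The left side is additive by~\eqref{eq:qabh}. For the right side, define for each vertex $v\in U$ the net outflow
\begin{align}
    \phi(v) = \sum_{e\in\delta_{out}^T(\{v\})} x_e^T - \sum_{e\in\delta_{in}^T(\{v\})} x_e^T .
\end{align}
Take any $S$ and sum $\phi(v)$ over $v\in S$. An edge with both endpoints in $S$ appears once with $+$ (at its tail) and once with $-$ (at its head), so it cancels. Edges leaving or entering $S$ survive with the correct sign. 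Hence the right side equals $\sum_{v\in S}\phi(v)$, and it suffices to prove the identity for singletons:
\begin{align}
    \phi(v) = q(\{v\},U).
\end{align}
The cases $S=\emptyset$ and $S=U$ (which gives $q(U,U)=0$) are then consistent automatically.

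For a singleton $\{v\}$, we use the structure of the tree $T$. Removing $v$ splits $T$ into components, one per incident tree edge. There are two kinds of incident edges:
\begin{itemize}
    \item For an outgoing edge $e=(v,c)$, the parent tree $P_e^T$ is $U$ minus the component $K_e$ containing $c$. So $x_e^T = q(U\setminus K_e,U) = -q(K_e,U)$, using $q(U,U)=0$ and bilinearity.
    \item For an incoming edge $e=(p,v)$, the parent tree $P_e^T$ is exactly the component $K_e$ containing $p$. So $x_e^T = q(K_e,U)$.
\end{itemize}
Substituting into $\phi(v)$, both kinds of edge contribute $-q(K_e,U)$. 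Therefore
\begin{align}
    \phi(v) = -\sum_{e} q(K_e,U) = -q(U\setminus\{v\},U),
\end{align}
because the components $K_e$ partition $U\setminus\{v\}$. Bilinearity again gives
\begin{align}
    -q(U\setminus\{v\},U) = -\bigl(q(U,U) - q(\{v\},U)\bigr) = q(\{v\},U),
\end{align}
which is the singleton identity.

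The main point needing care is the bookkeeping of which side of each edge is the parent tree. Orientation matters because $P_e^T$ is always the side containing the tail, not necessarily the side containing $v$. One must also verify the cancellation of internal edges in the summation argument. Everything else follows mechanically from Lemma~\ref{lem:q-bilinearity} and the fact that removing a vertex from a tree partitions the remaining vertices by incident edges.
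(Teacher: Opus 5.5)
Your proof is correct and matches the paper's argument step for step: you handle the singleton case by removing $v$ from $T$, note that each incident edge contributes $-q(K_e,U)$ whatever its orientation, and then extend to general $S$ by summing over $v\in S$ and cancelling the edges internal to $S$. The only cosmetic difference is that you announce an induction on $|S|$, but what you actually use (adequately) is direct additivity, as the paper does.
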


\begin{proof}
We first prove the special case $S = \{v\}$ for a single vertex $v \in U$. 
Removing $v$ and its incident edges in $T$ partitions $U \setminus \{v\}$ into disjoint connected components $\{C_1, \dots, C_k\}$, where each component $C_i$ connects to $v$ via exactly one edge $e_i \in E_T$. For each $e_i$:
\begin{itemize}
    \item If $e_i \in \delta_{in}^T(\{v\})$, then $P_{e_i}^T = C_i$, so $x_{e_i}^T = q(C_i, U)$.
    \item If $e_i \in \delta_{out}^T(\{v\})$, then $P_{e_i}^T = U \setminus C_i$, so by Lemma~\ref{lem:q-bilinearity},
    \begin{align}
        x_{e_i}^T = q(U \setminus C_i, U) = q(U, U) - q(C_i, U) = -q(C_i, U).
    \end{align}
\end{itemize}
In both cases, the contribution of $e_i$ to the net flow leaving $\{v\}$ equals $-q(C_i, U)$. Hence, using Lemma~\ref{lem:q-bilinearity},
\begin{align}
    \sum_{e \in \delta_{out}^T(\{v\})} x_e^T - \sum_{e \in \delta_{in}^T(\{v\})} x_e^T
    &= -\sum_{i=1}^k q(C_i, U) \\
    &= -q\left(\bigcup_{i=1}^k C_i, U\right) \\
    &= -q(U \setminus \{v\}, U) \\
    &= q(\{v\}, U). \label{eq:flow-cons-single}
\end{align}

For arbitrary $S \subseteq U$, by Lemma~\ref{lem:q-bilinearity} and~\eqref{eq:flow-cons-single},
\begin{align}
    q(S, U)
    &= \sum_{v \in S} q(\{v\}, U) \\
    &= \sum_{v \in S} \left( \sum_{e \in \delta_{out}^T(\{v\})} x_e^T - \sum_{e \in \delta_{in}^T(\{v\})} x_e^T \right) \\
    &= \sum_{\substack{v \in S,\, (v,u) \in E_T}} x_{(v,u)}^T - \sum_{\substack{v \in S,\, (u,v) \in E_T}} x_{(u,v)}^T \label{eq:flow-cons-b4cancel} \\
    &= \sum_{e \in \delta_{out}^T(S)} x_e^T - \sum_{e \in \delta_{in}^T(S)} x_e^T \label{eq:flow-cons-final}
\end{align}
where~\eqref{eq:flow-cons-final} follows from~\eqref{eq:flow-cons-b4cancel} because any edge $e = (u, w) \in E_T$ with both endpoints in $S$ appears once as outgoing (from $u$) and once as incoming (at $w$), so its contributions cancel; only edges with exactly one endpoint in $S$ survive.
\end{proof}

\begin{theorem} \label{thm:terminate}
    The Refined SFL, Algorithm~\ref{alg:sfl-deterministic}, terminates in a finite number of steps.
\end{theorem}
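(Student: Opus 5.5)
The plan is to combine a global potential argument for steps that change the partition with a Bland-type anti-cycling argument for self-merges. First, the preliminary phases are finite. \textsc{Linearize} only activates dependencies, so it terminates (as in Lemma~\ref{lem:initlinearize-valid}). Each \textsc{MergeUpwards}/\textsc{MergeDownwards} call terminates because every successful \textsc{MergeStep} reduces the number of active trees by one. Hence every call to \textsc{Improve} is finite, and it suffices to bound the number of iterations of the main loop.

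Each iteration either changes the partition into chunks or is a self-merge. By Theorem~\ref{thm:strict-improvement}, every iteration without a self-merge strictly increases the AUC. A self-merge leaves the partition unchanged, so it leaves the AUC unchanged. Since $V$ has finitely many partitions and the AUC never decreases, only finitely many partition-changing iterations occur. It therefore remains to rule out an infinite run of consecutive self-merges.

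During such a run the partition is fixed, and each step replaces one edge of the active tree of a single chunk $U$ by another edge inside $U$. Active trees of $U$ are finite in number, so an infinite run forces some configuration of active trees to repeat. Each chunk evolves independently, so we may restrict to one chunk $U$ whose active tree $T$ returns to an earlier state. I will phrase everything through the tree flows $x_e^T=q(P_e^T,U)$. Since $H_p\cup H_c=U$, the split condition $H_p\succ H_c$ is equivalent to $x_e^T>0$. If the split edge is $g=(p,c)$, the re-merge edge $e'$ goes from $H_c$ to $H_p$. The split rule (line~\ref{alg-line:split-rule}) picks the lowest-index edge with positive flow. The re-merge rule (line~\ref{alg-line:merge-rule3}) picks the lowest-index inactive edge from $H_c$ into $H_p$.

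The main obstacle is the Bland-style contradiction on such a cycle. Call an edge \emph{fickle} if it leaves and enters the tree during the cycle, and let $f$ be the fickle edge of largest index. Non-fickle edges are present in every tree of the cycle or in none. Let $T_1$ be a tree in which $f$ is split. Then $x_f^{T_1}>0$, and every tree edge of $T_1$ of lower index has $x_e^{T_1}\le 0$. Let $T_2$ be the tree in which some edge $g$ is split and $f$ re-enters, and put $S=P_g^{T_2}$. Then $g$ is fickle with $\idx(g)<\idx(f)$, $x_g^{T_2}=q(S,U)>0$, and $f$ goes from $U\setminus S$ into $S$.

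Now apply Lemma~\ref{lem:flow-conservation} to $S$ in $T_1$ and sort the edges of $T_1$ crossing $S$ into four cases.
\begin{itemize}
\item Edges of index larger than $\idx(f)$ are non-fickle, so they lie in $T_2$. In $T_2$ only $g$ crosses $S$, so no such edge crosses $S$.
\item Lower-index edges entering $S$ would be inactive at the time $T_2$ is split (again, only $g$ crosses $S$ there). They would therefore be re-merge candidates preferred over $f$, which contradicts the choice of $f$; so there are none.
\item Lower-index edges leaving $S$ have flow $\le 0$ and contribute $\le 0$.
\item The edge $f$ itself enters $S$ with $x_f^{T_1}>0$ and contributes strictly negatively.
\end{itemize}
Hence $q(S,U)<0$, contradicting $q(S,U)>0$. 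So no active-tree configuration repeats, and the algorithm terminates. The delicate points to check carefully are the two index comparisons: that $\idx(g)<\idx(f)$, and that $f$ was the minimal re-merge candidate while every lower-index entering edge of $T_1$ was inactive at that moment.
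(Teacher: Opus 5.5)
Your proposal is correct and follows essentially the same route as the paper's proof. AUC monotonicity (Theorem~\ref{thm:strict-improvement}) bounds the partition-changing steps, and a Bland-type argument on the maximum-index toggling edge rules out a cycle of self-merges by evaluating $q(S,U)$ with Lemma~\ref{lem:flow-conservation} in the tree where that edge was split; your four-way case split of the crossing edges is just a slightly more explicit version of the paper's three groups.
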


\begin{proof}
    Since all sub-routines' loops in Algorithm~\ref{alg:sfl-deterministic} terminate in at most $|V|$ %
    steps, it suffices to show that the repeat-loop in the main algorithm terminates in finite steps, i.e., \textsc{Improve}(.) in line~\ref{alg-line:split-rule} is called a finite number of times.

    Each improvement step consists of one split (deactivating $e_{out}$) and possibly some %
    merges (activating edges). We distinguish two improvement step types based on their effect on the linearization $L$:

    \textbf{Type 1: Linearization change.} %
    The improvement step results in a new linearization $L' \ne L$. By Theorem~\ref{thm:strict-improvement},
    the 
    total AUC %
    strictly increases. %
    Since the number of possible linearizations is finite, there are at most a finite number of
    type~1 improvement steps. %

    \textbf{Type 2: Basis exchange.} 
    The split of some active tree with vertex set $U$ in the improvement step creates parent and child trees, but a subsequent merge immediately reconnects them via some edge $e_{in}$ without altering $U$. The linearization remains $L$, but the active tree changes. 
    Let $F = \{(u,v) \in E \mid u \in U, v \in U\}$ denote the internal dependencies of $U$, and let $T_t=(U,E_t)$ and $T_{t+1}=(U,E_{t+1})$ denote the active trees of $(U, F)$ 
    at the start of
    improvement steps $t$ and $t+1$, respectively, where we abbreviate $E_{T_t}$ as $E_t$. Then $E_{t+1} = (E_{t} \setminus \{e_{out}\}) \cup \{e_{in}\}$. 
    We similarly write $x_e^{(t)}$ to denote $x_e^{T_t}$, and $\delta_{in}^{(t)}(\cdot)$ and $\delta_{out}^{(t)}(\cdot)$ to denote $\delta_{in}^{T_t}(\cdot)$ and $\delta_{out}^{T_t}(\cdot)$, respectively.

    Assume, for contradiction, that the algorithm enters an infinite cycle of type 2 improvement steps. A cycle is defined as a sequence of improvement steps where the algorithm transitions from an active forest $F_1$ through a sequence of states and returns to $F_1$ without altering the linearization. %

    Let $\mathcal{C} \subseteq F$ be the set of edges that toggle (enter or leave) the active tree of $(U, F)$ during this cycle. Let $e_{\max}$ be the highest-index edge in %
    $\mathcal{C}$. %

    \textbf{Analysis at Improvement Step $t$:} 
    Consider the improvement step $t$ in the cycle where $e_{\max}$ is chosen to split $U$. %
    According to the split rule (line~\ref{alg-line:split-rule}), $e_{\max}$ was chosen because it was the valid split candidate with the \emph{minimum index} and $x_{e_{\max}}^{(t)} > 0$. Therefore, %
    every other toggling edge $f \in \mathcal{C} \cap E_t$ satisfies $\idx (f) < \idx (e_{\max})$ and $x_f^{(t)} \le 0$ (if $x_f^{(t)}>0$, $f$ would have been chosen). 

    \textbf{Analysis at Improvement Step $s$:} 
    Consider the improvement step $s > t$ in the cycle where, immediately after some edge $g \in 
    E_s$ is split, dividing $U$ into $S=H_p^{(s)}$ (parent-side) and $H_c^{(s)}$ (child-side),
    $e_{\max}$ is chosen to merge %
    them back into $U$. Since $g$ was a valid split, 
    $q(S, U) = x_g^{(s)} > 0$.
    According to the remerge rule (line~\ref{alg-line:merge-rule3}), $e_{\max}$ is chosen because it is the candidate with the \emph{minimum index} crossing from $H_c^{(s)}$ to $H_p^{(s)}$. Therefore, no other toggling edge $f \in \mathcal{C} 
    \setminus 
    E_s
    $ crosses the boundary of $
    H_p^{(s)}$ in the merging direction (otherwise its lower index would trigger selection over $e_{\max}$).

    \textbf{The Contradiction:}
    We evaluate $q(S, U)$ for the set $S = H_p^{(s)}$ using the flows from tree $T_t
    $. By Lemma~\ref{lem:flow-conservation}, $q(S, U)$ equals the net flow out of $S$ in $T_t$:
    \begin{align} \label{eq:DUSx-x}
         q(S, U) = \sum_{e \in \delta_{out}^{(t)}(S)} x_e^{(t)} - \sum_{e \in \delta_{in}^{(t)}(S)} x_e^{(t)}.
    \end{align}
    We know 
    $q(S, U) = x_g^{(s)} > 0$. We examine the edges contributing to the right-hand side of~\eqref{eq:DUSx-x}. These must be edges in 
    $E_t$
    that cross $S$. Since 
    $E_s$
    and 
    $E_t$
    differ only by edges in $\mathcal{C}$ (i.e., edges that toggle), and the only edge in 
    $E_s$
    crossing $S$ is $g$, any edge in 
    $E_t$
    crossing $S$ must be either $g$ (if $g \in 
    E_t$) or an edge in $\mathcal{C}$. 
    All edges that contribute to~\eqref{eq:DUSx-x} must be in $
    E_t
    \cap \mathcal{C}$ %
    and cross $S$. They must be in one of the following three groups:
    
    1. %
    Edge $e_{\max} \in 
    E_t
    \cap \mathcal{C}$. At improvement step $s$, we identified that $e_{\max}=(u,v)$ connects $u \in H_c^{(s)}$ to $v \in H_p^{(s)}=S$. Thus, %
    $e_{\max} \in \delta_{in}^{(t)}(S)$. Its contribution to~\eqref{eq:DUSx-x} %
    is $-x_{e_{\max}}^{(t)} < 0$. 

    2. Edge $g$ contributes to~\eqref{eq:DUSx-x} if $g \in 
    E_t$. %
    As $g$ toggles, we have %
    $g \in \mathcal{C}$, %
    so $\idx (g) < \idx (e_{\max})$. By the logic at improvement step $t$, $x_g^{(t)} \le 0$. 
    Since $g\in\delta_{out}^{(t)}(S)$, %
    its contribution to~\eqref{eq:DUSx-x} is not positive.

    3. Any other edge $f \in 
    E_t
    \cap \mathcal{C}$ (with $f \neq e_{\max}$ and $f \neq g$): 
    Since $f$ crosses $S$ and $g$ is the only edge of $E_s$ crossing $S$, $f \notin E_s$. By the merge rule at step $s$, if $f$ also pointed into $S$, it would be a valid merge candidate and would have been chosen over $e_{\max}$ since $\idx(f) < \idx(e_{\max})$; hence $f \in \delta_{out}^{(t)}(S)$.
    We also know $x_f^{(t)} \le 0$. Its contribution to~\eqref{eq:DUSx-x} is $x_f^{(t)} \le 0$. %

    Summing the contributions yield a nonpositive right hand side of~\eqref{eq:DUSx-x}.
    This contradicts the fact that 
    $q(S, U) = x_g^{(s)} > 0$.
    This proves that the maximal-index edge $e_{\max}$ cannot satisfy the conditions to both leave at $t$ and enter at $s > t$ under the minimum-index rule. Consequently, the cycle cannot exist.
\end{proof}

By Theorem~\ref{th:sfl-correctness} and Theorem~\ref{thm:terminate}, the refined 
SFL algorithm (Algorithm~\ref{alg:sfl-deterministic})
is guaranteed to terminate and puts out an optimal linearization.

\subsection{Randomizing the SFL Algorithm}
\label{ss:randomized-sfl}

While the deterministic variant of Sec.~\ref{ss:sfl-termination-bland} provides guaranteed termination, its reliance on fixed tie-breaking rules makes its behavior fully predictable. In order to achieve the desirable algorithmic properties %
discussed in Sec.~\ref{ss:desirable-properties}, especially robustness and fairness, 
we introduce randomization into the SFL algorithm. The key design principle is that every choice point in the algorithm, including which chunks to merge, which dependency to activate or deactivate, and the order in which to process chunks, should incorporate uniform randomness.

\begin{algorithm}
\caption{Randomized SFL}
\label{alg:sfl-randomized}
\begin{algorithmic}[1]
\STATE \textbf{Procedure:} \textsc{Improve}($(p,c)$, $A$):
\STATE $\mathcal{E}_{\text{remerge}} \gets \{(u,v) \in E \setminus A \mid u \in H_c,\; v \in H_p\}$ \label{alg-line:rand-remerge-set}
\STATE $A \gets A \setminus \{(p,c)\}$ \label{alg-line:rand-deactivate}
\IF{$\mathcal{E}_{\text{remerge}} \neq \emptyset$}
    \STATE $e' \xleftarrow{R} \mathcal{E}_{\text{remerge}}$;\quad $A \gets A \cup \{e'\}$ \label{alg-line:rand-merge-rule3}
\ELSE
    \STATE $A \gets \textsc{MergeUpwards}(p, A)$ \label{alg-line:rand-merge-rule1}
    \STATE $A \gets \textsc{MergeDownwards}(c, A)$ \label{alg-line:rand-merge-rule2}
\ENDIF
\RETURN $A$

\vspace{0.5em}
\STATE \textbf{Procedure:} \textsc{RandomizedSFL}(Mempool $(V,E,f,s)$, optional valid ordering $\sigma$):
\STATE $(\cdot,\, \cdot,\, A) \gets \textsc{Linearize}((V,E,f,s),\, \sigma)$ \label{alg-line:rand-linearize}
\STATE $Q \gets$ random permutation of active trees of $(V,A)$ \label{alg-line:rand-queue-init}
\WHILE{$Q \neq \emptyset$}
    \STATE $T \gets$ first active tree in $Q$
    \label{alg-line:rand-pop}
    \IF{$T \in (V,A)$}
        \STATE $\mathcal{E}_{\text{split}} \gets \{(p,c) \in A \mid p,c \in T,\; H_p \succ H_c\}$ \label{alg-line:rand-split-set}
        \IF{$\mathcal{E}_{\text{split}} \neq \emptyset$}
            \STATE $e^* \xleftarrow{R} \mathcal{E}_{\text{split}}$
            \STATE $A \gets \textsc{Improve}(e^*, A)$ \label{alg-line:rand-split-rule}
            \STATE $Q \gets$ random permutation of active trees of $(V,A)$\label{alg-line:rand-queue-reset}
        \ENDIF
    \ENDIF
\ENDWHILE
\STATE $(T_1, \dots, T_m) \gets$ active trees of $(V,A)$, sorted by descending fee rate, ties broken uniformly at random \label{alg-line:rand-sort}
\STATE $\sigma_{\text{out}} \gets$ a valid ordering induced by $(T_1, \dots, T_m)$
\STATE \textbf{Return:} $(T_1, \dots, T_m),\, \sigma_{\text{out}}, A$
\end{algorithmic}
\end{algorithm}

Specifically, the randomized SFL algorithm incorporates the following elements:
\begin{itemize}
    \item \textbf{Random merge target selection:} When merging a chunk upwards or downwards, if multiple candidate chunks have the same fee rate (i.e., are equally valid merge targets), a uniformly random one among them is chosen.
    
    \item \textbf{Random dependency activation:} After selecting two chunks to merge, a uniformly random dependency between them is activated, rather than selecting one by a fixed rule.
    
    \item \textbf{Random chunk processing order:} The queue of chunks to be optimized is initialized in a uniformly random order. This ensures that no particular region of the transaction graph is systematically prioritized over others.
    
    \item \textbf{Random split selection:} When selecting which active dependency to deactivate (split) within a chunk, a uniformly random dependency is chosen among those whose parent fee rate is strictly higher than the child fee rate.
    This ensures that all valid split candidates receive equal attention.

\end{itemize}

Together, these randomized elements ensure that the algorithm avoids predictable behavior and distributes computational work equitably across all transactions, making it resistant to adversarial manipulation. Since non-adversarially-created clusters are almost always linearized optimally in negligible time, the slight overhead of randomization is an acceptable trade-off for robustness. The complete randomized SFL is presented in Algorithm~\ref{alg:sfl-randomized}.

We now prove that the randomized SFL algorithm terminates with probability~1. The proof relies on showing 
that from any state, the deterministic variant (Algorithm~\ref{alg:sfl-deterministic}) provides a finite path to either an unblocked split or termination, and that randomized SFL follows this path with positive probability.

\begin{definition}[forest plateau]
\label{def:forest-plateau}
    Let $L = (U_1, \dots, U_m)$ be a valid linearization of 
    mempool $(V,E,f,s)$. We define the \emph{forest plateau} $\mathcal{F}(L)$ as the set of all valid 
    active forests
    $A \subseteq E$ such that the 
    active trees of
    $(V, A)$ correspond exactly to the chunks $\{U_1, \dots, U_m\}$.
    Any operation that modifies 
    $A$ without changing the linearization $L$ (i.e., a self-merge) constitutes a transition between states within the same plateau $\mathcal{F}(L)$.
\end{definition}

We distinguish two kinds of improvement steps: a \emph{blocked split} occurs when the resolution phase results in an immediate re-merge (leaving the linearization unchanged), and an \emph{unblocked split} occurs otherwise.

\begin{theorem}[almost sure termination]
\label{thm:terminationwp1}
    The randomized SFL algorithm, Algorithm~\ref{alg:sfl-randomized}, terminates with probability 1.
\end{theorem}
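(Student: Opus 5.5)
The argument will bound the number of unblocked splits deterministically, and then show that an infinite run of blocked splits has probability zero.

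\emph{Unblocked splits.} By Theorem~\ref{thm:strict-improvement}, every unblocked split strictly increases the AUC of the current linearization. The set of linearizations of $(V,E,f,s)$ is finite, so the AUC takes finitely many values. Hence, on every sample path, only finitely many unblocked splits occur.

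\emph{Reduction to a single plateau.} It therefore suffices to show that, from any state, the algorithm almost surely leaves the current forest plateau $\mathcal{F}(L)$ of Definition~\ref{def:forest-plateau} or terminates. Within the plateau, every step is a blocked split. Each blocked split acts on one chunk $U$ and replaces its active tree by another spanning tree of $(U,F)$. The states in the plateau are therefore tuples of spanning trees of the chunks, and $\mathcal{F}(L)$ is finite, with at most $|\mathcal{F}(L)|\le 2^{|E|}$ states.

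\emph{A short escape path from every state.} Fix any state $A\in\mathcal{F}(L)$ and an index $\idx$ on $E$. Run Algorithm~\ref{alg:sfl-deterministic} from $A$, taking its split and remerge choices at $A$ rather than from a fresh \textsc{Linearize}. The analysis in the proof of Theorem~\ref{thm:terminate} (Type~2, the $e_{\max}$ argument with Lemma~\ref{lem:flow-conservation}) shows that the minimum-index rule cannot cycle inside the plateau. So within at most $|\mathcal{F}(L)|$ blocked steps it reaches one of two outcomes. Either it reaches a state with no split candidate at all, where randomized SFL terminates once the queue drains, or it performs an unblocked split.

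The one thing to check is that every deterministic choice along this path is available to Algorithm~\ref{alg:sfl-randomized} with positive probability. The Bland path may act on chunks in an arbitrary order, while the randomized algorithm acts on the first live tree of its queue and reshuffles the queue after each improvement. This causes no trouble, for two reasons. First, in a blocked split only the chunk $U$ changes, and blocked steps on different chunks commute. Second, the deterministic argument can be run chunk by chunk: apply the Bland rule restricted to one chunk $U$, whose edges alone determine which splits are available inside $U$. The target event, each of probability bounded below, is the conjunction of three things: the random queue places the chosen chunk first, a probability of at least $1/|V|$; the random split picks the minimum-index candidate in $\mathcal{E}_{\text{split}}$, at least $1/|E|$; and the random remerge picks the minimum-index edge of $\mathcal{E}_{\text{remerge}}$, at least $1/|E|$. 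The remaining randomness in the merge steps of \textsc{Linearize} and restoration is irrelevant, since those steps occur only at unblocked splits, which end the path.

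\emph{Combining the bounds.} Let $N=2^{|E|}$. From any state, the probability of leaving the plateau, or of terminating, within $N$ steps is at least $\epsilon=(|V|\,|E|^2)^{-N}>0$. Since $\epsilon$ does not depend on the starting state, the probability of staying in the plateau for $kN$ steps is at most $(1-\epsilon)^k\to 0$. A union over the finitely many plateaus that can be visited, together with the finite bound on unblocked splits, gives termination with probability~1.

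\emph{Main obstacle.} The hard step is the reduction to the Bland path. The proof of Theorem~\ref{thm:terminate} assumes a global minimum-index choice over all chunks, so I must verify that its no-cycle argument still holds when the splitting chunk is chosen arbitrarily and only the in-chunk choices follow minimum index. I expect it does: the contradiction is derived entirely within a single chunk $U$, with $\mathcal{C}\subseteq F$, so I would restate that argument per chunk to cover this case.
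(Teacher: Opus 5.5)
Your proposal is correct and follows essentially the same route as the paper. Both arguments use finitely many unblocked splits by AUC monotonicity, a finite Bland-rule escape path from every plateau state via Theorem~\ref{thm:terminate}, a per-step reproduction probability of at least $(|V|\,|E|^2)^{-1}$, and geometric decay of the probability of remaining on the plateau. The paper handles the obstacle you flag without any per-chunk restatement: since $Q$ is reshuffled after every \textsc{Improve}, the tree containing the global minimum-index split candidate is the first candidate-bearing tree with probability at least $1/|V|$, so the randomized run can track the global Bland path directly. Your per-chunk version is also valid, because the contradiction in Theorem~\ref{thm:terminate} only uses minimality within the chunk $U$.
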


\begin{proof}
    The set of possible linearizations of a finite mempool $(V,E,f,s)$ is finite, and by Theorem~\ref{thm:strict-improvement}, each unblocked split strictly increases the AUC. Therefore, at most finitely many unblocked splits can occur over the course of the algorithm. It suffices to show that from any 
    linearization, the algorithm performs only finitely many blocked splits before either terminating or reaching a strictly better linearization.

    Let $L_t$ be the linearization at improvement step $t$. We show that there exists a %
    natural number $h$ %
    such that, with probability~1, within $h$ improvement steps the algorithm either terminates or reaches a new linearization $L_{t+h}$ with $A(L_{t+h}) > A(L_t)$.

    Let $\mathcal{F}(L_t)$ denote the (finite) set of active forests whose 
    active trees correspond to the chunks of~$L_t$ (Definition~\ref{def:forest-plateau}). While the algorithm performs only blocked splits (self-merges), it transitions between states within~$\mathcal{F}(L_t)$ without changing~$L_t$. An unblocked split causes a transition to a new linearization $L_{t+h} \neq L_t$ with $A(L_{t+h}) > A(L_t)$ by Theorem~\ref{thm:strict-improvement}.

    Consider any state $A \in \mathcal{F}(L_t)$. The deterministic variant of SFL (Algorithm~\ref{alg:sfl-deterministic}), which uses Bland's rule for all edge selections, terminates in a finite number of steps from any initial valid linearization by Theorem~\ref{thm:terminate}. In particular, starting from $A$, the deterministic algorithm produces a finite sequence of 
    improvement steps (each comprising a split selection and, for a blocked split, the subsequent remerge selection)
    that leads to 
    either an unblocked split or termination (when no valid split candidates remain).
    Let $N(A)$ denotes the number of steps the deterministic algorithm takes 
    from $A$ to reach this outcome.
    Since $\mathcal{F}(L_t)$ is finite, the quantity $N^* = \max_{A \in \mathcal{F}(L_t)} N(A)$ is also finite.

    The randomized algorithm draws its split from the first candidate-bearing tree in $Q$, whereas the deterministic algorithm selects the globally minimum-index candidate, possibly in a different tree. Since $Q$ is a uniformly random permutation of the active trees after every \textsc{Improve} (line~\ref{alg-line:rand-queue-reset}), the deterministic candidate's tree precedes all other candidate-bearing trees with probability at least $1/|V|$; this also holds mid-scan, as conditioned on the popped prefix (which contains no split candidates), the remainder of $Q$ is a uniformly random permutation of the remaining trees. Given the correct tree, the uniform draws from $\mathcal{E}_{\text{split}}$ (line~\ref{alg-line:rand-split-set}) and $\mathcal{E}_{\text{remerge}}$ (line~\ref{alg-line:rand-merge-rule3}) each match the deterministic choice with probability at least $1/|E|$. Hence, each step reproduces the deterministic step with probability at least $(|V|\cdot|E|^2)^{-1}$, and from any $A \in \mathcal{F}(L_t)$, the randomized algorithm follows the full deterministic path with probability at least
    \begin{align}
        \delta = \left(|V|\cdot|E|^{2}\right)^{-N^*} > 0. \label{eq:delta}
    \end{align}

    If the randomized algorithm deviates from this path at any step, it either performs an unblocked split (reaching a strictly better linearization, which is an even more favorable outcome) or arrives at a different state $A' \in \mathcal{F}(L_t)$, from which the same argument applies: the deterministic algorithm provides a path of length at most~$N^*$ to either an unblocked split or termination, and the randomized algorithm follows it with probability at least~$\delta$. 
    Here ``termination'' of the deterministic variant means it reaches a state with $\mathcal{E}_{\text{split}} = \emptyset$, i.e., no active tree admits an improving split. %
    Suppose this coincides with the randomized algorithm's halting condition of an empty queue $Q$ (we shall verify this shortly). In every consecutive window of $N^*$~steps spent on the plateau~$\mathcal{F}(L_t)$, the probability of either escaping the plateau or terminating is at least~$\delta > 0$. The probability of remaining on the plateau without terminating for $k$ such windows is at most $(1 - \delta)^k \to 0$ as $k \to \infty$. Therefore, with probability~$1$, 
    within finitely many steps, $h$, the algorithm either terminates or reaches a new linearization $L_{t+h}$ with $A(L_{t+h}) > A(L_t)$.

    It remains to confirm that the randomized algorithm's queue $Q$ eventually empties. After every successful \textsc{Improve}, $Q$ is reset to all active trees (line~\ref{alg-line:rand-queue-reset}), and a tree is popped without re-enqueueing (line~\ref{alg-line:rand-pop}) exactly when it has $\mathcal{E}_{\text{split}} = \emptyset$. Thus $Q$ becomes empty precisely when every active tree has no split candidate, which is the deterministic termination condition. As shown above, the algorithm almost surely reaches such a state in finitely many steps; there, every remaining tree is popped without re-enqueueing and $Q$ empties. Hence the randomized algorithm terminates with probability~$1$.

    Since the total number of distinct linearizations is finite and the AUC strictly increases with each plateau escape, the algorithm must 
    terminate with probability~$1$.
\end{proof}

\subsection{Granularity via Perturbation}
\label{ss:sfl-granularity}

As outlined in Sec.~\ref{ss:desirable-properties}, a robust linearization algorithm should provide \emph{granularity}: when multiple closures achieve the same maximum fee rate, the algorithm should systematically prefer the smallest among them. To enforce this property, we employ \emph{perturbation}. By adding an infinitesimal, transaction-specific offset to each fee, ordered lexicographically, we ensure that no two distinct subsets of transactions possess the exact same effective fee rate. This approach is inspired by Charnes' lexicographic perturbation method for resolving degeneracy in the simplex algorithm, where infinitesimal objective perturbations are used to impose a strict ordering among otherwise equivalent optima~\cite{charnes1952optimality}. As a result, the selection of the ``best'' closure becomes unique and well defined. Granularity is achieved in a separate minimization phase that runs after SFL.

We assign a unique, fixed integer index $\iota(i) \in \{1, \dots, n\}$ to every transaction $i \in V$. To promote fairness and robustness, these indices are assigned uniformly at random, so that in case of
early termination during the minimization phase, %
all chunks receive equal effort.
We conceptually replace the scalar fee $f_i$ with a polynomial in a symbolic variable $\epsilon$, where $\epsilon$ is strictly positive but infinitesimally small. We define two perturbed fee functions:
\begin{align}
    f_i^+(\epsilon) &= f_i + \epsilon^{\iota(i)}, \\
    f_i^-(\epsilon) &= f_i - \epsilon^{\iota(i)}.
\end{align}
The size $s_i$ remains unchanged. The \emph{perturbed fee rates} of a subset $U$ for the positive and negative cases are given %
by:
\begin{align}
    r_\epsilon^\pm(U) &= \frac{\sum_{i \in U} (f_i \pm \epsilon^{\iota(i)})}{\sum_{i \in U} s_i} = r(U) \pm 
    \frac1{s(U)} \sum_{i \in U} \epsilon^{\iota(i)} .
\end{align}

Comparing two perturbed fee rates is equivalent to a \emph{lexicographical comparison}. We define the strict positive ordering $A >_{\text{lex}}^+ B$ and the strict negative ordering $A >_{\text{lex}}^- B$ as follows:
\begin{enumerate}
    \item If the scalar fee rates differ ($r(A) \ne r(B)$), then both $A >_{\text{lex}}^+ B$ and $A >_{\text{lex}}^- B$ hold if and only if $A\succ B$. %
    \item If $r(A) = r(B)$, we compare the coefficients of the polynomials in $\epsilon$. Let $c_k(U) = \frac{\mathds{1}_{\{k \in U\}}}{s(U)}$ be the absolute coefficient of the $\epsilon^k$ term for a subset $U$. We identify the smallest transaction index $k$ such that $c_k(A) \ne c_k(B)$.
    \begin{itemize}
        \item %
        $A >_{\text{lex}}^+ B$ if and only if $c_k(A) > c_k(B)$.
        \item %
        $A >_{\text{lex}}^- B$ if and only if $c_k(A) < c_k(B)$.
    \end{itemize}
    When comparing disjoint sets $A$ and $B$, this rule simplifies to identifying the transaction $k^* = \min \{i \mid i \in A \cup B\}$.
    For the positive (resp.\ negative) perturbation, the set containing $k^*$ is strictly greater (resp.\ smaller).
\end{enumerate}

\begin{lemma}[Uniqueness]
\label{lem:uniqueness}
    For any two distinct subsets $A, B \subseteq V$, it holds that $r_\epsilon^+(A) \ne r_\epsilon^+(B)$ and $r_\epsilon^-(A) \ne r_\epsilon^-(B)$.
\end{lemma}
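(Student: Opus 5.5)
We treat $r_\epsilon^\pm(U)$ as a formal polynomial in $\epsilon$ (equivalently, compare via the lexicographic rule), so equality $r_\epsilon^+(A)=r_\epsilon^+(B)$ means equality of every coefficient. The plan is to suppose $r_\epsilon^+(A)=r_\epsilon^+(B)$ and deduce $A=B$; the negative case is identical, since $r_\epsilon^-$ differs only by negating all $\epsilon$-coefficients, and coefficient equality is unaffected by a global sign.

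The constant terms give $r(A)=r(B)$. For $k\ge 1$, the coefficient of $\epsilon^k$ in $r_\epsilon^+(U)$ is exactly $c_k(U)=\mathds{1}_{\{k\in U\}}/s(U)$. This holds because the indices $\iota(i)$ are distinct positive integers, so each power $\epsilon^{k}$ arises from at most one transaction (here $k$ is identified with the transaction having $\iota=k$). Equality of all these coefficients then forces $\mathds{1}_{\{k\in A\}}/s(A)=\mathds{1}_{\{k\in B\}}/s(B)$ for every $k$.

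Next we handle the empty set. Since $r(\emptyset)=0$ by convention and no $\epsilon$ terms are present for it, the perturbed rate of $\emptyset$ is the zero polynomial. Any non-empty $U$ has some nonzero coefficient $1/s(U)>0$, so if exactly one of $A,B$ is empty the two polynomials differ. Thus we may assume both are non-empty. Then $s(A),s(B)>0$ are finite, and each $c_k$ is positive precisely when $k$ lies in the set. Equal coefficients therefore give $k\in A\iff k\in B$ for all $k$, hence $A=B$, contradicting distinctness. As a by-product, $s(A)=s(B)$ also follows.

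The only delicate point is the interpretation: for a fixed small real $\epsilon$, distinct sets could accidentally coincide. So the statement must be read in the ordered field of formal power series (or as "for all sufficiently small $\epsilon>0$"). I would state this explicitly and, if desired, add the remark that two distinct nonzero rational functions of $\epsilon$ agree at only finitely many points, so inequality holds for all sufficiently small $\epsilon>0$. This matches the lexicographic comparison rule defined just before the statement.
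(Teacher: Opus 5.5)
Your proposal is correct and takes essentially the same route as the paper: compare the coefficients $c_k(U)$ of $\epsilon^k$, and use $1/s(U)>0$ to conclude that $A$ and $B$ contain exactly the same indices. Your explicit handling of the empty set and your statement that the comparison is between formal polynomials in $\epsilon$ (equivalently, for all sufficiently small $\epsilon>0$) fill in details the paper leaves implicit. The paper's extra step of summing coefficients to get $|A|/s(A)=|B|/s(B)$ is not needed.
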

\begin{proof}
    Suppose $r_\epsilon^+(A) = r_\epsilon^+(B)$ or $r_\epsilon^-(A) = r_\epsilon^-(B)$. In either case, this implies that for every power of $\epsilon$ (i.e., for every transaction index $k \in \{1, \dots, n\}$), the absolute coefficients must be identical:
    \begin{align}
        \frac{\mathds{1}_{\{k \in A\}}}{s(A)} = \frac{\mathds{1}_{\{k \in B\}}}{s(B)}.
    \end{align}
    Summing these terms over all $k$ yields $\frac{|A|}{s(A)} = \frac{|B|}{s(B)}$. More importantly, since the denominators are constants, the indicator $\mathds{1}_{\{k \in A\}}$ can only equal $\mathds{1}_{\{k \in B\}}$ for all $k$ if the sets of indices are identical. Thus, $A$ and $B$ must be the same set.
\end{proof}

Since no two distinct subsets share the same perturbed fee rate (Lemma~\ref{lem:uniqueness}), we may assume strict inequality in all perturbed fee rate comparisons between distinct subsets.

\begin{algorithm}
\caption{Minimization}
\label{alg:minimization}
\begin{algorithmic}[1]
\STATE \textbf{Procedure:} \textsc{Minimize}($A$, $(V,E,f,s)$):
\STATE $Q \gets$ active trees of $(V,A)$ \label{alg-line:min-init}
\WHILE{$Q \neq \emptyset$}
    \STATE $T \gets Q.\text{pop}()$
    \STATE $U \gets$ vertex set of $T$ \label{alg-line:min-pop}
    \STATE $E_U \gets \{(u,v) \in E \mid u,v \in U\}$
    \STATE $s_U \gets (s_i)_{i \in U}$ \label{alg-line:min-extract}
    \STATE Assign unique indices $\iota(i) \in \{1,\dots,|U|\}$ for $\forall i \in U$ \label{alg-line:min-indices}
    \FOR{$\delta \in \{+, -\}$} \label{alg-line:min-for-sign}
        \STATE $f_U^\delta \gets (f_i^\delta(\epsilon))_{i \in U}$ \label{alg-line:min-perturb}
        \STATE $(\cdot,\,\cdot,\,A_U) \gets \textsc{RandomizedSFL}((U,E_U,f_U^\delta,s_U))$ \label{alg-line:min-call-sfl}
        \IF{$(U,A_U)$ has more than one active tree}
            \STATE $A \gets (A \setminus E_U) \cup A_U$ \label{alg-line:min-update-A}
            \STATE Push active trees of $(U,A_U)$ onto $Q$;\quad \textbf{break} \label{alg-line:min-push}
        \ENDIF
    \ENDFOR
\ENDWHILE
\STATE $(T_1, \dots, T_m) \gets$ active trees of $(V,A)$, sorted by descending fee rate, ties broken %
by ascending size $s(\cdot)$, subject to parent trees preceding their child trees \label{alg-line:min-sort}
\STATE $\sigma_{\text{out}} \gets$ a valid ordering induced by $(T_1, \dots, T_m)$
\STATE \textbf{Return:} $(T_1, \dots, T_m),\, \sigma_{\text{out}},\, A$
\end{algorithmic}
\end{algorithm}

To enforce granularity, we introduce a minimization phase (Algorithm~\ref{alg:minimization}) following the execution of the randomized SFL algorithm (with unperturbed fees).
When evaluating a candidate split that partitions a chunk into a parent chunk and a child chunk, the transaction possessing the absolute lowest index, which exerts the strictly dominant lexicographical influence, must reside in exactly one of these two disjoint sets. It cannot be known \textit{a priori} whether this dominating transaction falls into the parent or the child side of the boundary. Therefore, we independently evaluate the graph under both a positive perturbation, which successfully exposes the split if the lowest index is in the parent side, and a negative perturbation, which exposes the split if the lowest index is in the child side. By alternating these two perturbation passes, we guarantee that any valid equal-fee-rate boundary will be exposed.

\begin{theorem}[minimality of linearization]
\label{thm:minimality}
    Given $(V,A)$ corresponding to an optimal linearization, Algorithm~\ref{alg:minimization} produces an updated $(V,A)$ whose active trees form a minimal optimal linearization. That is, for any active tree $T$ of the output $(V,A)$, there exists no valid split candidate $(p,c)$ that partitions $T$ into $H_p$ and $H_c$ without triggering a self-merge such that $r(H_p) = r(H_c) = r(T)$.
\end{theorem}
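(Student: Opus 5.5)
We argue in three stages: each accepted refinement keeps the linearization optimal, the loop terminates, and at the end no equal-fee-rate split of any chunk can survive.

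\textbf{Step 1 (optimality is preserved).} Consider a chunk $U$ popped from $Q$ with a perturbation sign $\delta$. By Theorem~\ref{thm:terminationwp1} and Theorem~\ref{th:sfl-correctness} applied to the sub-mempool $(U,E_U,f_U^\delta,s_U)$, the call on line~\ref{alg-line:min-call-sfl} almost surely returns an optimal linearization $(W_1,\dots,W_k)$ of $U$ under the perturbed fees. The first chunk $W_1$ is a closure in $(U,E_U)$. Since $U$ is itself an optimal chunk, no closure of $U$ has unperturbed fee rate exceeding $r(U)$, while $U$ itself is a candidate closure. Taking $\epsilon\to0$, this gives $r(W_1)=r(U)$. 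Repeating the argument on $U\setminus W_1$ (the whole remainder is a candidate closure, and its rate is at least $r(U)$ by averaging) shows that every $W_j$ has unperturbed rate $r(U)$. Replacing $U$ by $(W_1,\dots,W_k)$ therefore leaves all fee rates equal to $r(U)$. Each $W_j$ is a closure after removing $W_1\cup\dots\cup W_{j-1}$, and by Lemma~\ref{lem:prefix-closure} the prefixes in $V$ remain closures. Hence the global sequence, sorted as in line~\ref{alg-line:min-sort}, is still an optimal linearization.

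\textbf{Step 2 (termination).} Every push on line~\ref{alg-line:min-push} replaces a chunk by at least two strictly smaller ones. The number of pushes is therefore bounded by $|V|$, and the while loop terminates.

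\textbf{Step 3 (no equal-rate split survives).} This is the crux. Suppose an output tree $T$ admits $(p,c)\in E_T$ with $r(H_p)=r(H_c)=r(T)$ and no self-merge, meaning no edge of $E$ goes from $H_c$ into $H_p$. Then $H_p$ is a closure in $(T,E_T)$, and $H_c$ is a closure of $T\setminus H_p$ by Lemma~\ref{lem:closure-difference}. Let $k^*$ be the minimal-index transaction in $T$ under the indices used when $T$ was last processed; that processing found no split under either sign.

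\begin{itemize}
\item If $k^*\in H_p$, then $H_p>^+_{\text{lex}}T$, which follows from the disjoint-set rule applied to $H_p$ versus $H_c$ and the weighted-average argument. So $T$ is not a maximal perturbed-rate closure of itself.
\item If $k^*\in H_c$, then $H_p>^-_{\text{lex}}H_c$ under the negative perturbation.
\end{itemize}

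In either case, the corresponding perturbed instance on $T$ has an optimal linearization with more than one chunk. This uses the uniqueness Lemma~\ref{lem:uniqueness} together with Theorem~\ref{th:sfl-correctness}: the optimal first chunk is a closure of strictly larger perturbed rate than $T$, or in the negative case the single chunk $T$ violates the sortedness/split condition. So \textsc{RandomizedSFL} on that sign would return more than one active tree, and $T$ would have been split. This contradicts $T$ being an output tree.

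The main obstacle is making this step rigorous: the terminating SFL run on $T$ with perturbed fees must necessarily output more than one tree whenever some equal-rate split exists. We plan to do this by showing that a one-tree output under sign $\delta$ would make $T$ the unique optimal perturbed closure, and for $\delta=-$ also the unique optimal final chunk. This contradicts the strict lexicographic inequality exhibited above.

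Finally, minimal size follows because any smaller closure of rate $r(T)$ inside $T$ would give such a split via Lemma~\ref{lem:boundary-partition}.
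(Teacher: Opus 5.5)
Your Step~3 is the paper's argument: locate the minimum-index transaction of $T$, then use the $+$ pass if it lies in $H_p$ and the $-$ pass if it lies in $H_c$. You are more careful than the paper in noting that these are the fresh indices assigned in line~\ref{alg-line:min-indices} of Algorithm~\ref{alg:minimization}. Steps~1 and~2 (optimality is preserved, and termination holds almost surely) are correct additions that the paper leaves implicit or states after its proof.

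The obstacle you flag is already closed by your own bullet. In both cases $H_p$ is a closure of $T$ with $r^\delta_\epsilon(H_p)>r^\delta_\epsilon(T)$; for $\delta=-$ this is immediate from $c_{k^*}(H_p)=0<c_{k^*}(T)$. By Theorem~\ref{th:sfl-correctness}, a terminal state consisting of the single tree $T$ would make $T$ a maximum perturbed-rate closure of itself, which is a contradiction. Neither Lemma~\ref{lem:uniqueness} nor a separate final-chunk argument for $\delta=-$ is needed.

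\textbf{The step that fails is your last sentence.} At a terminal state, Lemmas~\ref{lem:boundary-partition} and~\ref{lem:q-boundary-sum} show at best that the boundary edges of an equal-rate proper subclosure $I\subsetneq T$ split $T$ at equal rates. They cannot exclude self-merges, because the child components of $T\setminus I$ may depend on one another.

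Here is an example with unit sizes. Take $T=\{i,a_1,a_2,b_1,b_2\}$ and $E=\{(i,a_1),(i,a_2),(a_1,b_1),(a_2,b_2),(a_1,b_2),(a_2,b_1)\}$, with the first four edges active. Set $f_i=1$, $f_{a_1}=f_{a_2}=\tfrac12$, and $f_{b_1}=f_{b_2}=\tfrac32$.
\begin{itemize}
\item $r(T)=1=r(\{i\})$, so $\{i\}$ is a proper closure of equal rate.
\item The splits on $(i,a_1)$ and $(i,a_2)$ have equal rates, but they remerge via $(a_1,b_2)$ and $(a_2,b_1)$ respectively.
\item The splits on $(a_1,b_1)$ and $(a_2,b_2)$ give $r(H_p)=\tfrac78\neq\tfrac32=r(H_c)$.
\end{itemize}
So this tree has the property stated after ``That is'' yet is not minimal. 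The paper's opening inference (``not minimal, thus there exists a valid split candidate'') makes the same leap.

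The fix is to run your Step~3 with $I$ in place of $H_p$; nothing there used that $H_p$ comes from a tree edge.
\begin{itemize}
\item If $k^*\in I$, then $c_{k^*}(I)=1/s(I)>1/s(T)=c_{k^*}(T)$, so $I>^+_{\text{lex}}T$.
\item Otherwise $c_{k^*}(I)=0<c_{k^*}(T)$, so $I>^-_{\text{lex}}T$.
\end{itemize}
Either way, that pass cannot end with $T$ as one tree. This shows that no output chunk contains a proper closure of equal rate, which is what your final sentence needs. It also contains the stated split property as the special case $I=H_p$.
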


\begin{proof}
    Suppose, for contradiction, that 
    an active tree $T$ of the output $(V,A)$ 
    is not minimal. Thus, there exists a
    valid split candidate $(p,c)$ that partitions $T$ into $H_p$ and $H_c$ 
    such that $r(H_p) = r(H_c) = r(T)$. Because 
    $T$ remains an active tree in the output $(V,A)$,
    it must have survived both perturbation passes intact.
    Let $i^* = \arg\min_{i \in T} \iota(i)$ be the transaction with the lowest index in $T$.
    It must reside in exactly one of $H_p$ or $H_c$.
    
    Case 1 ($i^* \in H_p$): Under the positive perturbation $f^+(\epsilon)$, $H_p$ receives the strictly dominant positive coefficient. Thus, $r_\epsilon^+(H_p) > r_\epsilon^+(H_c)$. SFL must split $T$ during 
    the $\delta = +$ pass,
    yielding a contradiction.

    Case 2 ($i^* \in H_c$): Under the negative perturbation $f^-(\epsilon)$, $H_c$ receives the strictly dominant negative index. Thus, $r_\epsilon^-(H_p) > r_\epsilon^-(H_c)$. SFL must split $T$ during 
    the $\delta = -$ pass,
    yielding a contradiction.

    Hence, $T$ is guaranteed to split in at least one pass. Therefore, no non-minimal 
    active tree can remain in the output of the algorithm.
\end{proof}

Since the original randomized SFL algorithm terminates with probability 1, and each successful split in Algorithm~\ref{alg:minimization} strictly decreases the finite size of the evaluated chunks, this minimization process is also guaranteed to terminate with probability 1.

\section{Comparing the Approaches}
\label{s:compare-approaches}

To evaluate the performance of different linearization algorithms, we conducted a series of benchmarks across diverse datasets. Each dataset was evaluated under four distinct linearization 
settings.\footnote{See 
\url{https://github.com/sipa/bitcoin_mempool_linearization_artefacts/} %
for the algorithm implementations, benchmark code, datasets, and results.} We compared the randomized SFL algorithm\footnote{The SFL implementation used in our experiments is very close to the production code deployed in Bitcoin Core~31.0. It is limited to 64 transactions, and exploits the ability to represent subsets of such clusters in a single machine integer.}
when run from scratch and when initialized with an optimal linearization (whose optimality is unknown to SFL) against the GGT approach\footnote{The specific algorithm is the paragraph~3.3 ``Finding all breakpoints of $\kappa(\lambda)$'' algorithm from the Gallo, Grigoriadis, and Tarjan paper~\cite{gallo1989fast}, applied to the flow network defined in the same paper in paragraph~4.2 ``Fractional programming applications'', under ``Maximum-ratio closure problem''. See Appendix~\ref{s:approach-ggt} for more information. Flows are represented as integers after multiplying with $2(\sum{} s_i)^2$, which is sufficient to guarantee distinct flows are distinct. The minimal optimal chunks can be found as the strongly-connected components of the residual flow graph within each found chunk, after running the push-relabel algorithm in them to completion, to find maximal flows rather than just pre-flows. Like the SFL implementation, several optimizations exploit the fact that transaction sets in clusters up to 64 transactions can be represented by a single machine integer.}
in both its full bidirectional form and a variant that processes each maximum-ratio closure subproblem in a uniformly random direction (forward or backward). 

The benchmarking methodology was as follows: For a list of $N$ clusters within a dataset, each cluster underwent the following process 
1024 times: 100 random number generator seeds were generated for the linearization algorithm. For each seed, the optimal linearization was benchmarked 
7 times, and the median of these 
7 runs was recorded to mitigate variance from system-level task swapping. The average of these 100 medians was then calculated, aiming to represent the typical duration for linearizing many clusters. 
Finally, for each cluster size, three statistics were reported over all clusters of that size and all 
1024 repetitions, with all times converted to microseconds ($\mu s$): the average of the average-of-medians (``avg''), representing typical performance; their maximum (``max-of-avg''), representing aggregate worst-case behavior over many clusters, which is the most relevant measure in practice;
and the maximum over all individual medians (``max''), representing the worst-case behavior of a single instance. All benchmarks were written in C++20, compiled with GCC~15.2, and executed on a 64-core AMD Ryzen Threadripper 9980X system running Ubuntu~26.04, 62-way parallel; each core processed a random subset of the cases in random order, so that cross-core interference largely cancels out in the reported medians.

\captionsetup{margin={9mm,1mm}} 
\begin{figure*}[htbp]
    \begin{minipage}{0.497\textwidth}
        \centering
        \includegraphics[width=\textwidth]{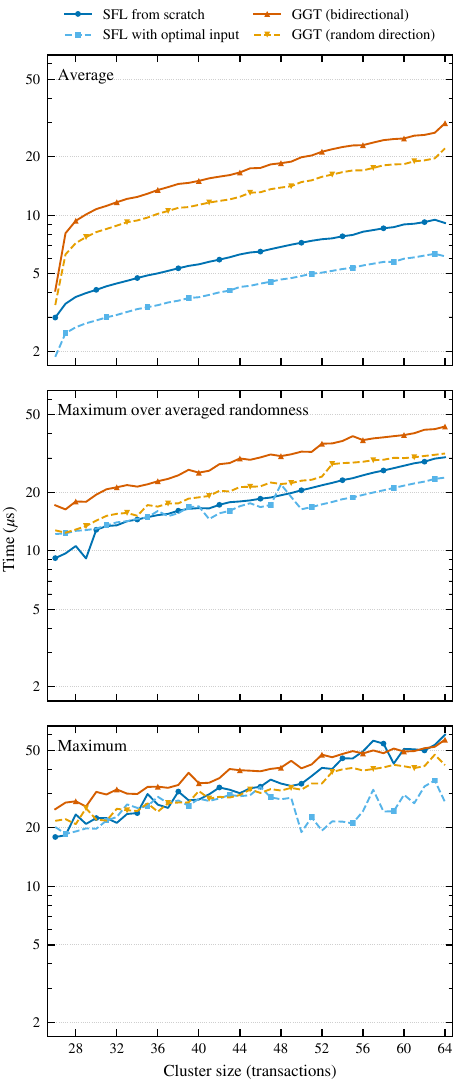}
        \caption{Runtime of SFL and GGT on the replayed 2023 mempool dataset.}
        \label{fig:sim2023}
    \end{minipage}
    \hfill %
    \begin{minipage}{0.497\textwidth}
        \centering
        \includegraphics[width=\textwidth]{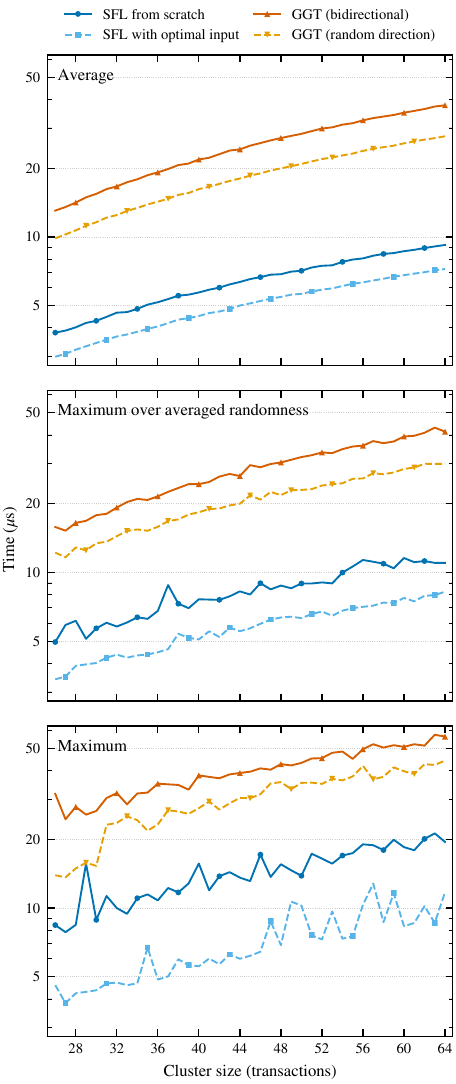}
        \caption{Runtime of SFL and GGT on a dataset of randomly generated tree-shaped clusters.}
        \label{fig:spanning}
    \end{minipage}
\end{figure*}

\begin{figure*}[htbp]
    \centering
    \begin{minipage}{0.497\textwidth}
        \centering
        \includegraphics[width=\textwidth]{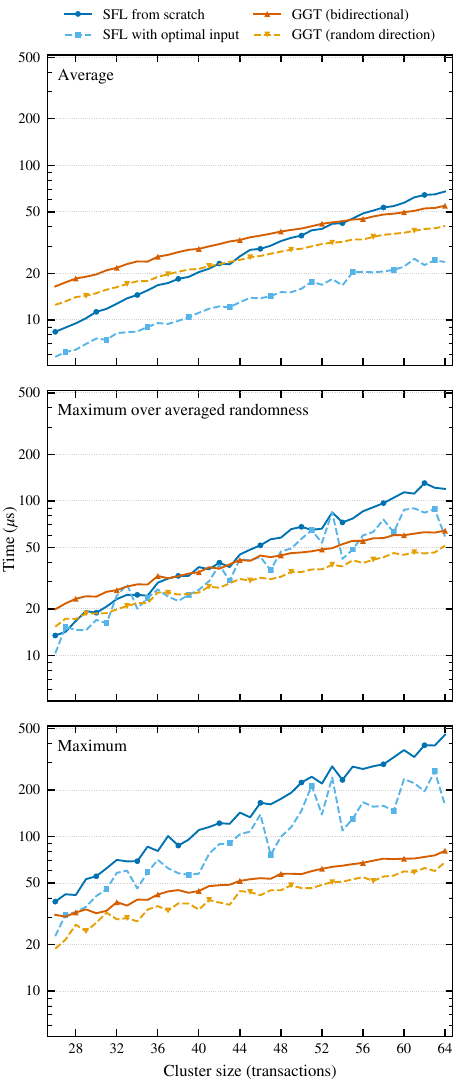}
        \caption{Runtime of SFL and GGT on a dataset of randomly-generated medium-density clusters.}
        \label{fig:medium}
    \end{minipage}
    \hfill
    \begin{minipage}{0.497\textwidth}
        \centering
        \includegraphics[width=\textwidth]{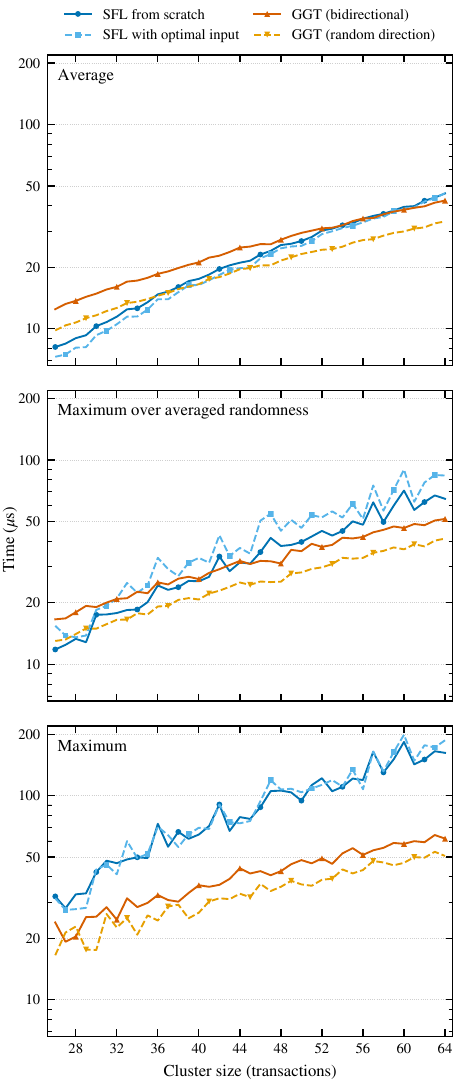}
        \caption{Runtime of SFL and GGT on a dataset of randomly-generated complete bipartite clusters.}
        \label{fig:bipartite}
    \end{minipage}
\end{figure*}
\captionsetup{margin={0mm,0mm}} 

The benchmarks were performed on several types of cluster data:

\begin{itemize}
    \item \textbf{Simulated 2023 Mempools:} These clusters were generated by replaying a dump of Bitcoin's 2023 P2P network activity into a cluster-mempool-patched version of Bitcoin Core.\footnote{This is a replay using a modern version of the codebase with cluster-based limits (up to 64 transactions), while the data being replayed is from a time when no cluster limits were enforced on the network (instead there was a limit of 24 ancestors and 24 descendants per transaction). This means that this replay is not necessarily representative of what transaction traffic would have looked like if cluster limits had been in place at the time.} This dataset included 199,076 clusters with sizes ranging from 26 to 64 transactions.\footnote{Bitcoin's previous policy (of max 24 ancestors/descendants per transaction) was introduced in September 2015 (\url{https://github.com/bitcoin/bitcoin/pull/6654}). The mempool cluster size limit is likely to remain within 64 in the foreseeable future. %
    }
    Fig.~\ref{fig:sim2023} presents the results for this dataset. SFL from scratch is roughly two times faster than both GGT variants in both average and seed-averaged worst-case runtime, and starting from an optimal linearization roughly halves its runtime again. On the worst individual instances (Fig.~\ref{fig:sim2023} (bottom)), %
    the from-scratch algorithms are largely comparable, while SFL with an optimal input retains a clear advantage.
\footnote{Due to the ancestors and descendants limit of 24 transactions at the time this data was captured, clusters of up to 49 likely consist primarily of two largely independent groups of up to 25 transactions, with a shared transaction connecting them. This may explain the sudden decline at size 49, where the captured clusters transition to three independent transaction groups.}    %

    \item \textbf{Randomly-Generated Tree-Shaped Clusters:} For each cluster size $n$ from 26 to 64 transactions, we generated 100 random tree-shaped clusters. These clusters represent connected graphs with exactly $n-1$ dependencies. 
    Fig.~\ref{fig:spanning} presents the results for this dataset. SFL is the clear winner across all three metrics %
    by a factor of roughly $2$--$3$ times over both GGT variants.

    \item \textbf{Randomly-Generated Medium-Density Clusters:} For each cluster size $n$ between 26 and 64 transactions, 100 random clusters with approximately 3 dependencies per transaction were generated. Fig.~\ref{fig:medium} presents the results for this dataset. Here, SFL's runtime inherently scales with the number of dependencies: while SFL from scratch remains fastest on average for smaller clusters, it is overtaken by both GGT variants as the cluster size approaches 64, and its individual worst case (Fig.~\ref{fig:medium} (bottom)) %
    is several times higher than GGT's. Starting SFL from an optimal linearization, however, keeps its average runtime well below all alternatives. Note also that randomizing GGT's direction visibly improves its performance across all metrics.

    \item \textbf{Randomly-Generated Complete Bipartite Clusters:} For each cluster size $n$ from 26 to 64 transactions, 100 random complete-connected bipartite clusters were generated, featuring approximately $n/4$ dependencies per transaction. These clusters are notable for their high number of dependencies. For instance, a 64-transaction cluster where all inputs and outputs are key-path Pay-to-Taproot (P2TR) would require approximately 103 kvB in transaction input and output data. Fig.~\ref{fig:bipartite} presents the results for this dataset. With even more dependencies per transaction, SFL's dependency-driven scaling is most visible: average runtimes of all algorithms remain within a narrow band, but in both worst-case metrics, GGT clearly outperforms SFL, whose two variants become nearly indistinguishable, as an optimal starting linearization no longer offsets the per-dependency cost.
\end{itemize}

These results demonstrate that the SFL algorithm is highly efficient in practice. On realistic mempool data and sparse synthetic clusters, it consistently outperforms the GGT-based approaches in average runtime, aggregate worst-case runtime, and frequently on individual worst-case instances. While GGT exhibits superior worst-case performance on dense synthetic clusters, pure worst-case performance %
is not the deciding criterion for deployment. SFL's ability to %
leverage an existing linearization---which the benchmarks show to be visibly beneficial---alongside %
its fairness and anytime properties discussed in Sec.~\ref{ss:desirable-properties}, is more relevant in a practical %
setting. %
In such environments,
per-cluster computation must be bounded, and no participant's transactions should be starved. Moreover, 
as the algorithm's internal randomness is insulated from external exploitation, the seed-averaged worst case 
(middle plots in Figs.~\ref{fig:sim2023}--\ref{fig:bipartite}) serves as the practically relevant %
measure. %
Under this metric, SFL remains %
competitive with GGT even on the densest clusters while being on average often faster. Combined, these properties make SFL well-suited for real-world deployment.

Note however that none of the benchmarked datasets are actual worst cases for any of the tested algorithms, as we do not know what they are. In the case of GGT we can however extrapolate from the benchmarked time per iteration to the known upper bound on iteration count. This predicts a time up to 10~ms for 64-transaction clusters, far too slow for our application. If this bound were tight, it would be a further justification for relying on randomization and fairness, rather than aiming to always achieve optimality.

\section{Conclusion}
\label{s:conclusion}

This paper has introduced the mempool linearization problem: determining an optimal partition of the set of transactions and their ordering that maximizes fee efficiency while respecting parent-child dependencies. We have established several theoretical characterizations of the problem. In particular, we showed that mempool linearization admits an LP formulation from which an exact optimal solution can be recovered, and that it is equivalent to a sequence of maximum-ratio closure problems solvable via the classical GGT parametric preflow algorithm. Our main algorithmic contribution is the SFL algorithm, a %
method that iteratively refines a partition of transactions through merge and split operations guided by fee-rate comparisons.

Extensive evaluation on both synthetic and real Bitcoin mempool data demonstrates that SFL consistently produces optimal linearizations while remaining competitive with the GGT-based approach in runtime.
Moreover, SFL naturally supports randomized execution, allowing computational effort to be distributed across the transaction graph and providing robustness under time constraints or early termination.

Beyond its algorithmic value, efficient mempool linearization has broader implications for blockchain decentralization. By enabling all participants to compute high-quality transaction orderings with modest computational resources, it reduces the advantage of proprietary infrastructure and private relay networks, thereby supporting a more open and censorship-resistant mining ecosystem.

Several directions remain for future work. On the theoretical side, it would be valuable to further analyze the complexity of SFL and develop stronger performance guarantees for its randomized variants. On the practical side, extending the framework to handle adversarially constructed transaction graphs and conflicting transactions would provide a more realistic model of mempool dynamics. More broadly, our results suggest that mempool linearization is not merely an implementation detail, but a core optimization problem at the intersection of economics, networking, and consensus. Developing efficient and robust algorithms for this problem will remain important for the continued scalability and decentralization of blockchain systems.

\section*{Acknowledgement}

The authors would like to thank Suhas Daftuar, Ittay Eyal, Gregory Maxwell, Stefan Richter, Clara Shikhelman, Anthony Towns, Aviv Yaish, and Aviv Zohar for stimulating discussions, technical suggestions, and feedback.

This work was supported in part by the National Science Foundation under Grant Nos.~2132700, 2216970, and 2434044.

\bibliographystyle{IEEEtran}
\bibliography{refs}

@article{gallo1989fast,
  title={A fast parametric maximum flow algorithm and applications},
  author={Gallo, Giorgio and Grigoriadis, Michael D and Tarjan, Robert E},
  journal={SIAM Journal on Computing},
  volume={18},
  number={1},
  pages={30--55},
  year={1989},
  publisher={SIAM}
}

@book{Ford1962FlowsIN,
  title={Flows in networks},
  author={Lester Randolph Ford and Delbert Ray Fulkerson},
  year={1962},
  publisher = {Princeton University Press},
  url={https://api.semanticscholar.org/CorpusID:260503522}
}

@article{Picard01111982,
author = {Picard, Jean-Claude and Queyranne, Maurice},
title = {Selected Applications of Minimum Cuts in Networks},
journal = {INFOR: Information Systems and Operational Research},
volume = {20},
number = {4},
pages = {394--422},
year = {1982},
publisher = {Taylor \& Francis},
doi = {10.1080/03155986.1982.11731876},
URL = { 
    https://doi.org/10.1080/03155986.1982.11731876
},
eprint = { 
    https://doi.org/10.1080/03155986.1982.11731876
}
}

@article{isbell1956attrition,
  author = {Isbell, J. R. and Marlow, W. H.},
  title = {Attrition Games},
  journal = {Naval Research Logistics Quarterly},
  volume = {3},
  number = {1-2},
  pages = {71-94},
  year = {1956},
  doi = {10.1002/nav.3800030108},
  url = {https://onlinelibrary.wiley.com/doi/abs/10.1002/nav.3800030108},
  eprint = {https://onlinelibrary.wiley.com/doi/pdf/10.1002/nav.3800030108}
}

@inproceedings{cheriyan1988preflow,
author = {Cheriyan, Joseph and Maheshwari, S. N.},
title = {Analysis of Preflow Push Algorithms for Maximum Network Flow},
year = {1988},
isbn = {3540505172},
publisher = {Springer-Verlag},
address = {Berlin, Heidelberg},
booktitle = {Proceedings of the Eighth Conference on Foundations of Software Technology and Theoretical Computer Science},
pages = {30–48},
numpages = {19}
}

@article{GT88Preflow,
author = {Goldberg, Andrew V. and Tarjan, Robert E.},
title = {A new approach to the maximum-flow problem},
year = {1988},
issue_date = {Oct. 1988},
publisher = {Association for Computing Machinery},
address = {New York, NY, USA},
volume = {35},
number = {4},
issn = {0004-5411},
url = {https://doi.org/10.1145/48014.61051},
doi = {10.1145/48014.61051},
journal = {J. ACM},
month = oct,
pages = {921–940},
numpages = {20}
}

@article{pardalos1991fractional,
  author = {Pardalos, P. M. and Phillips, A. T.},
  title = {Global Optimization of Fractional Programs},
  journal = {Journal of Global Optimization},
  volume = {1},
  number = {2},
  pages = {173--182},
  year = {1991},
  doi = {10.1007/BF00119990},
  url = {https://doi.org/10.1007/BF00119990},
  issn = {1573-2916}
}

@inproceedings{beinesmincut2025,
author = {Arne Beines and Michael Kaibel and Philip Mayer and Petra Mutzel and Jonas Sauer},
title = {A Simpler Approach for Monotone Parametric Minimum Cut: Finding the Breakpoints in Order},
booktitle = {2025 Proceedings of the Symposium on Algorithm Engineering and Experiments (ALENEX)},
chapter = {},
pages = {29-41},
year=2025,
doi = {10.1137/1.9781611978339.3},
URL = {https://epubs.siam.org/doi/abs/10.1137/1.9781611978339.3},
eprint = {https://epubs.siam.org/doi/pdf/10.1137/1.9781611978339.3}
}

@book{nocedal2006numerical,
  author = {Nocedal, Jorge and Wright, Stephen J},
  publisher = {Springer},
  title = {Numerical optimization},
  year = {2006}
}

@incollection{klee1972simplex,
  author    = {Victor Klee and George J. Minty},
  title     = {How Good is the Simplex Algorithm?},
  booktitle = {Inequalities},
  editor    = {Olga Shisha},
  publisher = {Academic Press},
  address   = {New York},
  year      = {1972},
  pages     = {159--175}
}

@article{spielman2004smoothed,
  title={Smoothed analysis of algorithms: Why the simplex algorithm usually takes polynomial time},
  author={Spielman, Daniel A and Teng, Shang-Hua},
  journal={Journal of the ACM (JACM)},
  volume={51},
  number={3},
  pages={385--463},
  year={2004},
  publisher={ACM New York, NY, USA}
}

@book{diestel2017graph,
  title={Graph Theory},
  author={Diestel, Reinhard},
  year={2017},
  edition={5},
  publisher={Springer},
  series={Graduate Texts in Mathematics},
  volume={173}
}

@inproceedings{kalai1992subexponential,
  title={A subexponential randomized simplex algorithm},
  author={Kalai, Gil},
  booktitle={Proceedings of the twenty-fourth annual ACM symposium on Theory of computing},
  pages={475--482},
  year={1992}
}

@inproceedings{matouvsek1992subexponential,
  title={A subexponential bound for linear programming},
  author={Matou{\v{s}}ek, Ji{\v{r}}{\'\i} and Sharir, Micha and Welzl, Emo},
  booktitle={Proceedings of the eighth annual symposium on Computational geometry},
  pages={1--8},
  year={1992}
}

@article{hochbaum2001new,
  title={A new—old algorithm for minimum-cut and maximum-flow in closure graphs},
  author={Hochbaum, Dorit S},
  journal={Networks: An International Journal},
  volume={37},
  number={4},
  pages={171--193},
  year={2001},
  publisher={Wiley Online Library}
}

@article{lerchs1965optimum,
  title={Optimum design of open-pit mines},
  author={Lerchs, H. and Grossmann, I. F.},
  journal={Transactions of the Canadian Institute of Mining and Metallurgy},
  volume={68},
  pages={17--24},
  year={1965}
}

@article{charnes1952optimality,
  title={Optimality and Degeneracy in Linear Programming},
  author={Charnes, Abraham},
  journal={Econometrica},
  volume={20},
  number={2},
  pages={160--170},
  year={1952},
  publisher={Wiley}
}

@article{bland1977new,
  title={New finite pivoting rules for the simplex method},
  author={Bland, Robert G},
  journal={Mathematics of Operations Research},
  volume={2},
  number={2},
  pages={103--107},
  year={1977},
  publisher={INFORMS}
}

@article{pahari2025exclusive,
  title={How Exclusive are Ethereum Transactions? Evidence from non-winning blocks},
  author={Pahari, Vabuk and Canidio, Andrea},
  journal={arXiv preprint arXiv:2509.16052},
  year={2025}
}

@article{sidney1975decomposition,
  title={Decomposition algorithms for single-machine sequencing with precedence relations and deferral costs},
  author={Sidney, Jeffrey B},
  journal={Operations Research},
  volume={23},
  number={2},
  pages={283--298},
  year={1975},
  publisher={INFORMS}
}

@incollection{lawler1978sequencing,
  title={Sequencing jobs to minimize total weighted completion time subject to precedence constraints},
  author={Lawler, Eugene L},
  booktitle={Annals of discrete mathematics},
  volume={2},
  pages={75--90},
  year={1978},
  publisher={Elsevier}
}

@incollection{potts2009algorithm,
  title={An algorithm for the single machine sequencing problem with precedence constraints},
  author={Potts, Chris N},
  booktitle={Combinatorial optimization ii},
  pages={78--87},
  year={2009},
  publisher={Springer}
}

@article{chudak1999half,
  title={A half-integral linear programming relaxation for scheduling precedence-constrained jobs on a single machine},
  author={Chudak, Fabi{\'a}n A and Hochbaum, Dorit S},
  journal={Operations Research Letters},
  volume={25},
  number={5},
  pages={199--204},
  year={1999},
  publisher={Elsevier}
}

@article{Smith1956VariousOF,
  title={Various optimizers for single‐stage production},
  author={Wayne E. Smith},
  journal={Naval Research Logistics Quarterly},
  year={1956},
  volume={3},
  pages={59-66},
  url={https://api.semanticscholar.org/CorpusID:120614124}
}

@book{bertsimas1997introduction,
  title={Introduction to linear optimization},
  author={Bertsimas, Dimitris and Tsitsiklis, John N},
  volume={6},
  year={1997},
  publisher={Athena scientific Belmont, MA}
}

@article{charnes1962programming,
  title={Programming with linear fractional functionals},
  author={Charnes, Abraham and Cooper, William W},
  journal={Naval Research logistics quarterly},
  volume={9},
  number={3-4},
  pages={181--186},
  year={1962},
  publisher={Wiley Online Library}
}

\appendices

\section{Proof
Theorem~\ref{thm:auc-optimality}}
\label{a:auc-optimality}

    Given any closure $W$ in $G=(V,E)$,
    we 
    let 
    $W_j = W \cap U^*_j$ for $j=1, \dots, p$. We claim
    \begin{align} \label{eq:fwj}
        f(W_j) \le s(W_j) r( U^*_j ) %
    \end{align}    
    holds for all $j=1,\dots,p$. We prove~\eqref{eq:fwj} by induction.
    
    We begin with $j=1$. If $W_1=\emptyset$,~\eqref{eq:fwj} holds trivially for $j=1$. As $W$ and $U^*_1$ are closures in $G$, their intersection $W_1$, if non-empty, is a closure in $G$ by Lemma~\ref{lm:closures_properties}. %
    Because $U^*_1$ is %
    a highest-fee-rate closure, %
    $r(W_1) \le r(U^*_1)$, which implies~\eqref{eq:fwj} for $j=1$.
    Next, assume that for some $m\in\{1,\dots,p\}$, %
    $r(W_j) \le r(U^*_j)$ for all $j=1,\dots,m$. Let $W'_{m} = W \setminus \bigcup_{j=1}^m W_j = W \setminus \bigcup_{j=1}^m U^*_j$. By Lemma~\ref{lem:prefix-closure}, $\bigcup_{j=1}^m U^*_j$ is a closure in $G$. %
    Hence, by 
    Lemma~\ref{lem:closure-difference}, $W'_{m}$, if non-empty, is a closure in the subgraph induced by $V \setminus \bigcup_{j=1}^m U^*_j$.
    We have $W_{m+1} = W \cap U^*_{m+1} = W'_{m} \cap U^*_{m+1}$. Since $U^*_{m+1}$ is a closure in the subgraph induced by $V \setminus \bigcup_{j=1}^m U^*_j$, the intersection $W_{m+1}$, if non-empty, is also a closure in that same subgraph (Lemma~\ref{lm:closures_properties}).
    Since $U^*_{m+1}$ is the optimal closure in that subgraph, it must hold that $r(W_{m+1}) \le r(U^*_{m+1})$, i.e.,~\eqref{eq:fwj} holds for $j=m+1$. By induction,~\eqref{eq:fwj} holds for all $j=1,\dots,p$.
    Summing~\eqref{eq:fwj} over $j=1,\dots,p$ %
    yields
    \begin{align}
        f(W)
        &= \sum_{j=1}^p f(W_j) \\
        &\le \sum_{j=1}^p s(W_j) r(U^*_j). \label{eq:fw<=sum}
    \end{align}

    We next show that every breakpoint of $C_L$ lies on or below $C_{L^*}$. {The $k$-th breakpoint of $C_L$ takes the form of $(s(W),f(W))$ with}
    $W = \bigcup_{i=1}^k U_i$. By Lemma~\ref{lem:prefix-closure}, $W$ is a closure in $G$. %
    It suffices to show $f(W) \le C_{L^*}(s(W))$.

    The value $C_{L^*}(s(W))$ is the cumulative fee of the first $s(W)$ size units of $L^*$. Since $r(U^*_1) \ge r(U^*_2) \ge \dots$ by Definition~\ref{def:L},
    $C_{L^*}(s(W))$ is the result of the following maximization:
    \begin{subequations}\label{eq:knapsack}
    \begin{align}
        \maximize_{x_1,\dots,x_p} \quad & \sum_{j=1}^{p} x_j \, r(U^*_j) \label{eq:knapsack:a} \\
        \subjectto \quad 
        & \sum_{j=1}^{p} x_j = s(W), \label{eq:knapsack:b} \\
        & 0 \le x_j \le s(U^*_j), \quad j = 1, \dots, p . \label{eq:knapsack:c}
    \end{align}
    \end{subequations}
    Since $W_j \subseteq U^*_j$ implies $s(W_j) \le s(U^*_j)$, letting $x_j=s(W_j)$ for 
    $j=1,\dots,p$
    is a feasible solution. Therefore,
    \begin{align} \label{eq:sumsr<=}
        \sum_{j=1}^p s(W_j) r(U^*_j) \le C_{L^*}(s(W)) .
    \end{align}
    By~\eqref{eq:fw<=sum} and~\eqref{eq:sumsr<=}, we obtain
    $f(W) \le C_{L^*}(s(W))$ %
    for every $k=1,\dots,q$.

    Finally, since every breakpoint of $C_L$ is dominated by $C_{L^*}$, which is a piece-wise linear concave function, the entire function $C_L$ is dominated by $C_{L^*}$. Therefore, the AUC of $L^*$ also dominates the AUC of $L$.

\section{Proof Theorem~\ref{thm:lp=bp}}
\label{a:lp=bp}

Let $x_{1},...,x_{n}$ be a solution to LP~\ref{eq:lp}. For convenience, let $A$ consist of indices of the $x$-variables that are positive, i.e., $A=\{i \in V:x_{i}>0\}$. Let
\begin{align}
    u=\max\{x_{j} \mid j\in V\}    
\end{align}
denote the maximum value of the $x$-variables. Let $B=\{i\in V \mid x_{i}=u\}$, i.e., the set of indices for which the $x$-variable attains the maximum. Evidently, $z_i=\mathds{1}_{\{i\in B\}}$.

We first show that $x_{1}^{\prime},...,x_{n}^{\prime}$ defined as
\begin{align}
\label{eq:x'-def}
    x_{i}^{\prime}= %
    \frac1{\sum_{i\in B}s_i}
    \cdot\mathds{1}_{\{i\in B\}}, \quad i=1,...,n
\end{align}
also solves the LP in two separate cases: i) $A=B$ and ii) $A\ne B$. One can readily verify that $\sum_{i \in V} s_i x_i^{\prime} = 1$ and that $x_i^{\prime} \geq x_j^{\prime}$ for all $(i,j)\in E$. %
Thus, $x_1^{\prime}, \ldots, x_n^{\prime}$ satisfies all constraints of the LP. The claim that $x'_1,\dots,x'_n$ solves the LP is trivial %
in case (i), %
when all strictly positive entries in $x_1,\dots,x_n$ are identical.
In case (ii), $A\ne B$, %
we let $w=\max\{x_{j} \mid j\in A \backslash B\}$, which denotes the second highest value the $x$-variables attain, which must satisfy
$w>0$. Consider $x_{1}^{\prime\prime},...,x_{n}^{\prime\prime}$ defined by
\begin{align}
\label{eq:x''-def}
    x_{i}^{\prime\prime}=\begin{cases}
        w/q,& \text{ if } i\in B,\\ 
        x_i/q,& \text{ otherwise,}
    \end{cases}
\end{align}
where $q=(\sum_{i \in B}s_{i})w + \sum_{i \in V\backslash B} s_ix_i$. It is easy to verify that $\sum_{i \in V}s_{i}x_{i}^{\prime\prime}=1$ and $x_{i}^{\prime\prime} \ge x_{j}^{\prime\prime}$ for all $(i,j)\in E$.
Hence $x_{1}^{\prime\prime},...,x_{n}^{\prime\prime}$ satisfies all the constraints of the LP. 
The %
result of the LP can be expressed as
\begin{align}
\label{eq:lp-rate}
    r_{\text{LP}}^{*}
    &= \sum_{i \in V} f_i x_i \\
    &= \frac{\sum_{i \in V} f_i x_i}{\sum_{i \in V} s_i x_i} \\
    &= \frac{(u - w) (\sum_{i \in B} f_i) + q \sum_{i \in V} f_i x''_i}{(u - w) (\sum_{i \in B} s_i) + q \sum_{i \in V} s_i x''_i}. \label{eq:rLP*=}
\end{align}
We show that
\begin{align}
\label{eq:rate-equal}
\frac{\sum_{i \in V} f_i x''_i}{\sum_{i \in V} s_i x''_i} = \frac{\sum_{i \in B} f_i}{\sum_{i \in B} s_i}
\end{align}
by contradiction: If either side of~\eqref{eq:rate-equal} is strictly greater than the other side,   
then~\eqref{eq:rLP*=} implies that $r_{\text{LP}}^*$ is strictly in between those two sides. Since the right hand side of~\eqref{eq:rate-equal} is the fee rate achieved by $x_{i}^{\prime}$ defined by~\eqref{eq:x'-def} due to
\begin{align}
    \frac{\sum_{i \in B} f_i}{\sum_{i \in B} s_i} = \frac{\sum_{i \in V} f_i x'_i}{\sum_{i \in V} s_i x'_i} ,
\end{align}
and the left-hand side is the fee rate achieved by $x_{i}^{\prime\prime}$ defined by~\eqref{eq:x''-def}, $r_{\text{LP}}^*$ cannot be the maximum fee rate, contradicting its definition.
With~\eqref{eq:rate-equal} established, it is straightforward to see that $x_{i}^{\prime}$ achieves the highest fee rate $r_\text{LP}^*$, i.e., it also solves the same LP.

By~\eqref{eq:zi-def} and~\eqref{eq:x'-def}, we have $z_{i}=\mathds{1}_{\{x_{i}^{\prime}>0\}}$ for all $i \in V$. To establish the theorem, it suffices to show that $z_{1},...,z_{n}$ thus defined solves the BP in~\eqref{eq:bp}.
Since FP~\eqref{eq:fp} can be viewed as the BP in~\eqref{eq:bp} with the binary constraints relaxed, we have %
\begin{align}
\label{eq:r_fp>r_bp}    
    r_{\text{FP}}^{*}\ge r_{\text{BP}}^{*} .
\end{align}
Meanwhile, the FP is equivalent to the LP in~\eqref{eq:lp} via the Charnes-Cooper transformation~\cite{charnes1962programming}, in the sense that there is a one-to-one correspondence between their solutions in the form of
\begin{align}
\label{eq:fp-to-lp}
x_i = y_i \Big/ {\sum_{k \in V} s_k y_k}, \quad \forall i \in V
\end{align}
which implies that they achieve the same maximum:
\begin{align}
\label{eq:lp-fp-equal}
\sum_{i \in V} f_i x_i = \frac{\sum_{i \in V} f_i y_i}{\sum_{i \in V} s_i y_i} .
\end{align}
This, together with~\eqref{eq:r_fp>r_bp}, leads to
\begin{align}
\label{eq:r_lp=r_fp>=r_bp}
r_{\text{LP}}^{*} = r_{\text{FP}}^{*} \ge r_{\text{BP}}^{*} .
\end{align}

Note that if $y_{1},...,y_{n}$ solves the FP, scaling all $y_{i}$'s by the same positive constant still solves the FP. Since $x_{1}^{\prime},...,x_{n}^{\prime}$ solves the LP, $z_{1},...,z_{n}$ which are defined by~\eqref{eq:zi-def}, must solve the FP to achieve $r_{FP}^{*}$. Evidently, $z_{1},...,z_{n}$ also satisfies all the constraints of the BP in~\eqref{eq:bp}, which implies that $r_{BP}^{*}\ge r_{FP}^{*}$. This, together with \eqref{eq:r_lp=r_fp>=r_bp} implies that $r_{\text{BP}}^{*}=r_{\text{FP}}^{*}=r_{\text{LP}}^{*}$. Therefore, the same $z_{1},...,z_{n}$ must also solve the BP.

\section{Proof of~\eqref{eq:gammaq}}
\label{a:gammaq}

The left hand side of~\eqref{eq:gammaq} is equal to
\begin{align}
    &
    \sum_{i,j\in V} f_i s_j
    -
    \sum_{i,j\in V: \sigma(j)\le\sigma(i)} f_i s_j \notag \\
    &= \sum_{i,j\in V: i\ne j} f_i s_j
    -
    \sum_{i,j\in V: \sigma(j)<\sigma(i)} f_i s_j \\
    &= \frac12 \sum_{\substack{i,j\in V\\ i\ne j}} f_i s_j
    + \frac12 \sum_{\substack{i,j\in V\\ \sigma(j)>\sigma(i)}} f_i s_j 
    - \frac12 \sum_{\substack{i,j\in V\\ \sigma(j)<\sigma(i)}} f_i s_j \\
    &= \frac12 \sum_{i,j\in V: i\ne j} f_i s_j
    + \frac12 \sum_{i,j\in V: \sigma(j)>\sigma(i)} (f_i s_j - f_j s_i)
\end{align}
which is equal to the right hand side of~\eqref{eq:gammaq}.

\section{Refined SFL with Maximum-$q$ Heuristic}
\label{s:sfl-max-q}

This appendix describes the maximum-$q$ heuristic for split selection in the refined SFL algorithm, referenced in Sec.~\ref{ss:spl-refinements}. This deterministic rule selects, at each improvement step, the active dependency that maximizes the $q$-function value among all valid split candidates. 

\begin{algorithm}
\caption{Refined SFL with Maximum-$q$ Heuristic}
\label{alg:sfl-refined-max-q}
\begin{algorithmic}[1]
\STATE \textbf{Procedure:} \textsc{Improve}($(p,c)$, $A$):
\STATE $\mathcal{E}_{\text{remerge}} \gets \{(u,v) \in E \setminus A \mid u \in H_c,\; v \in H_p\}$
\STATE $A \gets A \setminus \{(p,c)\}$
\IF{$\mathcal{E}_{\text{remerge}} \neq \emptyset$}
    \STATE Pick arbitrary $e' \in \mathcal{E}_{\text{remerge}}$;\quad $A \gets A \cup \{e'\}$
\ELSE
    \STATE $A \gets \textsc{MergeUpwards}(p, A)$
    \STATE $A \gets \textsc{MergeDownwards}(c, A)$
\ENDIF
\RETURN $A$

\vspace{0.5em}
\STATE \textbf{Procedure:} \textsc{MaxQSFL}(Mempool $(V,E,f,s)$, optional valid ordering $\sigma$):
\STATE $(\cdot,\, \cdot,\, A) \gets \textsc{Linearize}((V,E,f,s),\, \sigma)$
\WHILE{true}
    \STATE $\mathcal{E}_{\text{split}} \gets \{(p,c) \in A \mid H_p \succ H_c\}$
    \IF{$\mathcal{E}_{\text{split}} = \emptyset$}
        \STATE \textbf{break}
    \ENDIF
\STATE $e^* \gets \arg\max_{e=(p,c) \in \mathcal{E}_{\text{split}}} q(H_p,H_c)$, with ties broken arbitrarily    \STATE $A \gets \textsc{Improve}(e^*, A)$
\ENDWHILE
\STATE $(T_1,\dots,T_m) \gets$ active trees of $(V,A)$, sorted by descending fee rate, ties broken arbitrarily
\STATE $\sigma_{\text{out}} \gets$ a valid ordering induced by $(T_1,\dots,T_m)$
\STATE \textbf{Return:} $(T_1,\dots,T_m),\, \sigma_{\text{out}}, A$
\end{algorithmic}
\end{algorithm}

Returning to the LP interpretation, a split corresponds to making a slack variable 
basic.
The simplex method only permits such a pivot if the derivative of the objective 
with respect to 
the slack variable
is positive, which, in our case, translates to requiring the parent-side chunk \( A \) to have a higher fee rate than the child-side chunk \( B \). A natural strategy is to select the split with the largest such derivative. This corresponds to maximizing 
$q(A, B)$. The complete algorithm using this rule is presented in Algorithm~\ref{alg:sfl-refined-max-q}.\footnote{Although preliminary tests suggested the maximum-$q$ heuristic performs well in practice, it was not adopted as the split-selection rule in the Bitcoin Core implementation.}

\begin{figure*} %
    \centering
    \begin{tikzpicture}[%
        every node/.style={draw, align=center, minimum width=1.8cm, minimum height=0.5cm, inner sep=0pt, font=\footnotesize},
        arrow/.style={-Stealth, thick},
        grayarrow/.style={-Stealth, thick, gray}
    ]

        \node (T4) at (4.5,4) {T4: -1935/514};
        \node (T0) at (7.5,4) {T0: -1962/877};
        \node (T5) at (10.5,4) {T5: -1952/962};
        \node (T11) at (13.5,4) {T11: -1278/1003};

        \node (T1) at (0,2) {T1: 686/736};
        \node (T13) at (2,2) {T13: -277/177};
        \node (T8) at (4,2) {T8: 1405/436};
        \node (T9) at (6,2) {T9: 672/595};
        \node (T7) at (8,2) {T7: -55/449};
        \node (T14) at (10,2) {T14: -1085/693};
        \node (T6) at (12,2) {T6: 1682/503};
        \node (T12) at (14,2) {T12: -1066/402};

        \node (T2) at (1.5,0) {T2: 1760/1019};
        \node (T3) at (3.5,0) {T3: 1073/456};
        \node (T10) at (7.5,0) {T10: 1808/532};
        
        \draw[arrow] (T12) to [bend right=10] (T5);
        \draw[arrow] (T12) to [bend right=5] (T0);
        \draw[grayarrow] (T12) to (T11);
        
        \draw[grayarrow] (T6) to (T5);
        \draw[grayarrow] (T6) to (T11);
        \draw[arrow] (T6) to (T0);

        \draw[arrow] (T14) to (T5);
        \draw[grayarrow] (T14) to (T11);
        \draw[grayarrow] (T14) to (T0);

        \draw[arrow] (T7) to (T4);
        \draw[grayarrow] (T7) to (T0);
        \draw[arrow] (T7) to (T5);
        \draw[grayarrow] (T7) to (T11);

        \draw[arrow] (T13) to [out=30, in=200] (T11);
        \draw[grayarrow] (T13) to (T4);
        
        \draw[arrow] (T1) to [bend left=4] (T0);
        \draw[grayarrow] (T1) to (T4);

        \draw[arrow] (T10) to (T7);
        \draw[grayarrow] (T10) to (T9);
        \draw[grayarrow] (T10) to (T8);

        \draw[arrow] (T3) to (T1);
        \draw[grayarrow] (T3) to (T13);
        \draw[grayarrow] (T3) to (T8);
        \draw[grayarrow] (T3) to (T9);
        \draw[grayarrow] (T3) to (T7);
        
        \draw[arrow] (T2) to (T1);
        \draw[arrow] (T2) to (T13);
        \draw[arrow] (T2) to (T8);
        \draw[arrow] (T2) to (T9);
        
    \end{tikzpicture}
    
    \caption{An example of a transaction DAG that can repeat the same sequence of split and merge steps when following a specific trajectory. A transaction with fee \( f \) and size \( s \) is denoted as \( f/s \).}
    \label{fig:sfl-termination-counterexample}
\end{figure*}

The refined SFL algorithm with the maximum-$q$ heuristic may revisit the same active forest state and therefore does not guarantee termination.
Fig.~\ref{fig:sfl-termination-counterexample} shows a cluster with 15 transactions and 30 dependencies that permits a cycle of 24 split-and-merge operations that ultimately returns to the original state. The black edges form the initial active tree:
\texttt{T1$\rightarrow$T0}, \texttt{T6$\rightarrow$T0}, \texttt{T12$\rightarrow$T0}, \texttt{T2$\rightarrow$T1}, 
\texttt{T3$\rightarrow$T1}, \texttt{T7$\rightarrow$T4}, \texttt{T7$\rightarrow$T5}, \texttt{T12$\rightarrow$T5}, 
\texttt{T14$\rightarrow$T5}, \texttt{T10$\rightarrow$T7}, \texttt{T2$\rightarrow$T8}, \texttt{T2$\rightarrow$T9}, 
\texttt{T13$\rightarrow$T11}, and \texttt{T2$\rightarrow$T13}.

From this configuration, a sequence of 24 steps can be performed that results in a return to the same state. Each step deactivates one edge (dependency) and activates another, sometimes a grey edge from the original graph. 
The steps are:

\begin{center}
\texttt{
-T12$\rightarrow$T0 +T13$\rightarrow$T4,
-T7$\rightarrow$T4 +T13$\rightarrow$T5,
-T2$\rightarrow$T1 +T3$\rightarrow$T8,
-T2$\rightarrow$T8 +T3$\rightarrow$T9, \\
-T2$\rightarrow$T9 +T6$\rightarrow$T5,
-T6$\rightarrow$T0 +T3$\rightarrow$T7, 
-T13$\rightarrow$T5 +T14$\rightarrow$T11,
-T14$\rightarrow$T5 +T12$\rightarrow$T11, \\
-T12$\rightarrow$T5 +T1$\rightarrow$T4,
-T13$\rightarrow$T4 +T3$\rightarrow$T13,
-T3$\rightarrow$T7 +T10$\rightarrow$T8,
-T3$\rightarrow$T8 +T10$\rightarrow$T9, \\
-T3$\rightarrow$T9 +T6$\rightarrow$T11,
-T6$\rightarrow$T5 +T7$\rightarrow$T11,
-T3$\rightarrow$T13 +T14$\rightarrow$T0,
-T14$\rightarrow$T11 +T12$\rightarrow$T0, \\
-T12$\rightarrow$T11 +T7$\rightarrow$T4,
-T1$\rightarrow$T4 +T7$\rightarrow$T0,
-T7$\rightarrow$T11 +T2$\rightarrow$T9,
-T10$\rightarrow$T9 +T2$\rightarrow$T8, \\
-T10$\rightarrow$T8 +T6$\rightarrow$T0,
-T6$\rightarrow$T11 +T2$\rightarrow$T1,
-T7$\rightarrow$T0 +T14$\rightarrow$T5,
-T14$\rightarrow$T0 +T12$\rightarrow$T5.
}
\end{center}

This demonstrates that Algorithm~\ref{alg:sfl-refined-max-q}, without additional safeguards, does not guarantee termination. 
While such cyclic states appear rare in practice, their existence motivates modifying the algorithm to guarantee termination.

\section{Linearization via Minimum Cuts}
\label{s:approach-ggt}

In 1982, Picard and Queyranne~\cite{Picard01111982} introduced an algorithm for finding a maximum-ratio closure in a graph. In our terminology, it consists of starting with an initial solution consisting of all transactions, and then repeatedly improving it by removing transactions from it, in such a way that the solution remains a closure, and the fee rate increases. With $n$ transactions, this will find the maximum fee rate closure in at most $n-1$ steps. We can use this approach to compute a complete optimal linearization, by finding
all consecutive maximum-ratio closures, of which there may be $n$. Together, this approach can find an optimal linearization using $\mathcal{O}(n^2)$ improvement steps.

Each improvement step in the paper's algorithm takes an existing closure as input, and tries to find in it a subset with higher fee rate that is also a closure. Specifically, if $\lambda$ is the fee rate of the existing closure, it defines $c_i = f_i - \lambda s_i$ for every transaction $i$ in it, i.e. its fee above the closure's average fee rate. It then finds the closure $S$ which maximizes $\sum_{i \in S} c_i$. If a subset closure exists with higher fee rate than $\lambda$, the solution will be one of them. If not, the solution will be the empty set or the entire input closure.

To achieve this, improvement steps work by finding a minimum cut of a flow network with $n+2$ vertices: one for each transaction, plus two special vertices $s$ (the source) and $t$ (the sink). The network is given the following edges: \begin{itemize}
\item For every dependency between two transactions, an edge with infinite capacity from the child to the parent.
\item For every transaction $i$, an edge with capacity $c_i$ from the source to the transaction if $c_i>0$, or an edge with capacity $-c_i$ from the transaction to the sink if $c_i<0$.
\end{itemize}
The minimum cut of this network will be a partition of the nodes into a source side $U$ (containing $s$) and a sink side $V$ (containing $t$), for which the sum of capacities of edges from $U$ to $V$ is minimal. The transaction nodes on the source side are our maximum fee rate closure.
It must clearly be a closure, because if it is not, the cut value would be infinite. It must further maximize $\sum_{i \in U} c_i$, because if a subset of transactions with positive total $c_i$ could be moved to $U$, or a subset with negative total $c_i$ could be moved to the $V$, while leaving $U$ a closure, it would decrease the cut value by that amount.

Many algorithms exist for solving this minimum-cut problem, and any of them can be used here. A very practical choice is the push-relabel algorithm~\cite{GT88Preflow} with maximum-label heuristic~\cite{cheriyan1988preflow}, which has complexity $\mathcal{O}(V^2 \sqrt{E})$, for $V$ vertices and $E$ edges. Instantiated for our problem this gives complexity $\mathcal{O}(n^4 \sqrt{m})$, for $n$ transactions and $m$ dependencies, to find the optimal linearization.

In 1989, Gallo, Grigoriadis, and Tarjan~\cite{gallo1989fast} introduced algorithms for finding minimum cuts in {\em parametrized} flow networks, allowing the computation of multiple minimum cuts with the same complexity as a single one, as long as all networks are the same, except some subgraphs may be contracted into the source and/or sink, and the capacities may vary monotonically in a single parameter. This is exactly the case in our setting, where earlier chunks are contracted into the source, and the parameter $\lambda$ is the fee rate of the best known next closure so far. This is achieved by reusing the push-relabel data structure across multiple instances of the minimum-cut problem, in such a way that the complexity for up to $n$ such instances is the same as that for a single instance. This reduces the complexity for a full linearization to $\mathcal{O}(n^3 \sqrt{m})$.

The all-breakpoints algorithm in the same paper is a further improvement over this. Instead of computing the maximum fee rate closures one by one in order, it uses a divide-and-conquer approach, where every computed minimum cut splits the graph in two, and both sides are kept and then divided further with more cuts. It runs every minimum cut twice, in parallel, one forward and one in reverse, and stops both when the first one finishes. A clever scheme then reuses the final state for one of the two resulting subgraphs, in such a way that the complexity of the overall algorithm is the same as that of a single minimum cut, $\mathcal{O}(n^2 \sqrt{m})$. As the number of dependencies is bounded by $n^2/4$, this means $\mathcal{O}(n^3)$ when expressed just in $n$. This is referred to in our benchmarks as the bidirectional variant. A simpler variant, where a random direction is chosen for every minimum cut, has worst-case complexity $\mathcal{O}(n^3 \sqrt{n})$, but has better constant factors, and we believe it is also $\mathcal{O}(n^2 \sqrt{n})$ when averaged over random seeds.

Note that the paper gives a somewhat better complexity $\mathcal{O}(nm \log(n^2/m))$ for a minimum cut, by relying on a dynamic trees-based algorithm for minimum cuts instead of the max-label heuristic. In practice that algorithm is complicated and has bad constant factors, so it is rarely used. However, an instantiation with max-label is very practical, and an obvious choice for comparison with SFL, as it has great worst-case complexity. 

\end{document}